\theoremstyle{plain}
\newtheorem{theorem}{Theorem}
\newtheorem{corollary}[theorem]{Corollary}
\newtheorem{lemma}[theorem]{Lemma}
\theoremstyle{plain}
\newtheorem*{definition*}{Definition}
\theoremstyle{definition}
\newtheorem{remark}{Remark}
\newcommand{\inlshort}{International Iberian Nanotechnology Laboratory (INL), Braga, Portugal}
\newcommand{\fcupshort}{Departamento de F\'{i}sica e Astronomia, Faculdade de Ci\^{e}ncias, Universidade do Porto, Porto, Portugal}
\newcommand{\uffshort}{Instituto de F\'{i}sica, Universidade Federal Fluminense, Niter\'{o}i, Rio de Janeiro, Brazil}
\newcommand{\ugrshort}{Quantum Thermodynamics and Computation Group, Electromagnetism and Matter Physics Department, University of Granada, Granada, Spain}
\begin{document}

\title{Reducing depth and measurement weights in Pauli-based computation}

\author{Filipa C. R. Peres}
\email[Corresponding author: ]{fcrperes@onsager.ugr.es}
\affiliation{\inlshort}
\affiliation{\fcupshort}
\affiliation{\ugrshort}

\author{Ernesto F. Galvão}
\affiliation{\inlshort}
\affiliation{\uffshort}

\date{\today}

\begin{abstract}
Pauli-based computation (PBC) is a universal measurement-based quantum computation model steered by an adaptive sequence of independent and compatible Pauli measurements on separable magic-state qubits. Here, we propose several new techniques for reducing the weight of the Pauli measurements and their associated \textsc{cnot} complexity; we also demonstrate how to decrease this model's computational depth. We start by proving new upper bounds on the required weights and computational depth, obtained via a pre-compilation step. We also propose a heuristic algorithm that can contribute reductions of over 30\% to the average weight of Pauli measurements (and associated \textsc{cnot} count) when simulating and compiling Clifford-dominated random quantum circuits with up to 22 $T$ gates and over 20\% for instances with larger $T$ counts. This PBC-compilation scheme, boosted by the heuristic algorithm, outperforms state-of-the-art compilers for the former circuits, reducing the \textsc{cnot} count by 18\% to 96\% compared with the values achieved by other techniques. In contrast, for the latter circuits with larger $T$ counts, it leads to a number of \textsc{cnot}s roughly 30\% larger. Finally, inspired by known state-transfer methods, we introduce incPBC, a universal model for quantum computation requiring a larger number of (now incompatible) Pauli measurements of weight at most 2.
\end{abstract}

\maketitle

\section{Introduction}\label{sec: Introduction}

Quantum computing stands at the forefront of technological innovation, promising unprecedented computational power and transformative potential. However, current hardware remains somewhat limited in its capabilities. Thus, the minimization of quantum resources promises to play a relevant role in current and near-term implementations. This has prompted extensive research into multiple quantum circuit optimization techniques such as pattern matching~\cite{Maslov2008templates, ItenMMSW2022pm}, ZX-calculus~\cite{Duncan2020simplicationzx, KissWet2020reduceT}, and phase polynomials~\cite{AmyMM2014, Heyfron2018, Amy2019}, leading to important reductions in the number of operations and/or circuit depth.

Quantum circuits stand as the dominant framework underlying these works. In the quantum circuit model~\cite{NielsenChuang, Yao1993}, the computation requires three essential steps: (i) the preparation of an input state (typically, the $\ket{0}^{\otimes n}$ state), (ii) its coherent unitary evolution via the sequential application of quantum gates drawn from a universal set, and (iii) the measurement of the final state in the computational basis. The fact that this is an intuitive framework with a classical analog has likely contributed to the prevalence of quantum circuits in the field of quantum computing. Nevertheless, measurement-based models have emerged as intriguing alternatives offering unique insights and potential benefits.

In this paper, we explore Pauli-based computation (PBC)~\cite{BSS2016}, a universal model for quantum computation driven by an adaptive sequence of at most $n$ independent and pairwise commuting (non-destructive) multi-qubit Pauli measurements performed on a set of $n$ qubits initialized in a separable magic state. 
PBC remains a relatively unexplored model compared with other measurement-based schemes. Here, we aim to gain a better understanding of this framework and improve its feasibility. Since stabilizer codes naturally enable a simple fault-tolerant implementation of non-destructive Pauli measurements, PBC constitutes an interesting primitive for fault-tolerant quantum computing. Recent work~\cite{PeresGa2023} also demonstrated that PBC can be useful for compiling some families of quantum circuits dominated by Clifford gates; more specifically, the overall gate counts and depth of many of these quantum circuit instances can be significantly reduced by transforming each of them into a PBC and then translating it into adaptive Clifford circuits with magic-state input. The latter are called PBC-compiled circuits~\cite{PeresGa2023}. Importantly, the overall depth and gate counts of the final circuits are intimately connected to the \emph{weight} of the Pauli measurements in the corresponding PBC. For this reason, this is \emph{the} measure to optimize within the PBC framework to enable not only a more practical (native) implementation of the model itself but also to enhance its efficacy as a circuit compilation tool. While our work is rooted in the goal of acquiring a deeper insight into PBC and improving its native efficiency, this intimate connection to circuit compilation allows direct conclusions about its performance as a compilation tool.

\subsection*{Structure of this work and summary of the main contributions}

\begin{figure*}[t]
    \centering
    \includegraphics[width=0.95\linewidth]{figs/Workflow_impr.png}
    \caption{The usual workflow to perform a Pauli-based computation (PBC). Starting from a Clifford+$T$ quantum circuit with $n$ qubits, $t$ $T$ gates, and $w$ computational-basis readout measurements, step \textcircled{\raisebox{-0.9pt}{1}} consists of transforming the circuit using magic-state injection. The result of such a transformation is an adaptive Clifford circuit on $n+t$ qubits and with $t+w$ measurements. Such a circuit can be easily transformed into a PBC: The classical computer efficiently finds the next Pauli measurement, decides its outcome (classically) when possible, and queries the quantum computer when necessary. The work in Sec.~\ref{sec: Improvements I - connection to 1WC} targets step~\textcircled{\raisebox{-0.9pt}{2}}: A classically efficient pre-compilation step is added before making the PBC translation. This is leveraged to establish new upper bounds for the weight of the Pauli measurements and the computational depth of PBC. The work in Sec.~\ref{sec: Greedy algorithm} addresses step~\textcircled{\raisebox{-0.95pt}{3}}: A new heuristic (classical) algorithm replaces $P_j$ with an equivalent measurement $P_j^{\prime}$ with lower weight, before the quantum hardware is queried.}
    \label{fig: Workflow}
\end{figure*}

This paper is organized as follows: Section~\ref{sec: Background} provides the necessary background for the understanding of this work. We start by recalling the Pauli and Clifford groups and presenting different notions of universality (Secs.~\ref{subsec: Pauli and Clifford groups} and \ref{subsec: Notions of universality}). In Sec.~\ref{subsec: MSI}, we provide a brief description of the magic-state injection model. Section~\ref{subsec: MBQC} presents a detailed review of measurement-based quantum computation (MBQC) focused on one-way quantum computing (1WQC), teleportation-based and state-transfer-based computation, and PBC.

Sections~\ref{sec: Improvements I - connection to 1WC} and~\ref{sec: Greedy algorithm} contain our main results. A pictorial description of these contributions can be seen in Fig.~\ref{fig: Workflow}. In Sec.~\ref{sec: Improvements I - connection to 1WC}, by introducing a pre-compilation step before transforming any adaptive Clifford circuit into a PBC, we achieve significant improvements in the overall complexity of the Pauli measurements. Namely, we derive new upper bounds for the weights of the Pauli operators and demonstrate that the adaptive structure of any PBC allows some of the measurements to be carried out concomitantly, establishing novel bounds for the computational depth of this model. Importantly, the best weight and depth upper bounds found do \emph{not} hold simultaneously. To address this, we show that a weight-depth compromise is possible. We also perform numerical experiments that provide evidence that, other than establishing previously missing upper bounds, the pre-compilation leads to practical benefits: It helps reduce the average weight of the Pauli measurements.

Besides these contributions, in Sec.~\ref{sec: Greedy algorithm}, we propose a novel greedy algorithm capable of further reducing the weight of the Pauli measurements. Numerical results suggest that this heuristic algorithm achieves important reductions that, as explained above, have direct consequences on the practicality of PBC, but also on the depth and gate counts of the PBC-compiled circuits. The algorithm searches over alternative Pauli measurements that project the system onto the same desired eigenspace, selecting the one with the lowest weight. The algorithm's performance depends on the size of the search space. For hidden-shift circuits, reductions of more than 10\% and 12\% to the average weight of the Pauli measurements can be achieved by searching through a number of alternative measurements, respectively, linear and quadratic in the number of qubits of the PBC. These reductions are reported with respect to the average weight of the Pauli measurements obtained without the greedy algorithm. On the other hand, for random quantum circuits, improvements of over 15\% and 20\% to the average weight are consistently achieved by the linear and quadratic orders of the algorithm for the largest instances tested. For smaller random quantum circuits, these improvements exceed 20\% and 30\%, respectively, with respect to the average weight obtained in the absence of the greedy algorithm.

The close connection between the average weight of the Pauli measurements in PBC and the \textsc{cnot} count of the associated PBC-compiled circuits allows a straightforward comparison between our PBC compilation scheme, aided by the greedy algorithm, and other state-of-the-art circuit compilation techniques. In this case, also in Sec.~\ref{sec: Greedy algorithm}, and focusing on the case where the greedy algorithm runs through a quadratic number of alternative measurements, we report that our PBC-compiler is consistently better than other compilers for hidden-shift circuits across different parameter regimes (except for the smaller circuits with only 10 qubits and 14 $T$ gates, where our technique performs worse). More specifically, the reductions in the total number of two-qubit gates range between 9\% and 69\% compared with the corresponding number of such gates obtained with other compilers used for the comparison. In the case of random quantum circuits with 25 qubits and $T$ counts up to 22, our technique provides improvements between 18\% and 96\% to the total number of \textsc{cnot} gates compared with the results of other compilers, while for larger circuits with larger $T$ counts, PBC-compilation underperforms compared with those same compilers. 

In Section~\ref{sec: incPBC}, we wonder whether it is possible to perform universal quantum computation by carrying out constant-weight Pauli measurements on a separable input state. In answer to this question, we introduce a new model for quantum computation that uses only Pauli measurements of weight 1 or 2 at the expense of relaxing two of the defining properties of PBC -- namely, that the number of measurements is at most the number of qubits and that these measurements are all pairwise commuting. We call this new model incompatible, constant-weight Pauli-based computation (incPBC) to distinguish it from the original PBC formulation. We demonstrate the model's universality and investigate the quantum resources needed to perform a given computation within this framework compared with 1WQC and standard PBC.

In Section~\ref{sec: Conclusions}, we comment on important aspects of our main contributions and outline interesting new lines of research.\\

\section{Background}\label{sec: Background}

This work rests on many different concepts, ranging from the stabilizer formalism and the Pauli and Clifford groups to the magic-state injection model and several different measurement-based models.  In this section, we make no pretension of giving a comprehensive description of all of these topics. Instead, we try to strike a balance between being comprehensive enough for an unfamiliar reader to gain a sufficient understanding of all the concepts while avoiding the presentation of excessive details. With that in mind, we point the interested reader toward alternative (more in-depth) references where appropriate.

\subsection{Pauli and Clifford groups}\label{subsec: Pauli and Clifford groups}

Quantum circuits are the most widespread model for describing quantum computations. They work by first preparing an input state which, without loss of generality, can be the $\ket{0}^{\otimes n}$ state. Then, a set of coherent unitary operations, known as quantum gates, are applied in an appropriate sequence, $U$, to the initial state, preparing the state $\ket{\psi_{f}} = U \ket{0}^{\otimes n}$. The final readout is done via computational basis measurements on $\ket{\psi_f}$.

The Pauli and Clifford unitaries are operations that have a pivotal role in quantum computing in general and in our work in particular. An $n$-qubit Pauli operator is constructed by the $n$-fold tensor product of single-qubit Pauli operators ($I$, $X$, $Y$, and $Z$) multiplied by one of four possible phases $\{\pm 1, \pm i\}.$ These operators form a group known as the $n$-qubit Pauli group, or simply, the Pauli group, denoted $\mathcal{P}_n$. We say that an $n$-qubit Pauli operator $P$ has weight $w\leq n$ if it involves $w$ non-identity single-qubit Pauli operators. For instance, the Pauli operator $P = X\otimes I\otimes Y\otimes Z$ is a Pauli operator of weight three. To simplify notation, we often omit the tensor product and identity from the description of multi-qubit Pauli operators and associate with each non-trivial single-qubit Pauli a subscript indicating the qubit it acts on. Using this convention, the operator in the previous example is written simply as $P = X_1Y_3Z_4.$

An $n$-qubit quantum state, $\ket{\phi}$, is said to be a stabilizer state if it is the simultaneous eigenvector of $n$ independent and pairwise commuting Pauli operators with eigenvalue +1: $G_i\ket{\phi} = \ket{\phi},\,\forall i\in \{1,\dots,n\}.$ The operators $G_i$ generate an abelian group called the stabilizer group: $\mathcal{S} = \langle G_1,\dots,G_n \rangle$ which has $\left| \mathcal{S} \right| = 2^n$ elements. Importantly, any stabilizer state can be uniquely defined by the set of generators, $\{G_i\}_{i=1}^n$, of its stabilizer group. Readers who are unfamiliar with the stabilizer formalism are referred to Refs.~\cite[Section~10.5.]{NielsenChuang} and~\cite{Gottesman1998}.

The Clifford group on $n$ qubits, $\mathcal{C}_n$, is the normalizer of the Pauli group, $\mathcal{C}_n \coloneqq \{C\in \mathrm{U}({2^n}): C \mathcal{P}_n C^{\dagger} = \mathcal{P}_n\}$, and is generated by the Hadamard ($H$), phase ($S$), and controlled-\textsc{not} ($CX$) gates~\cite{PhDGottesman, CalderbankRSS1997qec}. The beautiful Gottesman-Knill theorem~\cite{PhDGottesman} states that any quantum circuit with only stabilizer state inputs, Clifford gates, and Pauli measurements is efficiently classically simulable. These circuits are known as stabilizer circuits. Since we expect quantum computers to be strictly more powerful than their classical counterparts, this result indicates that, although stabilizer circuits can be highly entangling, they are \emph{not} universal for quantum computation. 

\subsection{Universality in quantum computation}\label{subsec: Notions of universality}

Since we are interested in the ability to perform \emph{any} quantum computation, it is important to discuss which operations enable universality. However, even before that, the notion of universality itself needs to be clarified.

In Ref.~\cite{NestDMB2007}, the authors define a CQ-universal quantum computer as a device that, taking as input a classical bit string (say $0^n$), is capable of preparing any quantum state $\ket{\psi_f}$. Put differently, given any unitary $U$, a CQ-universal quantum computer can prepare the state $\ket{\psi_f} = U \ket{0}^{\otimes n}$. This corresponds to the strong notion of universality called \textit{strict} universality in Ref.~\cite{Aharonov2003univ}. In the context of quantum computation with circuits, the Clifford unitaries supplemented by any non-stabilizer gate, i.e., any gate outside of the Clifford group, constitute an example of a \emph{strictly universal} gate set~\cite{Boykin1999clifford+t}. In this sense, alongside entanglement, non-stabilizerness (colloquially known as magic) is regarded as a necessary resource for enhanced computational power. A common choice for the additional gate is the $T$ gate, $T \coloneqq \mathrm{diag}(1,e^{i\pi/4}).$ Another important gate set known to be strictly universal is the set of all single-qubit rotations together with the \textsc{cnot} gate~\cite{Barenco+1995}.

A weaker notion of universality is CC-universality~\cite{NestDMB2007} (or \textit{computational} universality~\cite{Aharonov2003univ}). Given any unitary operation $U$, a device is said to be CC-universal if it can reproduce the statistics of computational basis measurements in any state $U\ket{0}^{\otimes n}$. In other words, a CC-universal quantum computer can reproduce the output probability distribution of any quantum circuit. Clearly, strict universality implies computational universality, i.e., a strictly universal gate set is also computationally universal, but the reverse is not necessarily true. An example of a computationally universal gate set that is not strictly universal is the set generated by the Toffoli and Hadamard gates~\cite{Aharonov2003univ,shi2002toffoli}.

It is also important to comment on the difference between \emph{exact} and \emph{approximate} universality. Discrete gate sets (such as the Clifford+$T$ and the Toffoli+$H$ gate set) can only approximate arbitrary unitaries or output distributions up to a desired precision. In this sense, they are only approximately (strictly or computationally) universal. In contrast, the set comprised of the \textsc{cnot} together with all single-qubit rotations is exactly (strictly) universal.

For other models of quantum computation, different sets of operations can be similarly used to enable strict or computational, and exact or approximate, universality. This work explores the transformation of Clifford+$T$ quantum circuits into different measurement-based schemes. Hence, our use of the term \emph{universality} always implies \emph{approximate universality}.

\subsection{Magic-state injection}\label{subsec: MSI}

As we have seen, Clifford+$T$ quantum circuits are strictly universal for quantum computation. Here, we briefly describe another universal model of quantum computation known as the magic-state injection model~\cite{BravKit2005}.

In the context of fault-tolerant computation, error-correcting codes play a crucial role in protecting quantum information against errors. Stabilizer codes constitute a particular class of error-correcting codes wherein the encoding of information can always be accomplished using only Clifford gates. In commonly used codes, like the Steane code or the surface code, Clifford gates can be easily implemented; specifically, to apply a Clifford gate $C$ to an encoded qubit, one needs only apply that gate to all corresponding physical qubits. Gates that allow such a simple implementation are said to be transversal. Unfortunately, it is not possible for a universal gate set to have all its gates implemented transversally~\cite {EastinKnill2009}. As such, it comes as no surprise that in the aforementioned codes, the $T$ gate is non-transversal.
\begin{figure}[t]
    \centering
    \begin{tikzpicture}
      \node[scale=.9] {
        \tikzset{
            my label/.append style={above right,xshift=0.13cm,yshift=0.cm}
        }
        \begin{quantikz}[thin lines]
           \lstick{\ket{\psi}} & \qw & \ctrl{1} & \gate{S^m} & \qw \rstick{$T$ \ket{\psi}}\\
           \lstick{\ket{T}} & \qw & \targ{}  & \meter{$m$}\vcw{-1}
        \end{quantikz}      
      };
    \end{tikzpicture}
    \caption{Fault-tolerant implementation of the $T$ gate via the well-known $T$-gadget, using only stabilizer operations and classical feedforward.}
    \label{fig:T_gadget}
\end{figure}
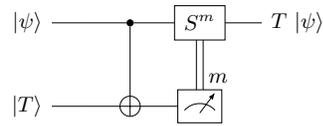

To circumvent this difficulty in fault-tolerantly implementing the $T$ gate, Bravyi and Kitaev~\cite{BravKit2005} proposed a way of producing low-noise copies of the magic state $\ket{T} \coloneqq (\ket{0} + e^{i\pi/4} \ket{1})/\sqrt{2}$ from several noisier copies. Importantly, this magic-state distillation procedure uses only (fault-tolerant) Clifford operations. Once we have access to these low-noise $\ket{T}$ states, we can implement any $T$ gate by the so-called $T$ gadget (Fig.~\ref{fig:T_gadget}), which uses only stabilizer operations and classical feedforward of measurement outcomes.

Using the $T$ gadget, any universally general $n$-qubit Clifford$+T$ quantum circuit with $t$ $T$ gates can be transformed into an $(n+t)$-qubit adaptive Clifford circuit~[Fig.~\ref{fig: Workflow}, step \textcircled{\raisebox{-0.9pt}{1}}], that can be fault-tolerantly implemented using a suitable error-correcting code. The price to pay for this is the need for the (offline) preparation of $t$ copies of the $\ket{T}$ state, as well as feedforward and adaptivity.

\subsection{Measurement-based quantum computation}\label{subsec: MBQC}

Measurement-based quantum computation (MBQC) comes in an impressive variety of flavors. Here, we try to center the discussion around the models that are more directly related to our work.

\subsubsection{One-way quantum computing}\label{subsubsec: 1WC}

In 2001, Raussendorf and Briegel proposed the one-way quantum computer~\cite{RaussBrie2001} where the computation can be separated into two stages. First, the offline preparation of an entangled resource state known as the \emph{cluster} state. Second, the processing stage, wherein the qubits from the cluster state are measured in suitable bases to implement the desired computation. We now focus in a bit more detail on each of these two stages.

Let $\mathcal{G}=\left(V,E\right)$ be an undirected simple graph where $V$ are its vertices and $E$ its edges. For any such graph, a corresponding graph state $\left|\mathcal{G}\right>$ can be constructed as follows. First, a single qubit in state $\left|+\right> \coloneqq (\left|0\right> +\left|1\right> )/\sqrt{2}$ is assigned to each vertex $i\in V$ of $\mathcal{G}$; then, for each edge $e\in E$ connecting two vertices $i$ and $j$, the corresponding qubits are entangled using the (Clifford) gate $CZ \coloneqq\mathrm{diag}(1,1,1,-1)$. 

Graph states are stabilizer states and therefore admit a simple representation using the stabilizer formalism. That is, an $n$-qubit graph state can be uniquely described by writing down the set of $n$ generators of its stabilizer. These can be chosen to have the following form~\cite{HeinEB2004graphstates}: For each qubit $i$, a generator $G_i$ is given by the tensor product of a Pauli operator $X$ acting on the considered qubit and $Z$ operators applied to each of its neighbors:
\begin{equation}\label{eq: generators of a graph state}
    G_i = X_{i}\prod_{j\in{\cal N}(i)}Z_{j}\,,
\end{equation}
where $\mathcal{N}(i)$ denotes the set of qubits neighboring (i.e., connected to) qubit $i$.

The cluster state is no more than a particular kind of graph state where the underlying graph consists of a two-dimensional square grid. After preparing this state, following the prescription described above, the processing step can be carried out according to whichever algorithm we want to implement.

Raussendorf and Briegel showed that single-qubit rotations and the \textsc{cnot} gate can be deterministically implemented in the cluster state by performing suitable sequences of single-qubit measurements in the equator of the Bloch sphere, together with computational basis measurements~\cite{RaussBrie2001}. Since these gates constitute a strictly universal set, this endows the one-way computer with strictly universal capabilities.

Since the seminal work of Ref.~\cite{RaussBrie2001}, other combinations of resource states and sets of single-qubit measurements have been shown to be strictly universal~\cite{Nest2006, BroadbentFK2009blind, Mhalla2012XZuni, Mantri2017xycluster}. Crucially, it is not necessary to allow a continuous range of measurement bases and discrete sets suffice for (approximate) strict universality~\cite{BroadbentFK2009blind, Mantri2017xycluster}. The interested reader is pointed to Ref.~\cite[Table~1]{Takeuchi2019hypergraph} for a quick overview of the different resource states and measurement bases that can be used for universal quantum computation.

One of the greatest advantages of 1WQC is that all multi-qubit operations (i.e., the entangling $CZ$ gates) can be done offline before the processing stage. Since these unitaries are often more prone to errors than single-qubit ones, their isolation in the state preparation stage helps to mitigate the nefarious effects of those errors. Furthermore, if we have a device capable of preparing a universal resource state (of the proper size), we can perform any quantum computation by using the appropriate sequence of measurements in the processing stage. However, the production of a large quantum resource is not without its challenges. For a circuit with $n$ qubits and logical depth $d_L$, the corresponding cluster state needs to have size $O(nd_L)$, which is extremely demanding for near- and intermediate-scale quantum hardware.

One way of circumventing this problem is to explore a peculiar feature of 1WQC: Clifford unitaries can be implemented via non-adaptive measurement patterns consisting exclusively of Pauli measurements. Because of this, all Clifford operations can be performed at once at the very beginning of the computation, regardless of their placement in the corresponding quantum circuit. Thus, 1WQC provides an intrinsically quantum way of parallelizing quantum computations, cutting across the strict temporal ordering of the quantum circuit model. If we consider an $n$-qubit Clifford$+T$ quantum circuit with $t$ non-Clifford $T$ gates, carrying out all the Pauli measurements in the corresponding cluster state will leave us with an $(n+t)$-qubit computation-specific resource state that is equivalent to a graph state up to local Clifford transformations; $t$ of the qubits of this state are what we call the \emph{computational qubits} which need to be (adaptively) measured in rotated angles of $\pm \pi/4$ along the equator of the Bloch sphere, while the remaining $n$ qubits are the so-called \emph{output qubits} which will hold the final state output by the one-way computation~\cite{BroadbentFK2009blind, Mantri2017xycluster}.

Removing all Pauli measurements and determining the computation-specific input state can be performed efficiently on a classical computer~\cite{RaussBrBr2003}. Hence, rather than needing an $O(nd_L)$ (entirely general) cluster state, we require only an $(n+t)$-qubit computation-specific resource state. The problem with this approach is that, while we substantially save on the number of qubits needed, the quantum state to be prepared can have a significantly more intricate connectivity structure, which might be (more) challenging to prepare on actual quantum hardware.
\begin{remark}[Removing output qubits]\label{remark: CC universality is sufficient}
    Because our work focuses mostly on improving PBC and since this model of quantum computation is only computationally universal (see Sec.~\ref{subsubsec: PBC} and Remark~\ref{remark: PBC is CC universal} therein), we can content ourselves with having only a (weaker) CC-universal one-way quantum computer. That is, we are concerned only with simulating the output statistics of a certain quantum circuit. This means that further simplification to the computation-specific input state is possible. Since the output qubits are measured in Pauli bases, their measurements can be classically processed together with the measurements associated with the Clifford gates, leading to a computation-specific input state that has only $t$ qubits, which need to be measured along the $\pm \pi/4$ directions of the equator of the Bloch sphere. As before, this input state is local-Clifford equivalent to a graph state.
\end{remark}

Recently, some results have started to arise concerning the realization of 1WQC using hypergraph states~\cite{MillerMiyake2016hypergraph, Takeuchi2019hypergraph, takeuchi2023catalytic}. These are not stabilizer states; in fact, they possess both entanglement and non-stabilizerness. It has been demonstrated that hypergraph states together with single-qubit Pauli measurements are sufficient for strict universality~\cite{takeuchi2023catalytic}.

\subsubsection{Teleportation and state transfer}\label{subsubsec: teleportation}

While 1WQC is considered a measurement-based model because its processing stage consists solely of single-qubit measurements, it still requires the coherent preparation of the resource state. Alternatively, in 2003, Nielsen proposed a universal scheme that requires no coherent unitary dynamics~\cite{Nielsen2003}; instead, it relies only on the preparation of qubits in the $\ket{0}$ state and projective measurements of up to four qubits. Later, this approach to quantum computing was improved by Leung~\cite{Leung2004} in several ways. Notably, she removed the need for the recursive procedure to deal with the Pauli corrections while showing that a discrete set of at most two-qubit measurements suffices to guarantee universality.
\begin{figure}[t]
    \centering
    \begin{tikzpicture} \node[scale=1.0]{ \tikzset{operator/.append style={fill=gray!15, rounded corners}} \begin{quantikz}[thin lines] \lstick[wires=1]{\ket{\psi}} &\qw &\gate[wires=2]{Z \otimes Z} &\gate{X} &\qw \\ &\gate{X} &\qw &\qw\rstick{$P\ket{\psi}$} \end{quantikz} }; \end{tikzpicture}
    \caption{Implementation of state transfer via single- and two-qubit Pauli measurements, up to a Pauli correction, $P$, that depends on all three measurement outcomes. Grey boxes with rounded edges are used throughout to represent projective measurements; the outcome of each of these measurements is stored in memory and accessible for future use (if needed).}
    \label{fig: state transfer (basic)}
\end{figure}
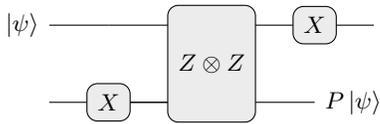
\begin{figure}[t]
    \centering
    \begin{tikzpicture} \node[scale=0.95]{ \tikzset{operator/.append style={fill=gray!15, rounded corners}} \begin{quantikz}[thin lines] \lstick[wires=2]{\ket{\psi}} &\qw &\gate[wires=3]{Z \otimes I \otimes X} &\qw &\qw\rstick[wires=2]{$P\, CX \ket{\psi}$} \\ &\qw &\qw &\gate[wires=2]{X\otimes Z} &\qw \\ &\gate{Z} &\qw &\qw &\gate{X} &\qw \end{quantikz} }; \end{tikzpicture}
    \caption{Implementation of the \textsc{cnot} gate using state transfer; the gate is applied to the (arbitrary) two-qubit input state $\ket{\psi}$ up to a two-qubit Pauli operator $P$ which depends on the outcomes of the measurements. Note that the second measurement involves only the first and last qubits, as it has an identity on the second qubit, that is, it consists of the measurement $Z_1X_3$ (having only weight 2).}
    \label{fig: State transfer CNOT}
\end{figure}
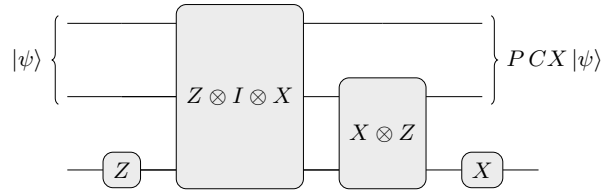
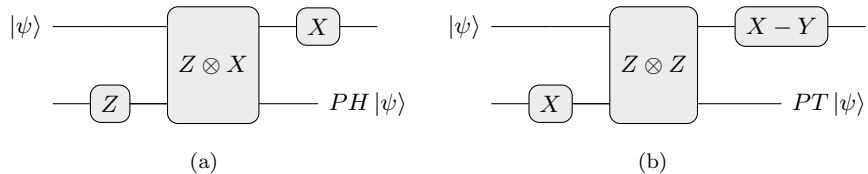
\begin{figure*}
    \centering
    \subfloat[\label{subfig: State transfer: H}]{\begin{tikzpicture} \node[scale=1.0]{ \tikzset{operator/.append style={fill=gray!15, rounded corners}} \begin{quantikz}[thin lines] \lstick{\ket{\psi}} &\qw &\gate[wires=2]{Z \otimes X} &\gate{X} &\qw \\ &\gate{Z} &\qw &\qw\rstick{$P H\ket{\psi}$} \end{quantikz} }; \end{tikzpicture}}\quad\quad\quad\quad  
    \subfloat[\label{subfig: State transfer: T}]{\begin{tikzpicture} \node[scale=1.0]{ \tikzset{operator/.append style={fill=gray!15, rounded corners}} \begin{quantikz}[thin lines] \lstick[wires=1]{\ket{\psi}} &\qw &\gate[wires=2]{Z \otimes Z} &\gate{X-Y} &\qw \\ &\gate{X} &\qw &\qw\rstick{$P T \ket{\psi}$} \end{quantikz} }; \end{tikzpicture}}
    \caption{Implementation of (a) the Hadamard gate and (b) the $T$ gate using state transfer. Note that the unitary transformations are applied to the (arbitrary) input state $\ket{\psi}$ up to a single-qubit Pauli operator $P$ which depends on the measurement outcomes.}
    \label{fig: State transfer: H and T}
\end{figure*}

These two schemes share as an underlying primitive the use of teleportation to implement universal gate sets~\cite{GottChuang1999}. By using state transfer (Fig.~\ref{fig: state transfer (basic)}), Perdrix managed to propose an even simpler scheme, where a single two-qubit measurement is sufficient to realize a universal gate set and, thus, universal quantum computation~\cite{Perdrix2005}.

Here, we succinctly describe the scheme by Perdrix, as this will be directly useful to the work presented in Sec.~\ref{sec: incPBC}. Since the gate set generated by Hadamard, $T$ gate, and \textsc{cnot} is strictly universal for quantum computing, realizing each of these gates via state transfer is sufficient to prove the universality of the model. Figures~\ref{fig: State transfer CNOT} and~\ref{fig: State transfer: H and T} depict how one can apply these unitary transformations (up to a Pauli operator) to an arbitrary quantum state $\ket{\psi}$. Because the implementation is non-deterministic, in that the unitary is implemented up to a correction that depends on the outcomes of the measurements, we need to understand how these corrections can be dealt with. Note that standard state transfer, as depicted in Fig.~\ref{fig: state transfer (basic)}, can be seen as an attempt to implement an identity transformation which will be done, in this scheme, up to a Pauli operation. This is the way to handle the Pauli corrections: Whenever a Pauli operator $P$ arises from implementing a certain unitary, we perform regular state transfer until the measurement outcomes combine in such a way that any extra Pauli factors are canceled out. This resembles the recursive procedure in Nielsen's scheme~\cite{Nielsen2003}. On average, four iterations suffice to correct for one single-qubit Pauli operator $P$; thus, on average, this recursive procedure incurs an overhead that is linear in the total number of gates in the quantum circuit.

Finally, we remark that although at first 1WQC and these teleportation and state-transfer schemes were considered separately, several independent works~\cite{Leung2004unify, Perdrix2005unify, ChildsLeung2005unify} established important connections between them, thus succeeding in presenting a unified view of MBQC.

\subsubsection{Pauli-based computation}\label{subsubsec: PBC}

Compared with other measurement-based models, PBC had a fairly late appearance, being proposed in 2016 in the seminal paper by Bravyi, Smith, and Smolin~\cite{BSS2016}. Possibly owing to this, but also due to practical implementation challenges, PBC remains a relatively understudied model of quantum computation. Here, we provide a brief review of this scheme. For more detailed discussions, the interested reader is referred to Refs.~\cite{BSS2016, YogaJS2019, PeresGa2023}; additionally, Ref.~\cite[Sec. IIA]{Peres2023} provides a state-of-the-art review of recent works exploring PBC in different contexts. 

In a PBC, a separable non-stabilizer input state is prepared offline; the computation is then driven by an adaptive sequence of independent and pairwise commuting multi-qubit Pauli measurements performed on the qubits of the input state. In Ref.~\cite{BSS2016}, the authors showed that any (universally general) Clifford+$T$ quantum circuit with $n$ qubits and $t$ $T$ gates can be simulated by a PBC on $t$ qubits and at most $t$ $t$-qubit Pauli measurements. The ability to simulate any quantum circuit makes PBC a universal model for quantum computation.

The proof of universality goes as follows. First, take the (non-adaptive) $n$-qubit Clifford+$T$ circuit with $t$ $T$ gates and transform it into the magic-state injection model (Sec.~\ref{subsec: MSI}); this means we are left with an adaptive Clifford circuit with input $\ket{0}^{\otimes n}\ket{T}^{\otimes t}$. Note that the stabilizer register is stabilized by $\mathcal{L}= \{ Z_1, Z_2,\dots, Z_n \}$. Since all operations are now Clifford unitaries, we can efficiently back-propagate every measurement to the beginning of the circuit~\cite{Gottesman1998}. Due to the adaptive nature of the circuit, the measurements need to be dealt with in the appropriate order. The intermediate measurements stemming from the gadgets need to be handled first (starting with the one from the first gadget and working our way successively until the last) and only afterward can the final readout measurements be pushed to the beginning of the circuit.

\begin{figure*}[t]
    \centering
    \begin{tikzpicture} \node[scale=0.9]{ \begin{quantikz}[thin lines] \lstick[wires=4]{\ket{\Psi_{\mathrm{in}}}} &\gate[wires=4,style={fill=gray!15, rounded corners}]{X\otimes I\otimes Y\otimes Z} &\qw\rstick[wires=4]{\ket{\Psi_{\mathrm{out}}}} \\ &\qw &\qw \\ &\qw &\qw \\ &\qw &\qw \end{quantikz} $\rightarrow$ \begin{quantikz}[thin lines] \lstick[wires=4]{\ket{\Psi_{\mathrm{in}}}} &\qw\gategroup[wires=5,steps=6,style={rounded corners,fill=gray!15, inner xsep=2pt},background]{{$P = X\otimes I\otimes Y\otimes Z$}} &\targ{} &\qw &\qw &\qw &\qw&\qw\rstick[wires=4]{\ket{\Psi_{\mathrm{out}}}} \\ &\qw &\qw &\qw &\qw &\qw &\qw&\qw \\ &\gate{S^{\dagger}} &\qw &\targ{} &\qw &\gate{S} &\qw&\qw \\ &\gate{H} &\qw &\qw &\targ{} &\gate{H} &\qw&\qw \\ \lstick{$\ket{0}_{\mathrm{aux}}$} &\gate{H} &\ctrl{-4} &\ctrl{-2} &\ctrl{-1} &\gate{H} &\meter{$\sigma_P$} \end{quantikz} }; \end{tikzpicture}
    \caption{Illustration of a quantum circuit implementation of the Pauli measurement $P=X_1Y_3Z_4$. This simple example demonstrates the close connection between the weight of the Pauli operator to be measured and the number of \textsc{cnot} gates in the corresponding circuit implementation: The Pauli has weight 3, thus three \textsc{cnot} gates are needed to implement the measurement using this scheme.}
    \label{fig: Translation PBC -> circuits}
\end{figure*}

Once a Pauli measurement, $P$, arrives at the beginning of the circuit, it may fall into one of three categories:
\begin{itemize}
    \item[(i)] \textit{$P$ anti-commutes with (at least) one of the operators, $Q$, in $\mathcal{L}$.}---In this case, the outcome of the corresponding Pauli measurement, $\sigma_P\in \{0,1\}$, can be decided classically using a coin toss; $P$ is then dropped from the quantum circuit and replaced by the Clifford unitary:
    \begin{equation}\label{eq: V unitaries}
        V(\sigma_P,\sigma_Q) = \frac{(-1)^{\sigma_P} P +  (-1)^{\sigma_Q}Q}{\sqrt{2}},
    \end{equation}
    where $\sigma_Q$ denotes the outcome associated with the (prior) measurement of $Q.$ The outcome $\sigma_P$ is stored in a list containing all the outcomes. 
    \item[(ii)] \textit{$P$ commutes will all operators in $\mathcal{L}$ and depends on a subset of them.}---In this case, the Pauli measurement can again be dealt with classically, as its outcome, $\sigma_P$, can be efficiently inferred from the outcomes of the Pauli operators it depends on. The outcome obtained is stored in the list with all the other outcomes. 
    \item[(iii)] \textit{$P$ commutes with all operators in $\mathcal{L}$ and is independent of them.}---This is when $P$ needs to be measured in the quantum computer. Note that case~(i) ensures that any such $P$ must act trivially on the stabilizer register so that we can measure only its $t$-qubit non-stabilizer-register component. The Pauli operator $P$ is then added to $\mathcal{L}$ and its outcome is saved in the list with all the outcomes.
\end{itemize}

The procedure described above ensures that the original quantum circuit is simulated by an adaptive sequence of independent and compatible Pauli measurements performed only on the $t$-qubit non-stabilizer register, i.e., the circuit is simulated by a PBC. Since there are at most $t$ independent and pairwise commuting Pauli operators of $t$ qubits, the number of measurements will be at most $t$. Potentially, these operators could all be of weight $t$, although our numerical results in Ref.~\cite{PeresGa2023} indicate that often the weight is lower than this trivial upper bound. In Section~\ref{subsec: Improved weights}, we prove that better weights can be achieved by carrying out a pre-compilation step before proceeding to the PBC framework. More specifically, we take the input quantum circuit and transform it into a 1WQC procedure with an associated $t$-qubit computation-specific input state and adaptive measurement pattern, and use the latter as the starting point for performing PBC.
\begin{remark}[Universality of PBC]\label{remark: PBC is CC universal}
    The attentive reader will note that PBC simulates the original quantum circuit by producing samples drawn from the same output distribution. Therefore, Pauli-based quantum computers are devoid of strict universality and are instead computationally universal. This is in striking contrast to the other measurement-based models presented thus far, which can prepare the same output state as the corresponding quantum circuit. Put differently, PBC does not concern itself with state preparation, but rather with simulating the output statistics of the corresponding quantum circuit.
\end{remark}
\begin{figure*}[t]
    \centering
    \def\myvdots{\ \vdots\ } \def\myddots{\ \ddots\ } \begin{tikzpicture} \node[scale=0.8] { \tikzset{ my label/.append style={above right,xshift=0.35cm,yshift=-0.25cm} } \begin{quantikz}[thin lines] \lstick[wires=4]{$\ket{\mathcal{G}}$} \qw &\gate{S^{\dagger}} &\gate{T} &\gate{H} &\meter{$s_1$} &\cwbend{1} &\cw &\cw &\cw &\cw &\cwbend{1}\\ \qw &\qw &\qw &\qw &\qw &\gate{S^{s_1}} &\gate{S^{\dagger}} &\gate{T} &\gate{H} &\meter{$s_2$} &\cwbend{1}\\ \myvdots & & & & & & &\myddots & & &\myvdots \\ \qw &\qw &\qw &\qw &\qw &\qw &\qw &\qw &\qw &\qw &\gate{S^{f_t}} &\gate{S^{\dagger}} &\gate{T} &\gate{H} &\meter{$s_t$} \end{quantikz} }; \end{tikzpicture}
    \caption{Circuit corresponding to a given one-way quantum computation on a $t$-qubit graph state $\ket{\mathcal{G}}.$ The fact that the measurements can be restricted to the bases along the angles of $\pm \pi/4$ on the equator of the Bloch sphere means that we do not need arbitrary $Z$-rotation gates and that the $T$ gate is sufficient. The measurement outcomes $s_i$ in a given layer influence the measurement bases in subsequent layers. This dependence is encoded in the Boolean functions $f_j$ and depicted by the classical wires seen in the picture. The value of each $f_j$ depends on the set of outcomes $s_i$ (with $i<j$) that influence the measurement basis of the $j$th computational qubit and can be efficiently calculated in an assisting classical machine.}
    \label{fig: 1WQC-circuit}
\end{figure*}

In our work in Ref.~\cite{PeresGa2023}, we demonstrated that PBC can be regarded as a circuit compilation tool that allows us to trade (affordable) classical computation for (expensive) quantum resources. Specifically, we numerically demonstrated how, for certain families of quantum circuits dominated by Clifford gates, PBC can be leveraged to obtain quantum circuits that often require fewer gate counts and/or depth than the original ones. These results were achieved by providing various circuit implementations of Pauli measurements, which allow a translation from PBC back to quantum circuits. An example of one such scheme can be seen in Fig.~\ref{fig: Translation PBC -> circuits}. It clearly demonstrates the intimate connection between the weight of the Pauli measurements and the \textsc{cnot} complexity of the PBC-compiled circuits: one-to-one in this particular case. That is, the number of \textsc{cnot} gates exactly corresponds to the weight of the measured Pauli operator. Hence, any reduction of the weight of these operators with respect to our prior work in~\cite{PeresGa2023} translates directly into an improvement of the results presented therein. Ref.~\cite[Secs.~3.1.2 and 3.1.3]{PeresGa2023} presents alternative circuit translations. In all of them, there is a tight relation between the weight of the Pauli measurements and the total number of \textsc{cnot} gates in the final PBC-compiled quantum circuits.
\begin{figure*}[t]
    \centering
    \def\myvdots{\ \vdots\ } \def\myddots{\ \ddots\ } \begin{tikzpicture} \node[scale=0.8] { \tikzset{ my label/.append style={above right,xshift=0.35cm,yshift=-0.25cm} } \begin{quantikz}[thin lines] \lstick[wires=4]{$\ket{\mathcal{G}}$} \qw &\qw &\gate{S^{\dagger}} &\ctrl{4} &\gate{S^{m_1}} &\gate{H} &\meter{$s_1$} &\cwbend{1} &\cw &\cw &\cw &\cw &\cwbend{1} \\ \qw &\qw &\qw &\qw &\qw &\qw &\qw &\gate{(S^{\dagger})^{s_1\oplus 1}} &\ctrl{4} &\gate{S^{m_2}} &\gate{H} &\meter{$s_2$} &\cwbend{1}\\ \myvdots & & & & & & & & & & &\myddots &\myvdots \\ \qw &\qw &\qw &\qw &\qw &\qw &\qw &\qw &\qw &\qw &\qw &\qw &\gate{(S^{\dagger})^{f_t \oplus 1}} &\ctrl{4} &\gate{S^{m_t}} &\gate{H} &\meter{$s_t$}\\ \lstick[wires=4]{$\ket{T}^{\otimes t}$} \qw &\qw &\qw &\targ{} &\meter{$m_1$}\vcw{-4}\\ \qw &\qw &\qw &\qw &\qw &\qw &\qw &\qw &\targ{} &\meter{$m_2$}\vcw{-4}\\ \myvdots & & & & & & & & &\myvdots & &\myddots\\ \qw &\qw &\qw &\qw &\qw &\qw &\qw &\qw &\qw &\qw &\qw &\qw &\qw &\targ{} &\meter{$m_t$}\vcw{-4} \end{quantikz} }; \end{tikzpicture}
    \caption{Transformation of the circuit in Fig.~\ref{fig: 1WQC-circuit} into an adaptive Clifford circuit by replacing each $T$ gate with the $T$ gadget depicted in Fig.~\ref{fig:T_gadget}. Note that the outcomes of measurements of computational qubits are denoted by $s_i$, while for the outcomes of gadget measurements, $m_i$ is used.}
    \label{fig: 1WQC-circuit after T gadget}
\end{figure*}

\section{Pre-compilation as a way to improve Pauli-based computation }\label{sec: Improvements I - connection to 1WC}

In the previous section, we described PBC and noted that its computational steps are (at most) $t$ independent and pairwise commuting Pauli measurements that can potentially involve all of the $t$ qubits of the system. On the other hand, we have also seen that constant-weight projective measurements are sufficient for universal quantum computation~\cite{Nielsen2003, Leung2004, Perdrix2005}. The obvious follow-up questions are: Can we find alternative formulations of PBC with improved weights? Are constant weights sufficient for PBC? Here, we provide partial answers to these questions. Additionally, in line with the overarching goal of improving the feasibility of PBC, we supply important new results on how to reduce the depth of quantum computations carried out within this model.

\subsection{The pre-compilation step}\label{subsec: The pre-compilation step}

Throughout this entire section, we consider that we want to simulate a universal, non-adaptive quantum circuit $\mathcal{U}$ acting on $n$ qubits and with gates drawn from the Clifford+$T$ set. The circuit has logical depth $d_L,$ $t$ $T$ gates, and $w$ readout computational basis measurements. We have seen that, if we are concerned only with computational universality, $\mathcal{U}$ can be simulated by a PBC requiring the separable input state $\ket{T}^{\otimes t}$ and an adaptive sequence of at most $t$ $t$-qubit independent and compatible Pauli measurements.

However, $\mathcal{U}$ can also be simulated by a one-way computation involving a $t$-qubit computation-specific resource state, $\ket{\mathcal{R}}$, and adaptive single-qubit measurements along the $\pm \pi/4$ directions in the equator of the Bloch sphere. The state $\ket{\mathcal{R}}$ is local-Clifford equivalent to a graph state $\ket{\mathcal{G}}$ with stabilizers $\{G_i\}_{i=1}^t$ described by Eq.~\eqref{eq: generators of a graph state}. Our approach is to leverage this observation and use 1WQC as a stepping stone (or, put differently, as a pre-compilation step) before finding the corresponding PBC (recall Fig.~\ref{fig: Workflow}).

We consider the specific case where the input state for the one-way computation is a graph state $\ket{\mathcal{G}}$ (see Sec.~\ref{subsec: Generality} for comments on generalizability). The computation is then driven by a sequence of measurements that are broken into layers, with outcomes from one layer determining the measurement bases in subsequent layers. We can represent this procedure in the form of a circuit as depicted in Fig.~\ref{fig: 1WQC-circuit}. Importantly, we note that we are exploring the fact that a discrete set of measurement bases is sufficient for universality~\cite{BroadbentFK2009blind}, notably the measurement bases along the $\pm \pi/4$ directions on the equator of the Bloch sphere suffice. We are assuming that any Pauli measurements, including the readout measurements, have already been removed, originating the computation-specific graph state $\ket{\mathcal{G}}$ (recall Remark~\ref{remark: CC universality is sufficient}). Additionally, we note that the labeling of the qubits is such that no measurement $M_i$ depends on the outcome of measurement $M_j$ with $j>i$. This means there is a time-ordering to the measurement pattern so that: $M_i \prec M_j \Longrightarrow i < j$. This ordering is assumed throughout, including in Figs.~\ref{fig: 1WQC-circuit} and \ref{fig: 1WQC-circuit after T gadget}, where the potential classical influence of a measurement outcome on subsequent measurements is depicted by the classical wires.

To translate the one-way computation depicted in Fig.~\ref{fig: 1WQC-circuit} into a PBC, we transform each $T$ gate into a $T$ gadget, originating the adaptive Clifford circuit, $\mathcal{C}$, shown in Fig.~\ref{fig: 1WQC-circuit after T gadget}. This is the starting point to the proofs of all of our main results.

In the remainder of the paper, we use the following notation: Measurements associated with the computational qubits (i.e., the qubits of the graph state) are referred to as ``\emph{computational measurements}'' and denoted CM, while measurements performed on the qubits of the auxiliary register $\ket{T}^{\otimes t}$ are termed ``\emph{gadget measurements}'' and denoted GM. We choose to use different letters to denote the measurement outcomes of computational qubits, $\{s_i\}_{i=1}^t$, and those of the auxiliary qubits introduced by the $T$ gadget, $\{m_i\}_{i=1}^t$, (cf. Fig.~\ref{fig: 1WQC-circuit after T gadget}). Additionally, we also differentiate the Pauli operators that stem from these two types of measurements; Pauli operators resulting from measurements on computational qubits are denoted by $P_i$, while, for those originating from gadget measurements, we use $Q_i$. The generators of the stabilizer of the graph state, $\mathcal{S}$, are represented by $G_i$ and constructed as prescribed in Eq.~\eqref{eq: generators of a graph state}.

\subsection{Overview of the results}\label{subsec: Overview of results and techniques}
\begin{table*}
    \begin{tabular}{>{\centering\arraybackslash}p{4.0cm} >{\centering\arraybackslash}p{9.5cm} >{\centering\arraybackslash}p{2.8cm}}
        \hline\hline
        Improvement & Processing order & Formal statement \\
        \hline
        Better weights & $\mathcal{O}_1:\,\, \mathrm{GM}_1 \prec \mathrm{CM}_1 \prec \mathrm{GM}_2 \prec \mathrm{CM}_2 \prec\dots\prec \mathrm{GM}_t \prec \mathrm{CM}_t$ & Theorem~\ref{theorem: improved weights} \\
        Better depth & $\mathcal{O}_2:\,\, \mathrm{GMs} \prec \mathrm{CM}_{\ell_1} \prec \mathrm{CM}_{\ell_2} \prec \dots\prec \mathrm{CM}_{\ell_{d_{\mathrm{1W}}}}$ & Thereom~\ref{theorem: improved depth} \\
        Weight-depth trade-off & $\mathcal{O}_3:\,\, \mathrm{GM}_{\ell_1} \prec \mathrm{CM}_{\ell_1} \prec \mathrm{GM}_{\ell_2} \prec \mathrm{CM}_{\ell_2} \prec\dots\prec \mathrm{GM}_{\ell_{d_{\mathrm{1W}}}} \prec \mathrm{CM}_{\ell_{d_{\mathrm{1W}}}}$ & Theorem~\ref{theorem: improved depth and weights} \\
        \hline\hline
    \end{tabular}
    \caption{Main theoretical results of this section obtained by choosing to back-propagate the Pauli measurements in different ways. As explained in the text, $\mathrm{GM}_i$ ($\mathrm{CM}_i$) is used to denote the gadget (computational) measurement performed on the $i$th auxiliary (computational) qubit, while $\mathrm{CM}_{\ell_i}$ denotes the set of all measurements of computational qubits belonging to layer $\ell_i$ and $\mathrm{GM}_{\ell_i}$ the set of corresponding gadget measurements. The notation $A\prec B$ is used to indicate that $A$ precedes $B$.}
    \label{tab: Summary of main results}
\end{table*}

Before delving into the detailed and formal statement of our results, we would like to provide a high-level picture and punctuate a key observation underlying our three main theorems. If the starting point to the PBC procedure is the quantum circuit $\mathcal{U}$, as described in Sec.~\ref{subsubsec: PBC}, one processes a total of $(t+w)$ measurements by first back-propagating the $t$ gadget measurements and only afterward the $w$ final measurements. On the other hand, by starting from the corresponding 1WQC procedure, we need to deal with a total of $2t$ Pauli operators, $t$ gadget measurements on auxiliary qubits, and $t$ computational measurements performed on the corresponding computational qubits. Importantly, there is some freedom in the way we handle these measurements. One option is to take a processing order wherein we back-propagate first the measurement associated with the first gadget and immediately afterward the measurement of the corresponding computational qubit; we proceed in the same manner for the remaining qubits, always starting with the gadget measurement followed by the associated computational measurement. We can write this explicitly as 
\begin{equation}\label{eq: ordering for the weights}
        \mathcal{O}_1:\quad \mathrm{GM}_1 \prec \mathrm{CM}_1 \prec \dots\prec \mathrm{GM}_t \prec \mathrm{CM}_t\,,
\end{equation}
where $\mathrm{GM}_i$ ($\mathrm{CM}_i$) denotes the gadget (computational) measurement performed on the $i$th auxiliary (computational) qubit and the notation $A\prec B$ is used to indicate that $A$ precedes $B$. This scheduling gives rise to new upper bounds for the weight of the successive Pauli measurements, as stated in Theorem~\ref{theorem: improved weights}. Summarily, the theorem asserts that the weight of the Pauli measurements in the PBC sequence increases monotonically with the number of operators that were already processed. To provide some intuition, this result arises from the structure of the causal dependencies, which gradually extend through the auxiliary (magic) qubits as successive measurements are back-propagated.

Alternatively, we can consider the measurement pattern underlying the one-way computation, where the computation is usually broken into layers with measurements in the same layer, $\ell_i$, being performed simultaneously and influencing the bases of measurements in subsequent layers $\ell_j$ with $j>i$. This gives us the option to propagate all gadget measurements first, followed by the layered propagation of the measurements on the computational qubits:
\begin{equation}\label{eq: ordering for another depth result -> GMs -> CMs}
    \mathcal{O}_2:\,\, \mathrm{GMs} \prec \mathrm{CM}_{\ell_1} \prec \mathrm{CM}_{\ell_2} \prec \dots\prec \mathrm{CM}_{\ell_{d_{\mathrm{1W}}}}\,.
    \end{equation}
Here, $\mathrm{GMs}$ denotes all $t$ gadget measurements, $\mathrm{CM}_{\ell_i}$ the set of all measurements of computational qubits belonging to layer $\ell_i\,$, and $d_{\mathrm{1W}}$ the depth of the underlying one-way computation. This processing order is used to upper bound the overall depth of the PBC by $d_{\mathrm{1W}}$ (see formal statement in Theorem~\ref{theorem: improved depth}). Even without further technical details, this result seems natural, as it rests directly on the structure of the underlying one-way quantum computation.

The fact that two different processing orders are used to arrive at Theorems~\ref{theorem: improved weights} and~\ref{theorem: improved depth} means that they do \emph{not} hold simultaneously, so we can either guarantee a reduced weight \emph{or} a reduced depth, respectively. This led us to wonder whether a trade-off could be achieved to guarantee concurrent improvements of both of these parameters. This is the essence of Theorem~\ref{theorem: improved depth and weights}, which is achieved through the layered propagation of both gadget and computational measurements so that:
\begin{equation}\label{eq: ordering for the depth}
    \mathcal{O}_3:\,\, \mathrm{GM}_{\ell_1} \prec \mathrm{CM}_{\ell_1} \prec \dots\prec \mathrm{GM}_{\ell_{d_{\mathrm{1W}}}} \prec \mathrm{CM}_{\ell_{d_{\mathrm{1W}}}}\,.
\end{equation}
As before, $\mathrm{CM}_{\ell_i}$ denotes the set of all measurements of computational qubits belonging to layer $\ell_i$, while $\mathrm{GM}_{\ell_i}$ refers to the set of corresponding gadget measurements. 

Table~\ref{tab: Summary of main results} summarizes our main results and the associated processing orders. This is meant to provide a global view of our results and be used as a reference to guide the reader through the upcoming subsections. The fine details underlying both the discussion above and Table~\ref{tab: Summary of main results} are better understood by looking at the proofs of our main results (found in Appendices~\ref{app: Proof of the weight theorem} and \ref{app: Proof of depth and depth-weight trade-off}).

\subsection{Improved weights}\label{subsec: Improved weights}

Looking carefully at the PBC procedure (Sec.~\ref{subsubsec: PBC}), we recognize that there are two mechanisms that may cause the Pauli measurements being back-propagated to the beginning of the circuit to spread (unboundedly) across the qubits of the system. The first is the \textsc{cnot} gates, which may increase (or decrease) the weight of Pauli operators by 1. The second is the Clifford unitaries $V(\sigma_P,\sigma_Q)$ that need to be introduced into the quantum circuit whenever an anti-commuting Pauli is detected in (and removed from) the measurement sequence. The latter may increase (or decrease) the weight of Pauli measurements propagated across them by an unconstrained amount. Because of this, when starting from quantum circuits, it is hard to bound the weights of the $t$ independent and pairwise commuting Pauli operators that need to be measured in the quantum computer. Therefore, the upper bound known thus far for the average weight of the measurements in a PBC, $\bar{w}$, consists only of the trivial one~\cite{BSS2016, PeresGa2023}. Specifically, since each Pauli operator may have weight $t$, the average weight is upper-bounded by $t$: $\bar{w} \leq t$.

Here, we show that starting from 1WQC allows us to establish better (non-trivial) upper bounds for the weights of the Pauli measurements in the corresponding PBC.
\begin{theorem}[Improved weights]\label{theorem: improved weights}
    Consider a one-way computation to be carried out on a $t$-qubit, computation-specific graph state $\ket{\mathcal{G}}$ with a measurement pattern requiring only measurements along the $\pm\pi/4$ directions on the equator of the Bloch sphere. By taking on the processing order $\mathcal{O}_1$ defined in Eq.~\eqref{eq: ordering for the weights}, the (magic-register) weights of the $2t$ Pauli operators in the (complete) PBC procedure are upper-bounded by \{1,\,1,\,2,\,2,\dots, t-1,\,t-1,\,t,\,t\}.
\end{theorem}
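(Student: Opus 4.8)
The plan is to make the PBC back-propagation procedure of Sec.~\ref{subsubsec: PBC} explicit on a $2t$-qubit register and to track only the weight carried on the magic (auxiliary) qubits at each step. First I would fix the circuit whose measurements are to be back-propagated. The $t$ graph-state qubits $c_1,\dots,c_t$ form a stabilizer register (stabilized by the generators of Eq.~\eqref{eq: generators of a graph state}, obtainable from $\ket{+}^{\otimes t}$ by Clifford $CZ$ gates), and $t$ auxiliary qubits $a_1,\dots,a_t$ prepared in $\ket{T}$ form the magic register. Each $\pm\pi/4$ equatorial measurement on $c_i$ is, up to a sign, a measurement of $T X T^{\dagger}=(X+Y)/\sqrt{2}$; I would realize it by a gadget that couples $c_i$ to the single fresh magic qubit $a_i$ via a two-qubit Clifford ($CX$ or $CZ$), followed by the gadget measurement $\mathrm{GM}_i$ acting natively on $a_i$ and the readout $\mathrm{RO}_i$ acting natively on $c_i$. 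The crucial structural feature is that the gadget for qubit $i$ touches \emph{only} the pair $\{c_i,a_i\}$, so the couplings for distinct qubits act on disjoint pairs while the graph-state $CZ$ gates act among computational qubits alone.

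The quantity to control is the magic-register weight of each of the $2t$ operators after it has been pushed to the start of the circuit under the order $\mathcal{O}_1$ of Eq.~\eqref{eq: ordering for the weights}. I would first argue that the \emph{only} mechanism capable of spreading weight between distinct magic qubits is the insertion of a Clifford correction $V(\sigma_P,\sigma_Q)$ of Eq.~\eqref{eq: V unitaries} whenever a back-propagated operator anticommutes with the stabilizer register (case~(i)). Indeed, pushing $\mathrm{GM}_i$ (native magic support $\{a_i\}$) or $\mathrm{RO}_i$ (native magic support empty) through its own gadget coupling can at worst add $a_i$, and through all other gadget couplings and the graph $CZ$ gates it gains no magic support, since those gates act on disjoint or purely computational qubits. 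Hence, absent corrections, every operator would carry magic weight at most one.

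I would then establish the invariant by induction on the pair index $i$: after $\mathrm{GM}_i$ and $\mathrm{RO}_i$ have been processed, every $V$ correction inserted so far and every Pauli appended to the stabilizer group (case~(iii)) has magic support contained in $\{a_1,\dots,a_i\}$. The base case is immediate, since $\mathrm{GM}_1$ and $\mathrm{RO}_1$ touch only $a_1$. For the inductive step, $\mathrm{GM}_i$ reaches the start with native magic support $\{a_i\}$ and is then pushed through the previously inserted corrections, whose magic support lies in $\{a_1,\dots,a_{i-1}\}$ by hypothesis; this yields magic weight at most $1+(i-1)=i$. Any new correction produced by $\mathrm{GM}_i$ is built from $\mathrm{GM}_i$ and an anticommuting stabilizer, both with magic support in $\{a_1,\dots,a_i\}$, so it respects the invariant; consequently $\mathrm{RO}_i$, pushed through its coupling (adding at most $a_i$) and these corrections, also has magic weight at most $i$. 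Reading the bounds off the sequence $\mathrm{GM}_1,\mathrm{RO}_1,\dots,\mathrm{GM}_t,\mathrm{RO}_t$ then gives exactly $\{1,1,2,2,\dots,t,t\}$.

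The main obstacle I anticipate is justifying the structural claim cleanly rather than the induction itself: one must verify that the explicit gadget realizing the equatorial measurement genuinely introduces $a_i$ only at step $i$ and that no hidden coupling lets an earlier operator acquire support on a later ancilla, so that the corrections $V(\sigma_P,\sigma_Q)$ remain the sole source of inter-ancilla spreading. Making this precise requires writing out the full circuit and its Heisenberg-picture propagation (the content of the referenced appendix and its figures); once that is in place, the weight bookkeeping and the induction are routine.
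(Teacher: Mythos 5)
Your proposal is correct and takes essentially the same route as the paper's proof in Appendix~\ref{app: Proof of the weight theorem}: both convert the $\pm\pi/4$ one-way measurement pattern into a circuit on a graph-state register plus $t$ magic qubits via $T$ gadgets, back-propagate under $\mathcal{O}_1$, and rest on the observation that the $V(\sigma_P,\sigma_Q)$ corrections are the only mechanism that can spread support across the magic register, with the support accumulated after step $i$ confined to the first $i$ auxiliary qubits. Your explicit inductive invariant is simply a cleaner packaging of the bookkeeping the paper carries out by writing the operators $Q_k = Z_kZ_{t+k}$ and $P_k^{\prime}$ explicitly and noting that conjugation through the previously inserted $V(m_i)$ and $V(s_i)$ unitaries can add magic support only on qubits indexed at most $k$.
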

We defer the proof of this result to Appendix~\ref{app: Proof of the weight theorem} and use the remainder of the section to illustrate and discuss some of its practical consequences. 
\begin{figure*}[t]
    \centering
    \subfloat[\label{subfig: RQCs Fam.1}]{\includegraphics[width=8.cm]{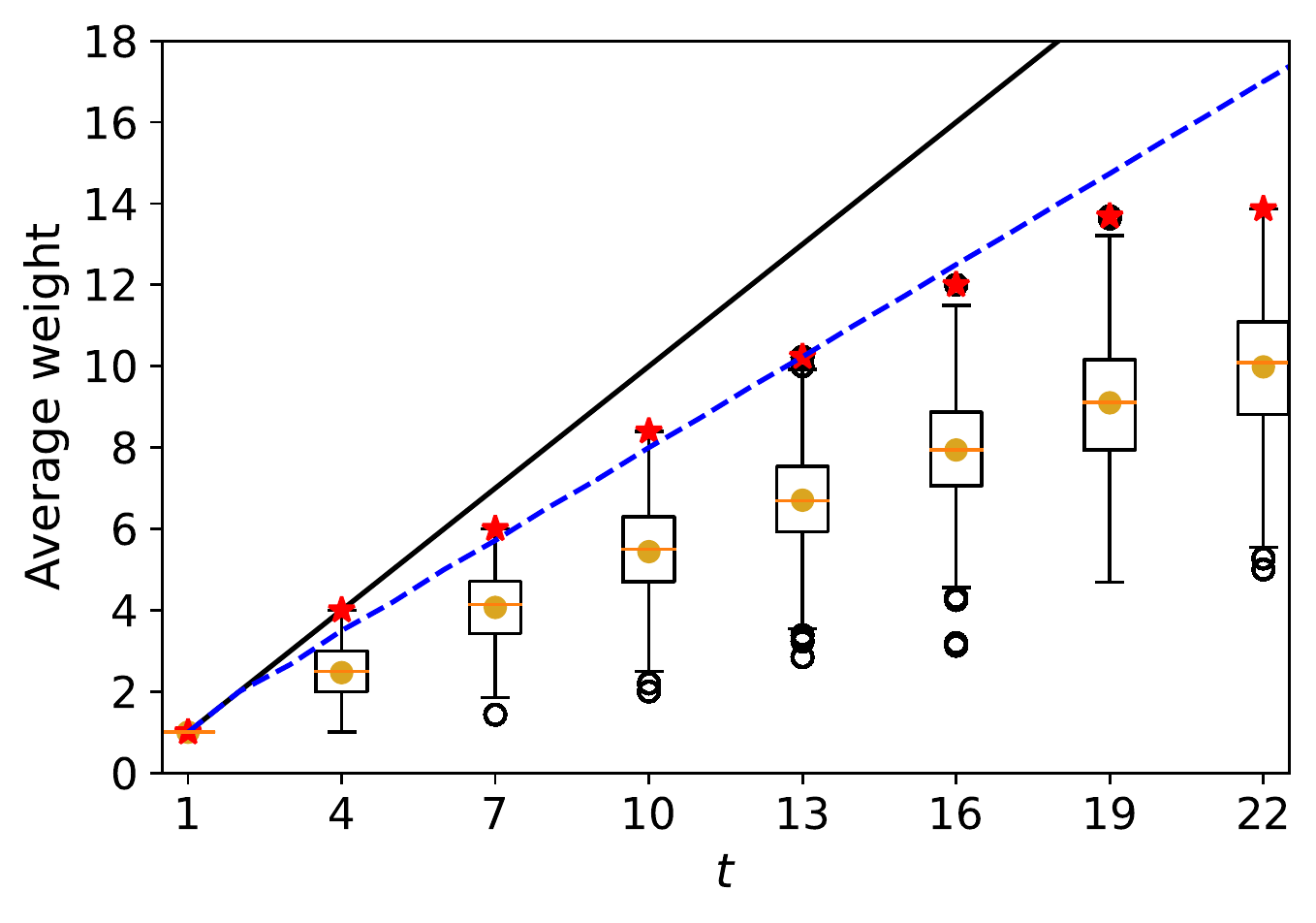}}\hspace{1cm}
    \subfloat[\label{subfig: RQCs Fam.2}]{\includegraphics[width=8.cm]{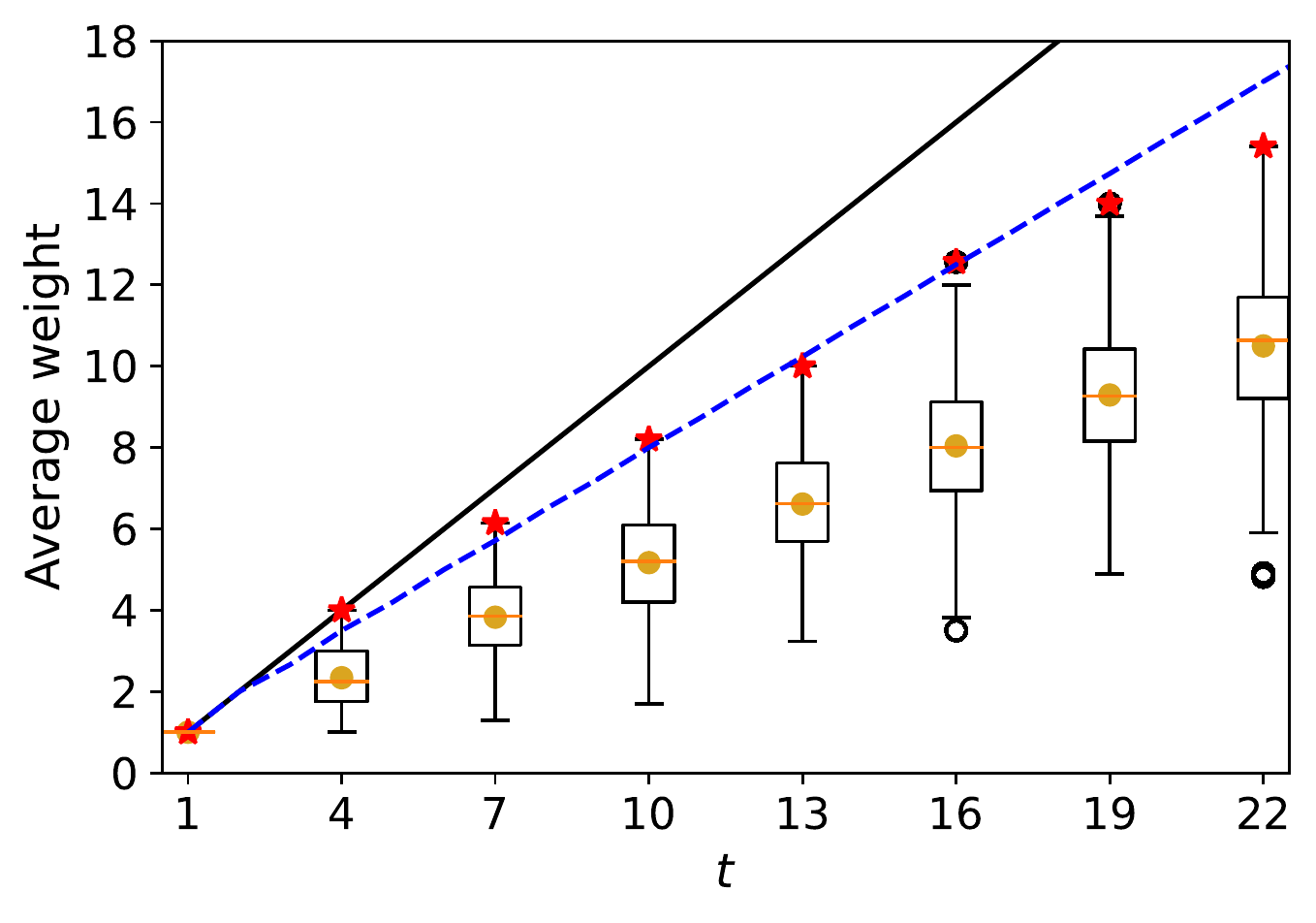}}
    \caption{Numerical data resulting from the compilation of two distinct families of random quantum circuits using our code written for Ref.~\cite{PeresGa2023} and openly available in Ref.~\cite{GitRepoPBC}. (a) Circuits generated as in Ref.~\cite{PeresGa2023} with a well-defined and ordered entanglement structure, and (b) circuits with arbitrary entanglement structure wherein gates are drawn at random from the gate set $\{H,\,S,\,CX,\,T\}.$ The numerical results are depicted using boxplots; the box extends from the first quartile ($q_1$) to the third ($q_3$) and its width, ($q_3 - q_1$), is known as the interquartile range. The orange line inside the box represents the median; the arms extending below and above the box have a length given by 1.5 times the interquartile range; any outliers are depicted as empty black-lined circles. The red stars signal the maximum average weight obtained for a PBC for each $t$, while the yellow circles identify its mean. The solid black line denotes the trivial average weight upper bound of $t$ known prior to this work and the dashed blue line gives the new upper bound for this quantity (as stated in Corollary~\ref{corollary: maximum weight upper bound}).}
    \label{fig: Numerics RQCs}
\end{figure*}

We know that the number of independent and pairwise commuting Pauli operators on $t$ qubits is $t$. Therefore, of the list of $2t$ Paulis, at most $t$ of those correspond to measurements that will have to be performed in the quantum computer. Assuming the worst-case scenario where the Pauli operators to be measured are those with larger weight, i.e., the last $t$ Paulis in the sequence, leads straightforwardly to the following corollary.
\begin{corollary}[Average weight upper bound]\label{corollary: maximum weight upper bound}
    The average weight of the Pauli operators that need to be measured in the quantum hardware is upper bounded by $\bar{w} \leq 3t/4 + 1/2.$
\end{corollary}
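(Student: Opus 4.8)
The plan is to take the weight sequence supplied by Theorem~\ref{theorem: improved weights}, namely that the $j$th of the $2t$ Pauli operators has weight at most $\lceil j/2 \rceil$, and to bound the average over the subset actually measured in hardware. Since there can be at most $t$ independent and pairwise commuting Paulis on $t$ qubits, at most $t$ of the $2t$ operators fall into category~(iii) of the PBC procedure (the only ones requiring a genuine quantum measurement). To obtain an upper bound on the average weight, I would fix the worst case in which the $t$ measured operators are those with the largest weight bounds, i.e.\ the last $t$ entries of the sequence, occupying positions $t+1,\dots,2t$. The average weight is then at most $\frac{1}{t}\sum_{j=t+1}^{2t}\lceil j/2\rceil$, and the whole task reduces to evaluating this sum and comparing it to $3t/4 + 1/2$.

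First I would split on the parity of $t$, since the ceiling makes the two cases behave slightly differently. For $t = 2m$ even, the last $t$ bounds are exactly the integers $m+1, m+2, \dots, 2m$, each appearing twice, so
\[
    \sum_{j=t+1}^{2t}\Big\lceil \tfrac{j}{2}\Big\rceil = 2\sum_{k=m+1}^{2m} k = 3m^2 + m,
\]
whence the average equals $(3m^2+m)/(2m) = (3m+1)/2 = 3t/4 + 1/2$. This already shows the bound is attained with equality for even $t$. For $t = 2m+1$ odd, the leftmost entry (at the even position $t+1 = 2m+2$) contributes its value only once, while the remaining integers $m+2, \dots, 2m+1$ each appear twice, giving the sum $(m+1) + 2\sum_{k=m+2}^{2m+1} k = 3m^2 + 4m + 1 = (3m+1)(m+1)$ and hence the average $(3m+1)(m+1)/(2m+1)$.

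It remains to verify that the odd-case average does not exceed $3t/4 + 1/2 = (6m+5)/4$; cross-multiplying by the positive quantities $4$ and $2m+1$ reduces this to $4(3m+1)(m+1) \le (6m+5)(2m+1)$, i.e.\ $12m^2 + 16m + 4 \le 12m^2 + 16m + 5$, which holds with slack exactly $1$. The bound therefore holds in both parities, tightly for even $t$ and with a vanishing gap for odd $t$. The only real care needed is the parity bookkeeping in the second step---correctly accounting for the single unpaired boundary term when $t$ is odd, since a naive ``each value twice'' count would overstate the sum; everything else is a routine arithmetic-series evaluation.
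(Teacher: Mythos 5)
Your proposal is correct and takes essentially the same route as the paper: the paper likewise observes that at most $t$ of the $2t$ operators are actually measured, assumes the worst case in which these are the last $t$ entries of the weight sequence from Theorem~\ref{theorem: improved weights}, and leaves the resulting arithmetic implicit. Your write-up simply makes that arithmetic explicit, including the careful parity split (with equality for even $t$ and slack exactly $1$ after cross-multiplication for odd $t$), which is a faithful elaboration rather than a different argument.
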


Corollary~\ref{corollary: maximum weight upper bound} is easy to prove, following straightforwardly from Theorem~\ref{theorem: improved weights}. It yields an improvement of roughly 25\% over the trivial upper bound previously known for PBC.

We took two distinct families of random quantum circuits (RQCs) as testbeds for our results and their usefulness. The first family is the one we used in Ref.~\cite{PeresGa2023} wherein the circuits have a specific entangling structure implemented by sequences of eight different entanglement layers. (For details, see Sec.~4.2.2 of~\cite{PeresGa2023}.) The second family is less structured in that gates were drawn from the set $\{H,\, S,\, CX,\, T\}$ with probabilities $\{(1-p)/3,(1-p)/3,(1-p)/3,p\}$, without enforcing any specific pattern to the entangling gates. The probability, $p$, of drawing $T$ was adjusted to facilitate creating quantum circuits with the desired $T$ count; the remaining gates were equiprobably drawn. Just like in Ref.~\cite{PeresGa2023}, we consider circuits with $25$ qubits and an adjustable number of $T$ gates $t=\{1,\,4,\,7,\,10,\,13,\,16,\,19,\,22\}.$ For each family and each $T$ count, we generated 500 different RQCs that were simulated by PBC with a total of 1024 shots. We used our code, companion to Ref.~\cite{PeresGa2023} and openly available in Ref.~\cite{GitRepoPBC}, selecting the dummy simulator option, where measurement outcomes are drawn from a uniform distribution rather than from the actual hard-to-simulate distribution. This choice is strongly supported by our previous work~\cite{PeresGa2023}.

Our results are depicted in Figs.~\ref{subfig: RQCs Fam.1} and \ref{subfig: RQCs Fam.2} respectively for the first and second families. Given the large amount of data, we chose to use a boxplot~\cite{Dutoit1986statistics}; the box extends from the first ($q_1$) to the third ($q_3$) quartile with the orange line inside identifying the median. The arms of the plot (extending from the box) have a length, $l$,  of 1.5 times the interquartile range: $l = 1.5(q_3 - q_1)$; any outliers are depicted as empty, black-lined circles. Red stars and yellow circles identify, respectively, the maximum and the mean average weight obtained for each $t$. 

We note that the numerical results associated with the two distinct families are very similar. We also see that while all data points comply with the trivial upper bounds (as must be), several instances violate the average weight bound set by Corollary~\ref{corollary: maximum weight upper bound}. In particular, we see this for both families at $t$ equals 4, 7, and 10, but also at $t=16$ for the second family of RQCs. This happens because the code carries out PBC in the usual way, taking as a starting point the non-adaptive Clifford$+T$ quantum circuits. On the other hand, the upper bound set by Corollary~\ref{corollary: maximum weight upper bound} is obtained assuming the intermediate step of transforming the circuit into a one-way computation and the subsequent suitable processing of the measurements as described by Theorem~\ref{theorem: improved weights}. Hence, the violation of this upper bound indicates that taking 1WQC as an intermediate step serves not only the theoretical purpose of finding better upper bounds but also provides practical advantages in achieving reduced weights. In this sense, 1WQC can be regarded as a pre-compilation step that one might be interested in performing before running the actual PBC procedure.

\subsection{Improved parallelizability}\label{subsec: Improved depth}

Theorem~\ref{theorem: improved weights} (or, more precisely, its proof) disregards the fact that the depth of a $t$-qubit one-way computation is often smaller than $t$, since some of the qubits can be measured simultaneously. Thus, one may now wonder: How does this knowledge influence the corresponding PBC procedure? Can some measurements in PBC also be performed simultaneously? If we were to discover this to be the case, it would translate into important improvements to the overall computational depth.

Consider a PBC on $t$ qubits involving the measurement of $r\leq t$ (adaptively chosen) Pauli operators. Since the PBC procedure ensures that all of these operators are compatible, it is the adaptive structure of the PBC that determines whether some of these measurements may or may not be performed simultaneously. When starting from the quantum circuit model, it is hard to extract any information that allows us to guarantee a depth better than $d=r.$ The adaptive structure of the 1WQC measurement pattern is determined solely by the underlying graph and the so-called generalized flow (or gflow) conditions~\cite{DanosKashefi2006, BrowneKashefi+2007}; in general, it ensures that the computation is broken down into layers of measurements that can be performed simultaneously. When taking 1WQC as a starting point, we expect to mirror this property to PBC; that is, we anticipate that some of the measurements of the PBC can similarly be grouped and performed concomitantly as they do not influence one another, constituting a single layer of the PBC. To the total number of layers formed in this way, we call ``\textit{depth of the PBC,}'' denoted $d_{\mathrm{PBC}}$.

The main results of this section are the following two theorems.
\begin{theorem}[Improved depth]\label{theorem: improved depth}
    Consider a one-way computation to be carried out on a $t$-qubit, computation-specific graph state $\ket{\mathcal{G}}$ with a measurement pattern requiring only measurements along the $\pm\pi/4$ directions on the equator of the Bloch sphere. By taking up the processing order $\mathcal{O}_2$ in Eq.~\eqref{eq: ordering for another depth result -> GMs -> CMs}, the depth of PBC coincides with the depth of the corresponding one-way quantum computation, $d_{\mathrm{1W}}$.
\end{theorem}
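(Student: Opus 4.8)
The plan is to track, step by step, how the processing order $\mathcal{O}_2$ of Eq.~\eqref{eq: ordering for another depth result -> GMs -> ROs} sorts the $2t$ back-propagated Pauli operators into the three categories of Sec.~\ref{subsubsec: PBC}, and to argue that only the readout operators ever become genuine quantum (category-(iii)) measurements, inheriting exactly the layered structure of the one-way pattern. First I would fix the translation used throughout: the $t$-qubit graph state $\ket{\mathcal{G}}$ plays the role of the stabilizer register, with generators $G_i = X_{c_i}\prod_{j\in\mathcal{N}(i)}Z_{c_j}$ as in Eq.~\eqref{eq: generators of a graph state}, and each $\pm\pi/4$ measurement of graph qubit $c_i$ is realized by attaching a magic qubit $m_i$ and applying the gadget of Fig.~\ref{fig: T gadget as in BSS2016.}, i.e.\ a gadget measurement $\mathrm{GM}_i = Z_{c_i}Z_{m_i}$ followed, after the Clifford correction $C_T$, by a readout $\mathrm{RO}_i$ of $c_i$. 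The key structural observation is that the $\mathrm{GM}_i$ are \emph{non-adaptive} fixed Pauli operators, whereas the adaptive $\pm\pi/4$ sign dictated by the gflow is carried entirely by the effective basis of $\mathrm{RO}_i$.

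The second step handles the block $\mathrm{GMs}$ that $\mathcal{O}_2$ processes first. I would show that every gadget measurement is of category (i): since $\mathrm{GM}_i$ contains $Z_{c_i}$ while $G_i$ contains $X_{c_i}$, the two anti-commute, so $\mathrm{GM}_i$ anti-commutes with a stabilizer, its outcome is fixed by a classical coin toss, and it is replaced by the Clifford $V(\sigma_{\mathrm{GM}_i},\sigma_{G_i})$ of Eq.~\eqref{eq: V unitaries}. Because a coin toss involves \emph{no} quantum measurement, the entire $\mathrm{GMs}$ block contributes zero layers to the quantum depth. This is the decisive payoff of starting from the graph state: the graph-state stabilizers absorb the gadget measurements for free. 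Here I must be careful that the property survives as the stabilizer group $\mathcal{S}$ is updated and as each $\mathrm{GM}_i$ is pushed through the previously inserted $V$ unitaries, the argument being that each successive gadget still finds an anti-commuting partner among the current stabilizers (morally $G_i$ itself, whose $X_{c_i}$ is untouched by gadgets acting on other qubits), so all $t$ gadgets are disposed of classically.

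The third step processes the readouts layer by layer. After the gadget block, the operators $\mathrm{RO}_i$ become the genuine category-(iii) measurements performed on the magic register, matching the $2t=t+t$ split of Theorem~\ref{theorem: improved weights}. I would invoke the gflow of the one-way pattern: the basis of a qubit in layer $\ell_k$ depends only on outcomes of qubits in strictly earlier layers $\ell_{<k}$, never on same-layer qubits. Translating this, the back-propagated operator $\mathrm{RO}_i$ for $c_i\in\ell_k$ depends adaptively only on the quantum outcomes $\{\sigma_{\mathrm{RO}_j}\}_{j\in\ell_{<k}}$ together with the already-decided \emph{classical} coin-toss values $\{\sigma_{\mathrm{GM}_i}\}$. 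Consequently the readouts of a single layer $\ell_k$ carry no mutual adaptive dependence, and since the PBC procedure guarantees that all measured operators pairwise commute, they are compatible and can be measured simultaneously, forming one PBC layer; across layers the dependence on strictly earlier outcomes forces a strict ordering. Counting then reproduces exactly $d_{\mathrm{1W}}$ layers, establishing $d_{\mathrm{PBC}} = d_{\mathrm{1W}}$.

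The hard part will be making the dependency bookkeeping of the third step rigorous: one must verify that the gflow correction sets, once pushed through the inserted $V$ unitaries and the graph-state Cliffords, generate no \emph{new} dependence on same-layer outcomes (which would inflate the depth) and collapse no genuine inter-layer dependence (which would contradict the minimality of $d_{\mathrm{1W}}$ itself). In other words, the delicate point is to certify that the adaptive structure is transported \emph{faithfully} --- neither coarsened nor refined --- from the one-way pattern to the PBC, so that the $d_{\mathrm{1W}}$ layers are matched one-to-one. I would also double-check the boundary subtlety that it is precisely the \emph{classical} nature of the $\sigma_{\mathrm{GM}_i}$ that prevents the self-correction $\mathrm{GM}_i\to\mathrm{RO}_i$ from serializing the first readout layer behind a phantom quantum measurement, which is what keeps the total at $d_{\mathrm{1W}}$ rather than $d_{\mathrm{1W}}+1$.
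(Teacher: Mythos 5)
Your proposal follows essentially the same route as the paper's own proof (Appendix~\ref{app: Proof of depth and depth-weight trade-off}): under $\mathcal{O}_2$ each gadget operator $Q_i = Z_i Z_{t+i}$ anti-commutes with $G_i$ while commuting with all other $G_j$, $Q_j$, and with the adaptive corrections $(S^{\dagger})^{f_j\oplus 1}$ (the one-line check you assert rather than verify, and which is exactly why the whole gadget block can be propagated first and resolved by coin tosses at zero quantum depth), after which the readouts become the category-(iii) measurements grouped into exactly $d_{\mathrm{1W}}$ layers because adaptivity only crosses layers. The ``hard part'' you correctly flag is discharged in the paper by the explicit single-layer computation (Lemma~\ref{lemma: single-layer result} and Appendix~\ref{app: proof for of single-layer Paulis}), which yields the closed form of the $P_i$ in Eq.~\eqref{eq: Final operators to be measured -- best depth} and verifies their mutual compatibility and independence, so each one-way layer contributes exactly one PBC layer.
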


This result illustrates how we can reduce the depth of PBC from $t$ to $d_{\mathrm{1W}}$. As is apparent from the proof of the theorem, this comes at the expense of losing the better weight upper bounds derived in the previous section. Hence, we wondered if an intermediate result could be provided so that the depth was still better than $t$ while, at the same time, allowing for an upper bound on the weights better than the trivial one. Such a scenario is subsumed in Theorem~\ref{theorem: improved depth and weights}. Recall that we denote by $\mathrm{CM}_{\ell_i}$ the set of all measurements of computational qubits belonging to layer $\ell_i$ and $\mathrm{GM}_{\ell_i}$ the set of corresponding gadget measurements.
\begin{theorem}[Weight-depth trade-off]\label{theorem: improved depth and weights}
    Consider a one-way computation with logical depth $d_{\mathrm{1W}}$ and layering so that the number of computational qubits in layer $\ell_i$ is $\kappa_i$: $\sum_{i=1}^{d_{\mathrm{1W}}} \kappa_i = t.$ The computation is to be carried out on a $t$-qubit, computation-specific graph state $\ket{\mathcal{G}}$ with a measurement pattern requiring only measurements along the $\pm\pi/4$ directions on the equator of the Bloch sphere.
    By back-propagating the measurements following $\mathcal{O}_3$ in Eq.~\eqref{eq: ordering for the depth}, the depth of the corresponding PBC is upper-bounded by $\min \{2d_{\mathrm{1W}} - 1,t\}.$ Moreover, the weight of the $2\kappa_i$ Pauli operators stemming from $\mathrm{GM}_{\ell_i}$ and $\mathrm{CM}_{\ell_i}$ measurements is upper bounded by $\sum_{j=1}^{i} \kappa_j\,.$ 
\end{theorem}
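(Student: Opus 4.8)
The plan is to build directly on the machinery developed for Theorems~\ref{theorem: improved weights} and~\ref{theorem: improved depth}, since the processing order $\mathcal{O}_3$ is essentially an interpolation between $\mathcal{O}_1$ and $\mathcal{O}_2$. First I would set up the bookkeeping: starting from the one-way computation on the $t$-qubit computation-specific graph state, I introduce the $t$ gadget qubits and write down the $2t$ Pauli operators to be back-propagated (following the construction in Appendix~\ref{app: Proof of the weight theorem}). The key structural fact to exploit is that the layering of the one-way computation, dictated by the gflow conditions, partitions the $t$ computational qubits into $d_{\mathrm{1W}}$ layers with $\kappa_i$ qubits in layer $\ell_i$, and that measurements within a layer do not influence one another. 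Under $\mathcal{O}_3$ we process, layer by layer, first the $\kappa_i$ gadget measurements $\mathrm{GM}_{\ell_i}$ and then the $\kappa_i$ readout measurements $\mathrm{RO}_{\ell_i}$.

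For the \textbf{weight bound}, I would argue inductively on the layer index $i$. The crucial observation is that after back-propagating all measurements up through layer $\ell_{i-1}$, the stabilizer/Clifford frame has been updated so that any Pauli operator arising from the $\mathrm{GM}_{\ell_i}$ and $\mathrm{RO}_{\ell_i}$ measurements can only be supported on magic-register qubits associated with layers $\ell_1,\dots,\ell_i$. This is because the gflow correction structure guarantees that a measurement in layer $\ell_i$ depends only on qubits in earlier or equal layers, so the back-propagation through the intervening Clifford corrections cannot spread support into qubits of strictly later layers. Counting, the number of such qubits is $\sum_{j=1}^{i}\kappa_j$, which gives the claimed weight upper bound. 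I would need to verify carefully that the $T$-gadget construction (Fig.~\ref{fig: T gadget as in BSS2016.}) and the associated Clifford corrections respect this layer-locality — this mirrors the weight-tracking argument already carried out for Theorem~\ref{theorem: improved weights}, now aggregated per layer rather than per qubit.

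For the \textbf{depth bound}, I would show that the measurements collapse into at most $2d_{\mathrm{1W}}-1$ layers of the PBC. Processing $\mathrm{GM}_{\ell_i}$ followed by $\mathrm{RO}_{\ell_i}$ naively suggests $2d_{\mathrm{1W}}$ PBC layers; the saving of one layer comes from observing that the gadget measurements of the first layer, $\mathrm{GM}_{\ell_1}$, are mutually compatible and independent of the initial stabilizer frame and can be absorbed together with (or merged into) the first readout layer, or equivalently that consecutive compatible blocks at the boundary between layers fuse. I would make this precise by checking, using the compatibility guaranteed by the PBC procedure and the non-interference of same-layer one-way measurements, that each $\mathrm{GM}_{\ell_i}$ and each $\mathrm{RO}_{\ell_i}$ forms (at most) one PBC layer but that one adjacency can always be coalesced. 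The $\min\{2d_{\mathrm{1W}}-1,\,t\}$ simply records that the depth can never exceed the total number $t$ of measured operators, the trivial PBC bound.

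The main obstacle I anticipate is the weight argument: rigorously establishing that back-propagation under $\mathcal{O}_3$ confines support to the first $i$ layers requires a careful analysis of how the Clifford corrections $V(\sigma_P,\sigma_Q)$ (Eq.~\eqref{eq: V unitaries}), which in general can spread weight by an unbounded amount, interact with the gflow layering. The payoff of starting from 1WQC is precisely that the gflow partial order constrains which qubits these corrections can touch; making the ``cannot reach strictly later layers'' claim airtight — rather than the per-qubit telescoping used for $\mathcal{O}_1$ — is where the real work lies, and I would treat it as the technical heart of the proof in Appendix~\ref{app: Proof of depth and depth-weight trade-off}.
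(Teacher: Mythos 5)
Your overall route is the paper's route (layered back-propagation reusing the single-layer machinery, with the $V$ unitaries tracked as the only weight-spreading mechanism, and the weight confinement following because all $V(m_i)$, $V(s_i)$ added through layer $\ell_i$ act only on magic-register qubits of layers $\le i$), but your depth argument has a concrete error. The saving of one layer does \emph{not} come from fusing $\mathrm{GM}_{\ell_1}$ into the first readout layer: every operator in $\mathrm{GM}_{\ell_1}$ arrives at the front of the circuit as $Q_i = Z_i Z_{t+i}$, which \emph{anti-commutes} with the graph generator $G_i$, so the entire first gadget block falls into case~(i) of the PBC procedure and is resolved classically (coin toss plus insertion of $V(m_i)$), contributing \emph{zero} hardware layers. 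That is where $2d_{\mathrm{1W}}-1$ comes from. Your proposed mechanism is actually impossible: if a gadget operator were a genuine case-(iii) measurement, its outcome $m_i$ would be needed to fix the correction $S^{m_i}$ \emph{before} the readout of qubit $i$ can be back-propagated at all, so a quantum gadget measurement can never share a PBC layer with its own layer's readouts. (Your characterization of $\mathrm{GM}_{\ell_1}$ as ``independent of the initial stabilizer frame'' is the opposite of the relevant fact.)

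A second gap: your claim that each $\mathrm{GM}_{\ell_i}$ and $\mathrm{RO}_{\ell_i}$ block forms one PBC layer leans on ``non-interference of same-layer one-way measurements,'' but that property does not transfer to the back-propagated Pauli operators. Under $\mathcal{O}_3$, operators within a single block can generate $V(m_i)$ or $V(s_i)$ unitaries that alter \emph{other operators of the same block}, potentially changing their commutation type — the paper flags exactly this as what makes $\mathcal{O}_3$ harder than $\mathcal{O}_2$. The missing ingredient is a store-and-defer argument: process the block sequentially; whenever an operator is recognized as a case-(iii) Pauli, record it but do \emph{not} measure it yet; whenever an operator anti-commutes with a generator or a previously performed measurement $W$, insert the corresponding $V$ unitary (subsequent operators in the block either pass through unchanged or transform as $Q_j \mapsto (-1)^{m_i} W Q_i Q_j$ and are then re-evaluated). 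Only after the whole block is processed are the stored, mutually compatible Paulis measured simultaneously, yielding one PBC layer per block. Without this mechanism your per-block depth count is unsupported; with it, the count $1 + 2(d_{\mathrm{1W}}-1) = 2d_{\mathrm{1W}}-1$, capped by the trivial bound $t$, follows as you intended, and your inductive weight bound of $\sum_{j=1}^{i}\kappa_j$ goes through essentially as in the paper.
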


\begin{figure}
    \centering
    \def\myvdots{\ \vdots\ } \begin{tikzpicture} \node[scale=0.9] { \tikzset{ my label/.append style={above right,xshift=0.35cm,yshift=-0.25cm} } \begin{quantikz}[thin lines] \lstick[wires=4]{$\ket{\mathcal{G}}$} \qw &\gate{S^{\dagger}} &\gate{T} &\gate{H} &\meter{$s_1$}\\ \qw &\gate{S^{\dagger}} &\gate{T} &\gate{H} &\meter{$s_2$}\\ \myvdots & &\myvdots & &\myvdots\\ \qw &\gate{S^{\dagger}} &\gate{T} &\gate{H} &\meter{$s_t$} \end{quantikz} }; \end{tikzpicture}
    \caption{A single-layer one-way computation depicted as a quantum circuit. }
    \label{fig: single-layer 1WQC}
\end{figure}

In this theorem, we see that there is still a monotonic increase of the upper bound on the weights as the computation moves along. However, while in Theorem~\ref{theorem: improved weights} the weight upper bound increases by one every other operator, now the upper bound evolves in steps of (variable) width $\kappa_i$, increasing from one step to the other by an amount given by the number of qubits in the associated layer of the underlying one-way computation.

We now consider the simple case of quantum computations with depth 1. This is the only point of the main text where we go through an explicit demonstration. The reason for this is two-fold. First, it gives the reader a glimpse of the proof strategies used throughout this work via the simplest possible example. Secondly, this demonstration equips us with useful insights that are exploited in the more general proofs of Theorems~\ref{theorem: improved depth} and~\ref{theorem: improved depth and weights}, which are deferred to Appendix~\ref{app: Proof of depth and depth-weight trade-off}.
\begin{lemma}[Single-layer computation]\label{lemma: single-layer result}
    A single-layer one-way computation on a computation-specific graph state $\ket{\mathcal{G}}$ with non-adaptive measurements performed along the $\pi/4$ direction on the equator of the Bloch sphere has an associated single-layer PBC.
\end{lemma}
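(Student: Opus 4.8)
The plan is to make the gadget structure underlying each $\pi/4$ measurement explicit and then run the back-propagation procedure of Sec.~\ref{subsubsec: PBC} on the resulting stabilizer circuit, tracking the two families of measurements separately. Concretely, I would start from the circuit of Fig.~\ref{fig: single-layer 1WQC} and replace each $T$ gate by the $T$-gadget of Fig.~\ref{fig:T_gadget}. This introduces, for each computational qubit $i\in\{1,\dots,t\}$, one auxiliary magic qubit $i'$ prepared in $\ket{T}$, and splits the $i$-th measurement into a pair: a gadget measurement $\mathrm{GM}_i$ (a $Z_{i'}$ measurement following a $CX_{i\to i'}$) and a readout measurement $\mathrm{RO}_i$ (a $Z_i$ measurement following the local gates $S^\dagger_i$, $S_i^{m_i}$, $H_i$). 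The input is the graph state $\ket{\mathcal{G}}$, stabilized by $\{G_i\}_{i=1}^t$ as in Eq.~\eqref{eq: generators of a graph state}, tensored with the $t$ magic qubits. Since $d_{\mathrm{1W}}=1$, the processing orders $\mathcal{O}_2$ and $\mathcal{O}_3$ coincide and instruct us to back-propagate all gadget measurements first and only afterwards all readouts.

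First I would back-propagate the gadget measurements. A short conjugation computation shows that $\mathrm{GM}_i$ arrives at the front as $Z_iZ_{i'}$ (the $CX$ turns $Z_{i'}$ into $Z_iZ_{i'}$, and $S^\dagger_i$ leaves it invariant). Because only the generator $G_i$ carries an $X$ on qubit $i$, this operator anti-commutes with $G_i$ and commutes with every other $G_j$, placing it squarely in case~(i): its outcome $m_i$ is fixed by a coin toss and the generator $G_i$ is replaced by $\pm Z_iZ_{i'}$, with no modification needed to the remaining generators (they already commute with $Z_iZ_{i'}$). Iterating over $i$, after all $t$ gadget measurements the stabilizer group has become $\langle (-1)^{m_1}Z_1Z_{1'},\dots,(-1)^{m_t}Z_tZ_{t'}\rangle$.

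Next I would back-propagate the readouts against this updated group. Conjugating $Z_i$ successively through $H_i$, $S_i^{m_i}$, $CX_{i\to i'}$, and $S^\dagger_i$ yields $\mathrm{RO}_i\to W_iX_{i'}$, where $W_i=Y_i$ if $m_i=0$ and $W_i=X_i$ if $m_i=1$ (up to an irrelevant sign). Each such operator commutes with every $Z_jZ_{j'}$ — trivially for $j\ne i$, and for $j=i$ through a double anti-commutation on qubits $i$ and $i'$ — while its $X_{i'}$ factor cannot be produced by any product of the $Z$-type stabilizers, so it is independent of them and hence a genuine case~(iii) measurement. Crucially, the operators $\mathrm{RO}_i$ have pairwise disjoint supports $\{i,i'\}$, so they mutually commute and are jointly independent, making all $t$ of them compatible.

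Finally I would argue that the depth collapses to one. The only adaptivity left in the measured operators is the dependence of $W_i$ on $m_i$; but $m_i$ is the outcome of the already-processed gadget measurement, fixed by a coin toss that may be performed classically at the outset. Hence every readout basis is determined \emph{before} any quantum measurement is carried out, and the $t$ compatible operators $\{W_iX_{i'}\}$ form a single PBC layer. The main obstacle is precisely the bookkeeping of this adaptivity: one must check that no readout basis depends on another readout's outcome (only on its own gadget's coin toss), which is what licenses performing all $t$ measurements simultaneously. Equivalently, on the subspace fixed by $\langle Z_iZ_{i'}\rangle$ each $W_iX_{i'}$ reduces to a weight-one logical Pauli on an independent effective qubit, manifestly yielding a single-layer PBC and reconciling the construction with the magic-register viewpoint of Theorem~\ref{theorem: improved weights}.
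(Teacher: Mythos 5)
Your handling of the gadget measurements departs from the paper's case-(i) mechanism in a way that breaks the final, essential step. In PBC, an anti-commuting operator $Q_i = Z_iZ_{t+i}$ is \emph{not} absorbed by swapping the generator $G_i$ for $\pm Z_iZ_{t+i}$ in the stabilizer group; its outcome $m_i$ is coin-tossed and $Q_i$ is replaced by the Clifford unitary $V(m_i)$ of Eq.~\eqref{eq: V(mi) single-layer}, inserted at the front of the circuit, through which every subsequent operator must be back-propagated via Eq.~\eqref{eq: explicit back-propagation through V}. Your stabilizer-update picture is the unitarily related frame obtained by actually \emph{projecting} the input, and the two frames yield different hardware operators: yours are $W_iX_{t+i}$ (weight 2, supported on computational qubit $i$ and magic qubit $t+i$), whereas the paper's are the $P_i$ of Eq.~\eqref{eq: Final operators to be measured in single-layer}, of magic-register weight $N_i+1$. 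The trouble is that in PBC the computational register never physically exists — the only quantum hardware holds the separable state $\ket{T}^{\otimes t}$ — so a case-(iii) operator must reduce, modulo the current stabilizer group, to an operator on the magic register alone. Your $W_iX_{t+i}$ cannot be so reduced: multiplying it by elements of $\langle (-1)^{m_j}Z_jZ_{t+j}\rangle$ never removes the $X_i$/$Y_i$ factor on the computational qubit. What your argument actually constructs is a $2t$-qubit protocol in which $\ket{\mathcal{G}}$ must be physically prepared and the $Z_iZ_{t+i}$ fusions physically performed (coin-tossing $m_i$ while updating generators is a classical simulation of the projection, not something realizable without measuring) — essentially the gadget of Fig.~\ref{fig: T gadget as in BSS2016.} acting on an \emph{entangled} input, not a single-layer PBC with separable magic-state input.

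A sanity check makes the gap vivid: were your conclusion correct, every single-layer one-way computation would admit a single-layer PBC with all measurements of weight 2 on a product magic state, which by the paper's treewidth discussion immediately following the Lemma (via Markov--Shi) would render all such computations efficiently classically simulable. The paper's frame avoids this: conjugating the readouts through $V=\prod_i V(m_i)$ produces operators whose magic-register weight $N_i+1$ grows with the graph degree, and whose pairwise commutation for arbitrary $\boldsymbol{m}$ is a genuine check involving overlapping neighborhoods — a check your disjoint-support argument sidesteps only because you stopped in the wrong frame. To repair the proof, keep the group $\{G_j\}$ fixed, push each $W_iX_{t+i}$ through the $V(m_j)$ unitaries (recovering exactly Eq.~\eqref{eq: Final operators to be measured in single-layer}), and then verify independence and compatibility of the resulting $P_i$ among themselves and with the $G_j$; that is the paper's route, and your correct observation that $\mathcal{O}_2\equiv\mathcal{O}_3$ here, together with your conjugation of the readouts through $\mathcal{C}$, carries over unchanged.
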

\begin{proof}
    The starting point for the proof is the quantum circuit in Fig.~\ref{fig: single-layer 1WQC} which depicts the one-way computation in question. The first step consists of transforming the $T$ gates into $T$ gadgets.
    
    Since we are considering that we have $d_{\mathrm{1W}} = 1$, we note that $\mathcal{O}_2 \equiv \mathcal{O}_3:\quad \mathrm{GM}_{\ell_1} \prec \mathrm{CM}_{\ell_1}$. Thus, we take all gadget measurements and process them first, before dealing with the computational measurements. In doing this, we note that each measurement stemming from the $T$ gadget arrives at the beginning of the circuit $\mathcal{C}$ as
    $Q_i = Z_i Z_{t+i}$ which anti-commutes with $G_i\,.$
    
    We know that, in general, the order in which these operators are processed is relevant. Here, we show that, in this particular case, they can all be processed simultaneously. Take first $Q_1 = Z_1Z_{t+1}.$ Since it anti-commutes with $G_1$ it is dealt with by drawing $m_1$ uniformly at random from $\{0,1\}$ and then replacing $Q_1$ with the Clifford unitary
    \begin{equation*}
        V(G_1,Q_1,m_1) = \frac{G_1 + (-1)^{m_1}Q_1}{\sqrt{2}}.
    \end{equation*}
    To simplify notation, we denote this and all upcoming $V$ unitaries solely by $V(m_i)$; the reader should keep in mind that they also depend on $G_i$ and $Q_i$.
    
    Next comes $Q_2.$ This operator arrives at the beginning of the quantum circuit as $Q_2 = Z_2Z_{t+2}.$ But, after processing $Q_1$, we now have the unitary $V(m_1)$ through which $Q_2$ also needs to be back-propagated. Crucially, $[Q_2,G_1] = [Q_2,Q_1] = 0$ which means that it is pushed through $V(m_1)$ without being altered. It is then processed much like $Q_1$ by making a coin toss to decide $m_2$ and introducing the corresponding Clifford unitary $V(m_2).$ In the same fashion, each subsequent $Q_j$ is passed through previously added $V(m_i)$ unitaries, with $i<j$, without being changed.

    This means that all gadget measurements can be processed simultaneously in the classical machine by drawing a $t$-bit string $\boldsymbol{m}\in \{0,1\}^t$ uniformly at random and then adding the unitaries
    \begin{equation}\label{eq: V(mi) single-layer}
        V(m_i) = \frac{G_i + (-1)^{m_i} Q_i}{\sqrt{2}}
    \end{equation}
    to the beginning of the circuit. For convenience, let us denote $V=\prod_{i=1}^t V(m_i).$
    
    Next, we need to handle the computational measurements. These can similarly all be processed simultaneously. They are back-propagated through the adaptive Clifford circuit $\mathcal{C}$ and $V$, leading to Pauli operators with the following form:
    \begin{equation}\label{eq: Final operators to be measured in single-layer}
        Z_i \xlongrightarrow{\mathcal{C}V}
        \begin{cases}
                \text{if }m_i = 0:\quad P_i = R_i Y_{t+i}\\
                \text{otherwise}:\quad P_i = G_i R_i X_{t+i}
            \end{cases},
    \end{equation}
    with $R_i = (-1)^{\sum_{a\in \mathcal{N}(i)} m_a} \left( \prod_{b\in \mathcal{N}(i)} G_b\right)\left( \prod_{c\in \mathcal{N}(i)} Z_{t+c} \right)$.
    Since the calculations are somewhat extensive, we leave the details for Appendix~\ref{app: proof for of single-layer Paulis}. 
    
    It can be shown that, for any $\boldsymbol{m}$, $[P_i,\, P_j] = 0,\,\forall i,j$ so that all $t$ operators are independent and compatible. Additionally, they are also compatible with the stabilizers of the graph state. Thus, these operators are identified as Paulis that need to be measured in the quantum hardware. Importantly, they can be processed and, therefore, measured simultaneously, comprising a PBC with a single layer. This concludes the proof.
\end{proof}

Note that since the processing order is \emph{not} $\mathcal{O}_1$, Theorem~\ref{theorem: improved weights} is \emph{not} guaranteed to hold. Thus, we managed to perform a single-layer computation at the expense of potentially larger Pauli weights. 

Let us look more carefully into the weights of the measurements in this particular case of single-layer computation. From Eq.~\eqref{eq: Final operators to be measured in single-layer}, it is easy to see that the weight of each Pauli operator $P_i$ (in the magic register) is $w_i = N_i + 1,$ where $N_i$ denotes the number of neighbors of the $i$th qubit of the graph state. For a fully connected graph, this means that each operator has weight $w_i = t$. On the other hand, if the degree of each vertex of the graph is $O(1)$, each Pauli measurement has weight $O(1)$ and the average weight is constant (rather than linear).

The work of Markov and Shi~\cite{MarkovShi2008} shows that 1WQC can be classically simulated in time exponential in the treewidth of the underlying graph. This means that any graph having a treewidth logarithmic in its size will lead to a computation that can be efficiently classically simulated. The maximal degree of a graph and its treewidth are independent properties. However, if a graph has maximal degree 2, its treewidth can be at most two, leading to efficient classical simulation. Hence, we have the following corollary.
\begin{corollary}
    No single-layer PBC with an average weight smaller than or equal to three can lead to quantum advantage.
\end{corollary}

A similar result holds for the adaptive, multi-layered case of Theorem~\ref{theorem: improved depth}. For details, see Appendix~\ref{app: Proof of depth and depth-weight trade-off}.

\subsection{Remarks on generality}\label{subsec: Generality}

Theorems~\ref{theorem: improved weights}, \ref{theorem: improved depth}, and \ref{theorem: improved depth and weights}, as well as Lemma~\ref{lemma: single-layer result}, take as starting point a one-way computation on a computation-specific graph state $\ket{\mathcal{G}}$, with a measurement pattern requiring only measurements along the $\pm\pi/4$ directions on the equator of the Bloch sphere. In all generality, a Clifford+$T$ quantum circuit with $t$ $T$ gates may be transformed into a one-way computation with a more general underlying resource state $\ket{\mathcal{R}}$ that is local-Clifford equivalent to $\ket{\mathcal{G}}$. One may wonder how the inclusion of such local Clifford transformations may influence our results. The answer is simple. Theorem~\ref{theorem: improved weights} is not altered by considering any general resource state $\ket{\mathcal{R}}$. On the other hand, Theorems~\ref{theorem: improved depth} and \ref{theorem: improved depth and weights}, and Lemma~\ref{lemma: single-layer result} are modified by simply adding a constant factor of 1 to the computational depth upper bounds stated.

Proving these generalized versions of the results can be done by considering every possible local Clifford transformation that can be applied to the generic graph state $\ket{\mathcal{G}}$ and appropriately generalizing the stabilizers in Eq.~\eqref{eq: generators of a graph state} to include every such possibility. While straightforward, this completely general examination is quite laborious. For this reason, we chose to formulate our results in terms of an input graph state and provide the corresponding proofs, as it makes the presentation shorter and simpler, and is more elucidating than the completely general proofs.

\section{Greedy algorithm for smaller-weight measurements}\label{sec: Greedy algorithm}

In this section, we present a heuristic algorithm that reduces the weight of the Pauli measurements to be performed in a PBC. To understand the idea behind this algorithm, consider a PBC such that a sequence of Pauli operators $L_P = \{P_1,..., P_{r-1}\}$ have already been measured. The state of the system after such a sequence is given by
\begin{equation*}
    \ket{\psi_{r-1}} \propto \prod_{i=1}^{r-1}\frac{I^{\otimes t} + (-1)^{\sigma_i}P_i}{2} \ket{T}^{\otimes t}\,,
\end{equation*}
where $\sigma_i$ is the outcome of the corresponding Pauli measurement $P_i$; these outcomes are stored in a list denoted $L_{\sigma}$. Continuing the PBC procedure, the ensuing Pauli measurement, $P_{r}$, is discovered. Performing this Pauli measurement causes the state to evolve such that:
\begin{equation*}
    \ket{\psi_r} \propto \frac{I^{\otimes t} + (-1)^{\sigma_{r}}P_{r}}{2} \ket{\psi_{r-1}}\,.
\end{equation*}

Since the quantum state $\ket{\psi_{r-1}}$ is stabilized by $\left< (-1)^{\sigma_1} P_1,\, \dots,\, (-1)^{\sigma_{r-1}} P_{r-1} \right>$, it is not hard to show that $\ket{\psi_{r}}$ can equivalently be obtained if, rather than measuring $P_{r}$, one decides to measure any Pauli operator of the form $P_{r}\prod_{j\in \mathcal{W}} (-1)^{\sigma_j}P_j$. Here, $\mathcal{W}$ denotes one of the $2^{r-1} - 1$ possible subsets of $\{1,\dots,r-1\}$ (excluding the empty set $\emptyset$, which corresponds to considering $P_r$ itself). This implies that, whenever we find a Pauli operator $P_{r}$ at the $r$th time step, we can measure exactly that Pauli or any one of the $2^{r-1}-1$ Pauli operators that perform an equivalent state transformation. Thus, we are free to choose whichever operator has the lowest weight.

The discussion above shows that, for the last time step $t$, the total number of equivalent Pauli measurements is $2^{t-1}$. Thus, if we try to analyze all possibilities, we incur an exponential classical processing overhead of $O(2^t)$. To preserve the efficiency of the PBC procedure, the search for smaller-weight Pauli operators should be restricted to a polynomial-sized subset of equivalent Pauli operators. Put differently, the number of subsets $\{\mathcal{W}_i\}_{i=1}^N$ considered  at each time step must be such that $N=O(\poly(t))$. An option that incurs only a constant overhead to the processing of each Pauli operator (and, therefore, an overhead of $O(t)$ to the overall procedure) is to consider, at each time step $r$, only one set: $\mathcal{W}=\{1,\dots,r-1\}$. This means that all one needs to do is compare the weight of the Pauli operator $P_{r}$ with that of $P_{r}\prod_{j=1}^{r-1} (-1)^{\sigma_j}P_j$ and choose the one with the smallest weight.
\begin{algorithm}[t]
    \SetKwData{comb}{comb}
    \SetKwFunction{Combinations}{Combinations}
    \SetKwFunction{Length}{Length}
    \SetKwFunction{FindWeight}{FindWeight}
    \SetKwInOut{Input}{Input}
    \SetKwInOut{Output}{Output}
    \Input{($L_P$, $L_{\sigma}$, $P_r$, $\mathsf{go}$)}
    \Output{$P_r^{\prime}$, the Pauli operator to be measured.}
    \BlankLine
    $r \leftarrow$ \Length{$L_P$} + 1 \tcc*{Current time step}
    $w \leftarrow$ \FindWeight{$P_r$} \tcc*{Weight to beat}
    $P_r^{\prime} \leftarrow P_r$\;
    \For{$a\leftarrow 0$ \KwTo $\mathsf{go}$}{
        \ForEach{$\mathcal{W}$ in \Combinations{$\{1,\dots,r-1\}$, $a$} $\cup$ \Combinations{$\{1,\dots,r-1\}$, $r-1-a$}}{ 
            $P_{\text{new}} \leftarrow P_r \prod_{j\in \mathcal{W}} (-1)^{L_{\sigma}[j]}L_P[j]$\;
            $w_{\text{new}} \leftarrow$ \FindWeight{$P_{\mathrm{new}}$}\;
            \If{$w_{\mathrm{new}} < w$}{
                $P_r^{\prime} \leftarrow P_{\text{new}}$\;
                $w\leftarrow w_{\text{new}}$\;
            }
        }
    }
    \Return $P_r^{\prime}$
    \caption{Greedy algorithm}\label{alg: Greedy algorithm}
\end{algorithm}

Alternatively, if we restrict ourselves to subsets $\mathcal{W}$ of size either 1 or $r-2$, the total number of possible subsets is $2r,\forall r\geq 4$. This incurs a linear overhead to the classical processing of each Pauli operator since we need to compare the weight of $P_{r}$ to that of a total of $O(r)$ other Pauli operators. In total, this leads to a contribution of $O(t^2)$ to the complete PBC procedure. If we allow the subsets $\mathcal{W}$ to have size 2 or $r-3$, then the total number of equivalent Pauli measurements to be tested against $P_{r}$ is $O(r^2)$, i.e., we incur a quadratic overhead in the classical processing of each Pauli, leading to a total overhead of $O(t^3)$ to the complete PBC procedure.

These ideas allow us to construct a greedy algorithm that, at each time step $r$, given the Pauli operator to be measured, $P_{r}$, searches for an alternative, equivalent Pauli measurement with a smaller weight. The efficiency of the algorithm is controlled by a parameter that we call ``greedy order'', denoted $\mathsf{go}\geq 0$. Once we set a specific value for $\mathsf{go}$, the algorithm searches for Pauli measurements with better weight among the (sub)sets $\mathcal{W} \subseteq \{1,\,\dots,\,r-1\}$ with size $r-1-a$ and $a$, with $0\leq a\leq \mathsf{go} \leq (r-1)/2$. This means that the algorithm incurs a time overhead given by:
\begin{equation*}
    \tau_{\mathrm{greedy}} = \sum_{r=1}^t \left( 2\sum_{a=0}^{\mathsf{go}} \frac{(r-1)!}{(r-1-a)!\,a!} -1 \right)\,.
\end{equation*}
Thus, $\mathsf{go}=0$ implies a linear overhead to the entire PBC procedure, $\mathsf{go} = 1$ a quadratic overhead, $\mathsf{go} = 2$ a cubic one, and so on.

Pseudocode for this greedy algorithm is outlined in Algorithm~\ref{alg: Greedy algorithm}. We included this algorithm in the code found at~\cite{GitRepoPBC} and used it to perform PBC compilation on the same RQCs and hidden-shift circuits (HSCs) studied in our previous work~\cite{PeresGa2023}. The latter are particularly useful since they have a known deterministic outcome; thus, they can be used to verify that the greedy algorithm is working correctly. In the results that follow, the RQCs with $t$ ranging from 4 to 22 and the HSCs with $t=14$ were simulated using an actual (Schrödinger-type) classical simulator. On the other hand, the circuits with larger $T$ counts were simulated using the dummy simulator (recall Sec.~\ref{subsec: Improved weights} and see Ref.~\cite{PeresGa2023} for details).
\begin{figure}
    \centering
    \includegraphics[width=0.9\columnwidth]{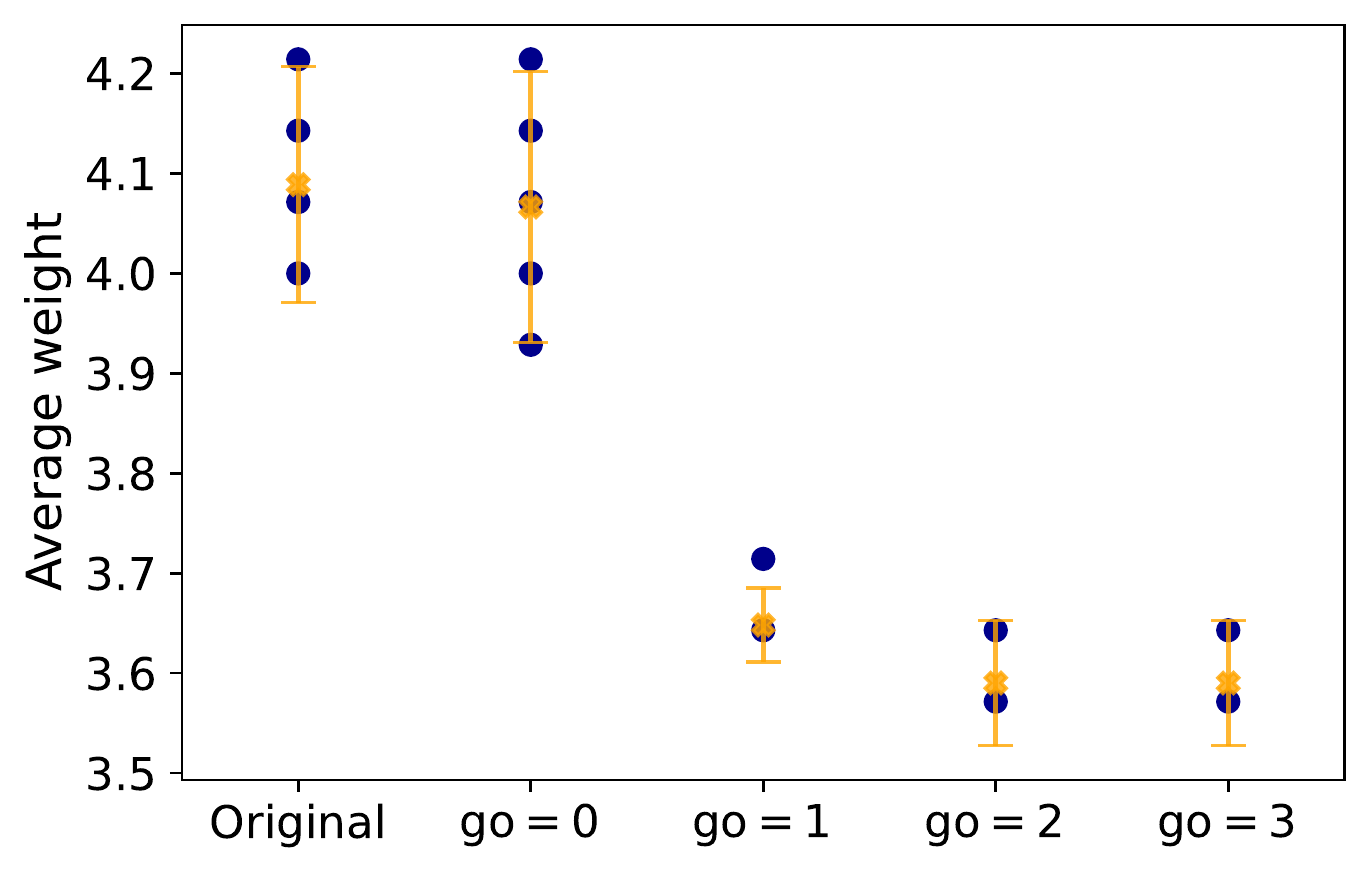}
    \caption{Effect of the greedy algorithm on the possible average weight of the Pauli measurements in the PBCs obtained from hidden-shift circuits with $t=14$ and varying $n=\{10, 14, 18, 22, 28\}$. The results are independent of $n$. We register reductions of 0.7\%, 10.8\%, and 12.3\% to the mean average weight (yellow crosses) respectively for $\mathsf{go}=0,1,\text{ and }2$. The dark blue dots represent the different possible average weights obtained for the PBCs that simulate the hidden-shift circuits under consideration and the error bars depict two standard deviations.}
    \label{fig: Greedy -- small HSCs}
\end{figure}

For the HSCs with $t=14$, the effect of the greedy algorithm is represented in Fig.~\ref{fig: Greedy -- small HSCs}. The results are independent of the number $n$ of qubits in the original circuit. We see that the original circuits were compiled to a total of four different possible average weights (equivalently, a total of four possible \textsc{cnot} counts). Applying the greedy algorithm with $\mathsf{go}=0$ has only a small effect of adding a fifth possible average weight lower than the original four, reducing the mean average weight only by $\sim 0.7\%$. With $\mathsf{go}=1$ the impact of the greedy algorithm is more expressive. In particular, the considered HSCs lead to PBCs with one out of two possible distinct average weights, both of which are lower than the possible average weights of the PBCs obtained without the greedy algorithm or with $\mathsf{go}=0$. A similar effect is seen for $\mathsf{go}=2$, where only two possible distinct average weights occur, the largest of which corresponds to the smallest value obtained with $\mathsf{go}=1$. Focusing on the mean average weight associated with these results, $\mathsf{go}=1$ and $\mathsf{go}=2$ lead to improvements of $10.8\%$ and $12.3\%$, respectively. Importantly, the reader should recall that any improvements registered to the average weight of PBC correspond directly to the reduction in the number of \textsc{cnot} gates of the PBC-compiled circuits. For these (small) circuits, running $\mathsf{go}=3$ leads to similar results as obtained with $\mathsf{go}=2$, suggesting that the performance of the algorithm might stagnate after that.

One may wonder how the performance of the greedy algorithm changes for larger circuits. To understand this, we picked the same 50 larger HSCs with $n=t=42$ as in our previous work~\cite{PeresGa2023}. In this case, the effect of $\mathsf{go}=0$ is negligible, while $\mathsf{go}=1$ and $\mathsf{go}=2$ achieve improvements to the average weight of $13.7\%$ and $16.7\%$, respectively. Interestingly, the performance of our algorithm seems to improve slightly for these larger circuits.
\begin{figure}
    \centering
    \includegraphics[width=0.9\columnwidth]{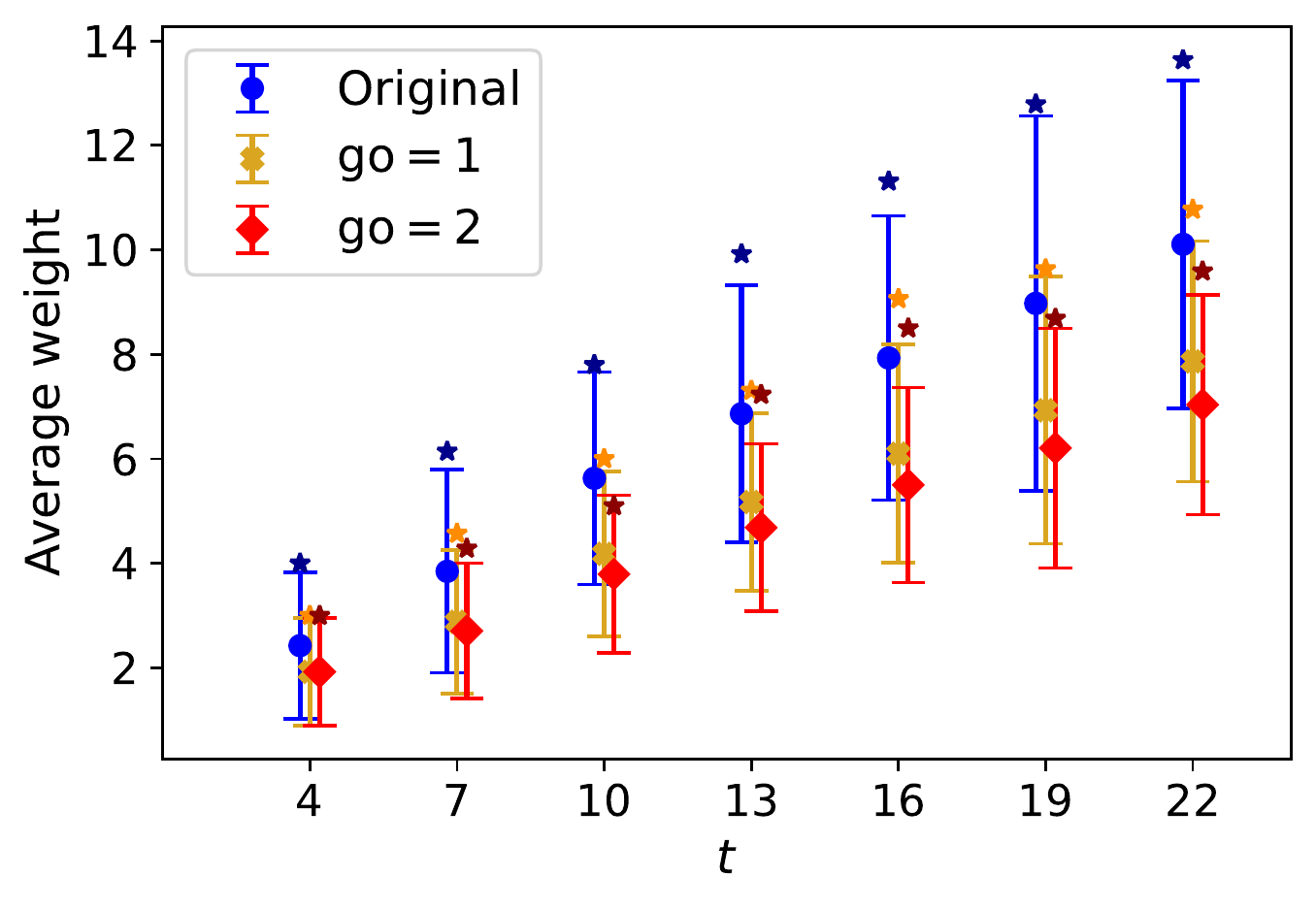}
    \caption{Performance of the greedy algorithm for random quantum circuits with varying $t=\{4, 7, 10, 13, 16, 19, 22\}$ for $\mathsf{go}=1$ (yellow crosses) and $\mathsf{go}=2$ (red diamonds) as compared with the results in the absence of a greedy algorithm (blue circles). The error bars depict two standard deviations from the mean average weight and the stars represent the maximum average weight. The impact of the different orders of the greedy algorithm is noticeable. In particular, for $t=4$, $\mathsf{go}=1$ and $\mathsf{go}=2$ yield the same performance, corresponding to an improvement of roughly $20.6\%$ to the mean average weight. For $t>4$, the improvement to the mean average weight varies between 22.2\% and 25.7\% with $\mathsf{go}=1$ and between 29.7\% and 32.6\% with $\mathsf{go}=2$.}
    \label{fig: Greedy -- small RQCs}
\end{figure}

The results for the RQCs are depicted in Fig.~\ref{fig: Greedy -- small RQCs}. To make this figure more readable, we omitted the results for $\mathsf{go}=0$ even though the improvements associated with this order of the greedy algorithm are more appreciable than for the HSCs, varying between $6.0\%$ and $12.2\%$. For $t=4$, $\mathsf{go}=1$ and $\mathsf{go}=2$ yield the same performance, corresponding to an improvement of roughly $20.6\%$ to the mean average weight. For $t>4$, the improvement to the mean average weight varies between 22.2\% and 25.7\% with $\mathsf{go}=1$ and between 29.7\% and 32.6\% with $\mathsf{go}=2$.

As before, it is interesting to understand how the algorithm performs for larger instances. To that end, we investigated 50 RQCs with $n=t=49$. For these circuits, the mean average weight was reduced by $5\%$, $18.2\%$, and 25.4\%, respectively, for $\mathsf{go}=0$, $\mathsf{go}=1$, and $\mathsf{go}=2$. We see that, contrary to what was observed in the case of larger HSCs, the performance of the greedy algorithm seems to worsen for the larger RQCs. 

The improvements to the average weight (and associated \textsc{cnot} complexity of the final PBC-compiled circuits) described in the previous paragraphs are reported with respect to the results obtained with standard PBC without the greedy algorithm. One may further wonder how the \textsc{cnot} count of PBC-compiled circuits using the greedy algorithm compares to that of circuits compiled using other tools. Here, we focus our comparison against the best-performing ZX calculus compiling algorithm for each case~\cite{KissWet2020reduceT, Wetering2020zxcalculus, Staudacher2023}; for some instances, this corresponds to the basic optimization algorithm, while for others it corresponds to a recent heuristic algorithm for optimizing the two-qubit gate count~\cite{Staudacher2023}. The results are summarized in Tables~\ref{tab: ZXcomp - HSCs} and \ref{tab: ZXcomp - RQCs}. From the analysis of the data, we see that while the performance of the PBC compiler depends only on $t$ irrespective of the number of qubits $n$, the inverse behavior is observed with the ZX calculus algorithms. One detail that is important to point out is that while our PBC compilation allows for the \textsc{cnot} as the only 2-qubit gate, the same is not true in the case of the ZX algorithms, which make use of other 2-qubit gates like the $CZ$ and \textsc{swap} gates. This makes the comparison of the results somewhat unfair in favor of the ZX calculus framework since, for instance, each \textsc{swap} gate requires three \textsc{cnot}s to synthesize (this would increase the gate counts registered in the fourth column of Tables~\ref{tab: ZXcomp - HSCs} and \ref{tab: ZXcomp - RQCs}). Even so, we see that PBC compilation significantly outperforms the best ZX calculus compiler for all of the studied HSCs, save for the circuits with 10 qubits and 14 $T$ gates for which our \textsc{cnot} count is $42.9\%$ larger than the total number of 2-qubit gates in the circuits compiled with the ZX calculus. Similarly, our compiler outperforms those of the ZX calculus for all the RQCs with $t\leq 22$ but has a worse performance for the larger circuits with $n=t=49$. Altogether, these results show the relevance of our work to the current state-of-the-art landscape of quantum circuit compilation.

\begin{table}[]
    \centering
    \begin{tabular}{ccccc}
        \hline\hline
        $n$ & $t$ & $\#\textsc{cnot}$ [PBC, $\mathsf{go}=2$] & $\#\text{2-qubit gates}$ [ZX] & $\Delta$ \\
        \hline
        10 & \multirow{6}{*}{14} & \multirow{6}{*}{$50\pm 1$} & $35\pm 6$ & $+42.9\%$\\
        14 & & & $55\pm 8$  & $-9.1\%$ \\
        18 & & & $74\pm 10$ & $-32.4\%$ \\
        22 & & & $98\pm 13$ & $-49.0\%$ \\
        28 & & & $133\pm 15$ & $-62.4\%$\\
        32 & & & $159\pm 18$ & $-68.6\%$\\
        \hline
        42 & 42 & $156\pm 5$ & $311\pm 28$ & $-49.8\%$\\
         \hline\hline
    \end{tabular}
    \caption{Performance of PBC-compilation with $\mathsf{go}=2$ compared with that of the best-performing ZX calculus circuit-compilation algorithm for the case of HSCs with different parameters $n$ and $t$. The third column registers the average number of \textsc{cnot} gates in the PBC-compiled circuits and the fourth column registers the average number of 2-qubit gate counts in the circuit compiled by the ZX calculus algorithm. The last column registers the percentual difference between these two results: $\Delta = \left( \#\textsc{cnot}_{\mathrm{PBC},\mathsf{go}=2} - \#\text{2-qubit gates}_{\mathrm{ZX}} \right) / \#\text{2-qubit gates}_{\mathrm{ZX}}$.}
    \label{tab: ZXcomp - HSCs}
\end{table}

\begin{table}[]
    \centering
    \begin{tabular}{ccccc}
        \hline\hline
        $n$ & $t$ & $\#\textsc{cnot}$ [PBC, $\mathsf{go}=2$] & $\#\text{2-qubit gates}$ [ZX] & $\Delta$ \\
        \hline
        \multirow{7}{*}{25} & 4 & $8\pm 4$ & $191\pm 12$ & $-95.8\%$ \\
         &  7 & $19\pm 9$ & $190\pm 12$ &   $-90.0\%$\\
         & 10 & $38\pm 15$ & $190\pm 13$ &  $-80.0\%$\\
         & 13 & $61\pm 21$ & $191\pm 12$ &  $-68.1\%$\\
         & 16 & $88\pm 30$ & $189\pm 12$ &  $-53.4\%$\\
         & 19 & $118\pm 44$ & $191\pm 13$ & $-38.2\%$\\
         & 22 & $155\pm 46$ & $190\pm 13$ & $-18.4\%$\\
        \hline
        49 & 49 & $1021\pm 187$ & $795\pm 26$ & $+28.4\%$\\
         \hline\hline
    \end{tabular}
    \caption{Performance of PBC-compilation with $\mathsf{go}=2$ compared with that of the best-performing ZX calculus circuit-compilation algorithm for the case of RQCs with different parameters $n$ and $t$. The third column registers the average number of \textsc{cnot} gates in the PBC-compiled circuits and the fourth column registers the average number of 2-qubit gate counts in the circuit compiled by the ZX calculus algorithm. The last column registers the percentual difference between these two results: $\Delta = \left( \#\textsc{cnot}_{\mathrm{PBC},\mathsf{go}=2} - \#\text{2-qubit gates}_{\mathrm{ZX}} \right) / \#\text{2-qubit gates}_{\mathrm{ZX}}$.}
    \label{tab: ZXcomp - RQCs}
\end{table}

We leave further analyses of the performance of the greedy algorithm to Appendix~\ref{app: More results on greedy}. We also discuss a variation of the algorithm and compare its performance to that of Algorithm~\ref{alg: Greedy algorithm}.

Since in noisy intermediate-scale quantum devices, two-qubit gates are prone to more errors, reducing the number of such gates is desirable as it often has a positive impact on the overall fidelity of the circuit. Considering that the mean average weight has a direct interpretation as the mean number of \textsc{cnot} gates needed in the PBC-compiled quantum circuits, the improvements registered in this section are striking and should have important impacts on near-term quantum computing solutions.

\section{Incompatible and constant-weight Pauli-based computation}\label{sec: incPBC}

In Sec.~\ref{subsec: Improved weights}, we improved the weights of the Pauli measurements in a PBC via Theorem~\ref{theorem: improved weights}. However, the average weight of these measurements is still linear in $t$ (Corollary~\ref{corollary: maximum weight upper bound}). Complementarily, in Sec.~\ref{subsec: Improved depth}, we discussed certain conditions under which a PBC is guaranteed to be driven by constant-weight measurements only. These results provide partial answers to the questions formulated at the beginning of Sec.~\ref{sec: Improvements I - connection to 1WC}. Here, we complement these results by engendering a universal model for quantum computation that requires only constant-weight Pauli measurements performed on a separable input state. We dub this model incompatible, constant-weight Pauli-based computation (incPBC).

We use Perdrix's scheme~\cite{Perdrix2005} as a starting point and modify it so that it requires only Pauli measurements. To that end, we relax the condition that the computation needs to be performed using measurements only and, as in PBC, allow the preparation of $\ket{T}$ magic states.

We note that the \textsc{cnot} and Hadamard gates, implemented in Perdrix's scheme as depicted in Figs.~\ref{fig: State transfer CNOT} and \ref{subfig: State transfer: H}, respectively, require only Pauli measurements and are, therefore, already in the form we are interested in. Universality further requires a construction for implementing the $T$ gate which avoids the non-stabilizer measurement $(X-Y)/\sqrt{2}$ used in Ref.~\cite{Perdrix2005} (Fig.~\ref{subfig: State transfer: T}) or any other such non-Pauli measurements. This can be done provided we have access to copies of the magic state $\ket{T}$. In that case, the $T$ gate can be implemented deterministically, up to a \emph{Clifford} unitary,  using the construction in Fig.~\ref{fig: T gadget as in BSS2016.}~\cite{BSS2016}.

An important detail is that half of the time, the $T$ gate is implemented up to a unitary that involves the $S$ gate. Thus, we also need to find a construction that deterministically carries out this unitary. It is not hard to show that the implementation in Fig.~\ref{fig: Implementation of the S gate using state transfer} performs the phase gate up to a Pauli operator $P$ that depends on the measurement outcomes. Finally, the Pauli corrections can be dealt with similarly to Perdrix's scheme.

Suppose we have an $n$-qubit Clifford+$T$ quantum circuit with $t$ $T$ gates, $c_1$ phase and Hadamard gates, $c_2$ \textsc{cnot}s and $w$ final readout measurements. What are the quantum resources needed to implement the same computation using the model defined in the previous paragraphs? The answer is simple. Since the state-transfer-based implementation of each unitary requires exactly one auxiliary qubit, in each computational layer, we may need up to $n$ auxiliary qubits. Thus, we need a quantum processor possessing at least $2n$ qubits. The measurements involved are all Pauli measurements of weight, either 1 or 2. Assuming the highly unlikely scenario wherein the measurement outcomes are such that no corrections are needed, a total of $(w + 2t + 3c_1 + 4c_2)$ measurements need to be performed to carry out the desired computation in this model. In practice, though, on average, an additional overhead linear in the total number of gates will be needed to deal with the tiresome Pauli and Clifford corrections that may arise from state transfer. 
\begin{figure}
    \centering
    \begin{tikzpicture} \node[scale=1.0]{ \tikzset{operator/.append style={fill=gray!15, rounded corners}} \begin{quantikz}[thin lines] \lstick[wires=1]{\ket{\psi}} &\gate[wires=2]{Z \otimes Z} &\qw\rstick{$C T \ket{\psi}$} \\ \lstick[wires=1]{\ket{T}} &\qw &\gate{X} &\qw \end{quantikz} }; \end{tikzpicture}
    \caption{Implementation of the $T$ gate using only Pauli measurements~\cite{BSS2016}. We note that, unlike what happens in Ref.~\cite{Perdrix2005}, the $T$ gate is implemented up to a \emph{Clifford} unitary ($I$, $Z$, $S$, or $ZS$) depending on the measurement outcomes.}
    \label{fig: T gadget as in BSS2016.}
\end{figure}
\begin{figure}
    \centering
    \begin{tikzpicture} \node[scale=1.0]{ \tikzset{operator/.append style={fill=gray!15, rounded corners}} \begin{quantikz}[thin lines] \lstick[wires=1]{\ket{\psi}} &\qw &\gate[wires=2]{Z \otimes Z} &\gate{-Y} &\qw \\ &\gate{X} &\qw &\qw\rstick{$P S \ket{\psi}$} \end{quantikz} }; \end{tikzpicture}
    \caption{Implementation of the $S$ gate, up to a Pauli correction $P$, using only Pauli measurements. $P$ depends on the measurement outcomes.}
    \label{fig: Implementation of the S gate using state transfer}
\end{figure} 

Fortunately, we can do better. Since PBC is only computationally universal, we similarly allow this scheme to comply only with this weaker form of universality. This has immediate consequences in simplifying the framework. Notably, we do not need to worry about any Pauli corrections that may arise, as they can be pushed past subsequent Pauli measurements. We can also avoid the implementation of any single-qubit Clifford gates (i.e., $H$ and $S$) via state-transfer constructions and similarly deal with these by propagating them until the end of the quantum circuit (past all Pauli measurements), with the consequence of changing the nature of the Pauli measurements they get pushed through, but never their weight. This is in the same spirit as PBC.
\begin{table*}[t]
    \centering
    \begin{tabular}{>{\centering\arraybackslash}p{1.2cm} >{\centering\arraybackslash}p{2.9cm} >{\centering\arraybackslash}p{5.9cm} >{\centering\arraybackslash}p{3.9cm} >{\centering\arraybackslash}p{1.5cm} >{\centering\arraybackslash}p{1.5cm}}
        \hline\hline
        & \multicolumn{4}{c}{Quantum resources} \\
        \hline
       Model & Quantum memory & Measurement type & Number of measurements & Weight & Depth \\
        1WQC    & $t$  & Compatible non-stabilizer measurements & $t$ & 1 & $d_{\mathrm{1W}}$ \\
        PBC     & $t$  & Compatible Pauli measurements & $\leq t$ & $\leq t$ & $\leq t$ \\
        incPBC  & $2n$ & Incompatible Pauli measurements & $w + 2t + 3c_2$ & $\{1,2\}$ & $3d_L$ \\
        \hline\hline
    \end{tabular}
    \caption{Comparison of the quantum resources needed across different measurement-based models of quantum computation for simulating a Clifford+$T$ quantum circuit with $n$ qubits, $t$ $T$ gates, $c_2$ $\textsc{cnot}$s, and $w$ readout computational basis measurements; $d_{\mathrm{1W}}$ denotes the number of layers in the adaptive measurement sequence of 1WQC which is often smaller than the logical depth $d_L$ of the corresponding quantum circuit~\cite{BroadbentK2009parallelizing, HobanWA2014mbcc}. The term ``quantum memory'' refers to the number of qubits that need to be online at any given point of the computation. As explained in the main text, qubit reinitialization and reuse are allowed in the case of incPBC.}
    \label{tab: resource comparison}
\end{table*}

\begin{figure*}[t]
    \centering
    \begin{tikzpicture} \node[scale=0.77]{ \begin{quantikz}[thin lines] \lstick[wires=2]{\ket{\psi}} &\gate{H} &\ctrl{1} &\gate{T} &\gate{H} &\meter{} \\ &\qw &\targ{} &\qw &\qw \end{quantikz}\hspace{0.5cm}$\longrightarrow$\hspace{0.3cm} \begin{quantikz}[thin lines] & & & & &\lstick{\ket{T}} &\gate[wires=2, style={fill=gray!15, rounded corners}]{Z \otimes Z}\gategroup[wires=2,steps=3,style={rounded corners,fill=teal!20, inner xsep=2pt},background]{{$T$}} &\gate[style={fill=gray!15, rounded corners}]{X} & & &\\ \lstick[wires=2]{\ket{\psi}} &\gate{H} &\gate[wires=3, style={fill=gray!15, rounded corners}]{Z \otimes I \otimes X}\gategroup[wires=3,steps=4,style={rounded corners,fill=blue!20, inner xsep=2pt},background]{\textsc{cnot}} &\qw &\qw &\gate[wires=2, label style=red]{P_{\textsc{cnot}}} & &\qw &\gate[label style=red]{C_T} &\gate{H} &\gate[style={fill=gray!15, rounded corners}]{Z}\\ &\qw & &\gate[wires=2, style={fill=gray!15, rounded corners}]{X\otimes Z} &\qw & &\qw &\qw &\qw &\qw &\qw\\ &\lstick{\ket{0}} & & &\gate[style={fill=gray!15, rounded corners}]{X} & & & & & & \end{quantikz} }; \end{tikzpicture} \begin{tikzpicture} \node[scale=0.75]{ \hspace{3.8cm}$\longrightarrow$\hspace{0.3cm} \begin{quantikz}[thin lines] & & & &\lstick{\ket{T}} &\gate[wires=2, style={fill=gray!15, rounded corners}]{Z \otimes X}\gategroup[wires=2,steps=2,style={rounded corners,fill=teal!20, inner xsep=2pt},background]{} &\gate[style={fill=gray!15, rounded corners}]{X} &\\ \lstick[wires=2]{\ket{\psi}} &\gate[wires=3, style={fill=gray!15, rounded corners}]{X \otimes I \otimes X}\gategroup[wires=3,steps=3,style={rounded corners,fill=blue!20, inner xsep=2pt},background]{} &\qw &\qw &\qw & &\qw &\gate[style={fill=gray!15, rounded corners}]{HC_T^{\dagger}XC_T H}\\ & &\gate[wires=2, style={fill=gray!15, rounded corners}]{X\otimes Z} &\qw &\qw &\qw &\qw &\qw\\ \lstick{\ket{0}} & & &\gate[style={fill=gray!15, rounded corners}]{X} & & & & \end{quantikz} }; \end{tikzpicture}
    \caption{Illustration of how the incPBC model can weakly simulate any Clifford+$T$ quantum circuit by performing only Pauli measurements of weight 1 or 2. The outcomes of the measurement $HC_T^{\dagger} X C_T H$ are guaranteed to obey the same probability distribution as those of the readout measurement of the quantum circuit. $P_{\textsc{cnot}}$ and $C_T$ denote, respectively, the Pauli and Clifford corrections associated with the implementation of the \textsc{cnot} and $T$ gates following Figs.~\ref{fig: State transfer CNOT} and \ref{fig: T gadget as in BSS2016.}. As with all other figures, white boxes with sharp edges depict quantum gates, while grey boxes with rounded edges represent projective measurements.}
    \label{fig: Illustration of incPBC}
\end{figure*}

In Sec.~\ref{subsec: Improved weights}, we explicitly noted that there are two mechanisms responsible for increasing the weight of the Pauli measurements in a PBC. To ensure a model that relies only on constant-weight Pauli measurements, both of these mechanisms must be bypassed. Avoiding the $V(\sigma_P,\sigma_Q)$ unitaries means allowing measurement incompatibility in the model. Preventing the increase of the weights promoted by propagation through \textsc{cnot}s is done by implementing this gate using the state transfer construction by Perdrix~\cite{Perdrix2005}, depicted in Fig.~\ref{fig: State transfer CNOT}. 

Therefore, the incPBC model we propose here works as follows. The \textsc{cnot} gate and the $T$ gate are implemented using the measurement-based constructions proposed, respectively, in Refs.~\cite{Perdrix2005} and~\cite{BSS2016} and depicted in Figs.~\ref{fig: State transfer CNOT} and \ref{fig: T gadget as in BSS2016.}. The gates are thus implemented deterministically up to Pauli or Clifford corrections that are irrelevant since they can be pushed through any Pauli measurements, potentially changing their nature but without altering their weight. The Hadamard and $S$ gates native to the circuit can similarly be pushed through any Pauli measurements that come from the implementation of the $T$ and \textsc{cnot} gates, again (possibly) changing them but never altering their weight.

To simulate the output distribution of an $n$-qubit Clifford+$T$ quantum circuit with $t$ $T$ gates, $c_2$ \textsc{cnot}s, logical depth $d_L$, and $w$ readout measurements, incPBC requires (up to) $2n$ qubits, the ability to prepare them either in the $\ket{0}$ or the $\ket{T}$ state, and a total of $(w + 2t + 3c_2)$ Pauli measurements. We remark that not only have we managed to remove the dependence on the number of single-qubit Clifford gates in the quantum circuit, but, additionally, by removing the need for the recursive (repeat-until-success) procedure in Perdrix's scheme, the number of measurements is no longer a best-case scenario requiring a convenient and highly unlikely sequence of measurement outcomes. Assuming that each layer of the circuit has at least one \textsc{cnot}, the depth of this model--understood as the total number of measurement layers--is $3d_L.$

In Table~\ref{tab: resource comparison}, the resources needed in this scheme are compared with those needed to perform the same task within 1WQC and the original PBC. We note that the price to pay for having only constant-weight Pauli measurements is two-fold. First, measurement incompatibility needs to be allowed. Second, the number of measurements that need to be performed is much larger: $(w+2t+3c_2)$ instead of $t$. Figure~\ref{fig: Illustration of incPBC} provides a visual depiction of our incPBC model via a simple example.

\subsection*{Relation to other works}\label{subsec: Connection to FBQC}

Other than the obvious tie to Perdrix's work, the astute reader may note a connection to the work by Bartolucci \textit{et al.}~\cite{FBQC2023}, which also makes use of measurements of constant weight. In that paper, the authors construct a scheme for fault-tolerant photonic quantum computation relying on the so-called fusions~\cite{BrowneRudolph2005}; they suitably dub this model fusion-based quantum computation~(FBQC).

One way to look at FBQC is as a practical architectural proposal of how to construct sufficiently large graph states to enable universal quantum computation. Ref.~\cite{FBQC2023} notes that any graph state can be generated (up to local Clifford unitaries) provided one has access to 3-qubit Greenberger--Horne--Zeilinger (GHZ) states and is capable of doing Bell measurements. In the specific context of linear optics, the latter requirement corresponds to the ability to perform Bell fusions, also known as type-II fusions~\cite{BrowneRudolph2005}. 

We now spare some remarks on what makes incPBC and FBQC similar and what sets them apart. First of all, the description above highlights that FBQC can be regarded as a framework for universal quantum computation that makes use of few-qubit measurements. Specifically, Bell fusions consist of two-qubit measurements; additionally, since the scheme also requires access to 3-qubit GHZ states, and since multi-qubit states are not allowed as a resource in Perdrix's scheme or incPBC, a fair comparison requires understanding how such a resource could be generated by projective measurements. Since a GHZ state is stabilized by $\mathcal{S}_{\text{GHZ}} = \left< XXX,\, ZZI,\, IZZ \right>$, we understand that in total FBQC requires at most measurements of weight 3. This contrasts with our scheme, where measurements of at most weight 2 are needed.

FBQC is more general than incPBC in that the Bell fusions used to grow the resource state are assumed to be probabilistic. That is, the desired Bell measurement $(XZ, ZX)$ is performed with probability $1-p_{\text{fail}}$ and it fails with probability $p_{\text{fail}}$, carrying out the separable single-qubit measurements $ZI$ and $IZ$. In contrast, our scheme assumes the measurements always succeed.

Finally, we note that, so far, all operations mentioned concerning FBQC are stabilizer measurements. Therefore, it remains to explain how universality can be achieved therein. The authors provide three different options: (i) applying modified fusion operations (akin to the works by Nielsen~\cite{Nielsen2003} and Leung~\cite{Leung2004}), (ii) making single-qubit non-stabilizer measurements (as in~\cite{Perdrix2005}), or (iii) replacing the resource state with a suitable magic resource state. The latter option is close in spirit to incPBC.

\section{Discussion and outlook}\label{sec: Conclusions}

PBC is substantially different from other methods existing in the literature for saving quantum resources. Compilation techniques often apply and lead to non-adaptive quantum circuits and strategies for resource optimization for adaptive quantum computations are somewhat lacking. PBC addresses this on its own and our work drives this even further, presenting a wide set of results that collectively improve the practical feasibility of this computational model. 

In Sec.~\ref{sec: Improvements I - connection to 1WC}, we formulated three theorems that guarantee distinct non-trivial upper bounds for the average weight of the Pauli measurements in a PBC (a measure directly related to the number of \textsc{cnot}s of the PBC-compiled quantum circuits) and the depth of the PBC (defined as the number of measurement layers). We complemented these more formal results by providing numerical simulations of random quantum circuits. The results indicate that while the theorems promise ``only'' new upper bounds, the pre-compilation technique underlying them has an impact that goes beyond that, leading to PBCs that actually have smaller average weights.

The greedy algorithm proposed in Sec.~\ref{sec: Greedy algorithm} further improves the natural resource savings achieved by the PBC model by providing substantial reductions to the average weight of the Pauli measurements. Importantly, this algorithm provides the option of ``distributing the hardness'' of the computation as one sees fit. That is, the overhead incurred by the greedy algorithm is entirely classical. Hence, one can push the classical machine by increasing the order of the greedy algorithm if one has as a priority reducing the demands on the quantum hardware. The suitable choice will depend on the (classical and quantum) resources available to the user. 

In Sec.~\ref{sec: incPBC}, we defined a new version of PBC, which we dubbed incPBC, that uses only constant-weight Pauli measurements at the expense of (i) allowing measurement incompatibility and (ii) utilizing a number of measurements greater than that used by standard PBC. We also commented on similarities and differences between this model and interesting work done on fusion-based quantum computation.

Put together, the techniques and contributions presented in this work significantly improve the state of the art of the PBC model of quantum computation, making it more amenable to practical implementation. A question that is left open is whether PBC can be formulated with constant weights while retaining measurement compatibility and a maximum number of measurements equal to that of qubits. As a partial answer to this question, we learned that weights of at most three are insufficient for universal quantum computation within the PBC framework whenever $d_{\text{PBC}} \leq d_{\text{1W}}$.

Another interesting line of research is inspired by the greedy algorithm. As expressed in Algorithm~\ref{alg: Greedy algorithm}, the greedy algorithm attempts to find the lowest-weight Pauli operator at a given step. Alternatively, a different optimization criterion can be chosen. For instance, in certain quantum hardware, a gate might exist that is noisier than the \textsc{cnot} gate, so optimizing the PBC sequence to reduce, for instance, the number of $Z$ operators might be more beneficial. One can also think of more sophisticated algorithms that, rather than trying to find the best solution at each step (within the allowed number of tests), try to optimize things globally. In doing this, the algorithm might avoid reducing the weight at one specific step to reap a better reward at later stages of the computation. This might be accomplished by an algorithm with a global (rather than local) reward system, such as seen, for instance, in reinforcement learning algorithms.

We conclude by remarking that the greedy algorithm is looking for lower-weight representations of the generators of an abelian subgroup of the Pauli group. If an algorithm exists that gives concrete performance guarantees for such a task, this could be very impactful in characterizing the experimental feasibility of different algorithms within the PBC framework. The main technical difficulty lies in finding the generators sequentially (i.e. adaptively), as the full sequence of Pauli measurements is unknown \emph{a priori}.

\section*{Acknowledgements}
F.C.R.P. was supported by the Portuguese institution FCT -- Funda\c{c}\~{a}o para a Ci\^{e}ncia e a Tecnologia (Portugal) via the Ph.D. Research Scholarship 2020.07245.BD during the initial stages of this work and later funded by Ayuda Consolidación CNS2023-145392 funded by MICIU/AEI/10.13039/501100011033 and European Union NextGenerationEU/PRTR. E.F.G. acknowledges support from FCT via project CEECINST/00062/2018. This work was supported by the Digital Horizon Europe project FoQaCiA, GA no.101070558, funded by the European Union, NSERC (Canada), and UKRI (U.K.). Numerical simulations were made possible by INCD funded by FCT and FEDER under the project 01/SAICT/2016 nº 022153 and also by the Search-ON2: Revitalization of HPC infrastructure of UMinho (NORTE-07-0162-FEDER-000086), co-funded by the North Portugal Regional Operational Programme (ON.2 -- O Novo Norte), under the National Strategic Reference Framework (NSRF), through the European Regional Development Fund (ERDF). The authors would also like to thank Selman Ipek and Miriam Backens for calling attention to some parts of the manuscript lacking in clarity. Their comments led to modifications that improved the overall readability.

\appendix


\section{Proof of Theorem~\ref{theorem: improved weights}}\label{app: Proof of the weight theorem}

Here, we prove Theorem~\ref{theorem: improved weights}.

\theoremstyle{plain}\newtheorem*{theorem: improved weights}{Theorem~\ref*{theorem: improved weights} \normalfont{(Improved weights)}}
\begin{theorem: improved weights}
Consider a one-way computation to be carried out on a $t$-qubit, computation-specific graph state $\ket{\mathcal{G}}$ with a measurement pattern requiring only measurements along the $\pm\pi/4$ directions on the equator of the Bloch sphere. By taking on the processing order $\mathcal{O}_1$ defined in Eq.~\eqref{eq: ordering for the weights}, the (magic-register) weights of the $2t$ Pauli operators in the (complete) PBC procedure are upper-bounded by \{1,\,1,\,2,\,2,\dots, t-1,\,t-1,\,t,\,t\}.
\end{theorem: improved weights}
\begin{proof}
    The starting point for the proof is the adaptive Clifford circuit depicted in Fig.~\ref{fig: 1WQC-circuit after T gadget}. As stated in the formulation of the theorem, we consider the processing order $\mathcal{O}_1$ given by Eq.~\eqref{eq: ordering for the weights}; hence, we begin with the measurement of the first auxiliary qubit. Back-propagating this measurement leads to: $Z_{t+1} \longrightarrow Q_1 = Z_1 Z_{t+1}.$ It is clear that this operator anti-commutes with $G_1.$ Thus, the standard PBC procedure informs us that its outcome $m_1$ is determined in the classical computer by making a coin toss; $Q_1$ is then dropped from the quantum circuit and replaced by the Clifford operator:
    \begin{equation*}
        V(G_1, Q_1, m_1) = \frac{G_1 + (-1)^{m_1}Q_1}{\sqrt{2}}.
    \end{equation*}
    To alleviate notation, we label all $V$ unitaries by the outcome associated with the Pauli operator that originated it so that $V(G_1, Q_1, m_1) \equiv V(m_1)$. We note that this operator entangles the first qubit of the auxiliary register with the first computational qubit and its neighbors on the graph. Finding $m_1$ decides the presence or absence of the gate $S^{m_1}$ acting on the first data qubit. This means that we are capable of back-propagating the measurement on that qubit until it reaches the beginning of the quantum circuit depicted in Fig.~\ref{fig: 1WQC-circuit after T gadget}. Doing so leads to
    \begin{equation*}
        \begin{split}
            Z_1 \xlongrightarrow{\mathcal{C}}
            \begin{cases}
                \text{if }m_1 = 0:\quad P_1^{\prime} = Y_1X_{t+1}\\
                \text{otherwise}:\quad P_1^{\prime} = X_1X_{t+1}
            \end{cases}.
        \end{split}
    \end{equation*}
    However, we need to remember that the Clifford unitary $V(m_1)$ is now present and we need to propagate $P_1^{\prime}$ through it. For a generic $V$ operator as given by Eq.~\eqref{eq: V unitaries}, an arbitrary Pauli operator $R$ is back-propagated through $V$ in the following manner:
    \begin{equation}\label{eq: explicit back-propagation through V}
        \begin{split}
            R \xlongrightarrow{V}
            \begin{cases}
                \text{if }[R,P] = [R,Q] = 0: & R \coloneqq R\\
                \text{if }[R,P] = \{R,Q\} = 0: & R \coloneqq \alpha QPR\\
                \text{if }\{R,P\} = \{R,Q\} = 0: & R \coloneqq - R
            \end{cases}\,,
        \end{split}
    \end{equation}
    with $\alpha = (-1)^{\sigma_P + \sigma_Q}.$ Applying these update rules to the present case, we obtain
    \begin{equation*}
        \begin{split}
            P_1^{\prime} \xlongrightarrow{V(m_1)}
            \begin{cases}
                \text{if }m_1 = 0:\quad P_1 = \prod_{j\in \mathcal{N}(1)} Z_j Y_{t+1}\\
                \text{otherwise}:\quad P_1 = X_1X_{t+1}
            \end{cases},
        \end{split}
    \end{equation*}
    where the product runs over all qubits neighboring the computational qubit 1. In both cases, $P_1$ is recognized as an anti-commuting Pauli whose outcome $s_i$ is determined via coin toss and that originates the Clifford unitary:
    \begin{equation*}
        V(s_1) = \frac{G_{j\in\mathcal{N}(1)} + (-1)^{s_1}P_1}{\sqrt{2}}\,,
    \end{equation*}
    which needs to be placed at the beginning of the quantum circuit. Once again, we observe that this unitary establishes a connection between the first auxiliary qubit and a subset of computational qubits. The attentive reader will note that $V(s_1)$ also depends on $m_1$, as $P_1 \equiv P_1(m_1)$, although the chosen notation does not make this dependence explicit.

    That neither $Q_1$ nor $P_1$ is an actual quantum measurement that needs to be performed in the QPU is something that we can make sense of qualitatively. Recall that, in PBC, whenever a Pauli operator is recognized as an operator to be measured in the actual quantum hardware, the measurement is reduced to its magic-register component. Now, both $Q_1$ and $P_1$ have weight 1 in the magic register and we do not expect that single-qubit Pauli measurements on a product state $\left| T \right>^{\otimes t}$ lead to extra computational power beyond that of classical computation (as entanglement is lacking from this scenario). The existence of Pauli measurements of weight 1 at the beginning of the computation would mean that the corresponding (magic) qubit serves the sole purpose of being measured in the $X$, $Y$, or $Z$ basis (and could then be removed from the computation). Together, these two statements mean that the same computation could be performed with fewer qubits, thus avoiding the need for weight-1 measurements at the start of the computation. Note that the same reasoning does not apply for measurements in the middle of the PBC procedure because entanglement starts to arise and a single-qubit measurement performed on an entangled state has a potentially non-trivial influence (suffices to think of 1WQC).
    
    Reasoning in a similar way as illustrated for $Q_1$ and $P_1$, we need to process the remaining $2t - 2$ Pauli operators. The structure of the one-way computation ensures that each $Z_{t+k}$ is pushed to the beginning of the quantum circuit in Fig.~\ref{fig: 1WQC-circuit after T gadget}, leading to
    \begin{equation*}
        Z_{t+k} \xlongrightarrow{\mathcal{\mathcal{C}}} Q_k^{\prime} = Z_{k}Z_{t+k}\,,
    \end{equation*}
    while each $Z_k$ leads to
    \begin{equation*}
        \begin{split}
            Z_k \xlongrightarrow{\mathcal{C}}
            \begin{cases}
                \text{if } f_k \oplus m_k = 1:\quad\quad\,\,\,\, P_k^{\prime} = X_kX_{t+k}\\
                \text{if } m_k=0 \wedge f_k = 0:\quad P_k^{\prime} = Y_kX_{t+k}\\
                \text{if } m_k=1 \wedge f_k = 1:\quad P_k^{\prime} = -Y_kX_{t+k}
            \end{cases},
        \end{split}
    \end{equation*}
    where $f_k \equiv f_k (s_j \in \mathcal{I}_k)$ is a Boolean function whose value depends on the set of outcomes $s_j\in \mathcal{I}_k$ (with $j<k$) that influence the measurement basis of the $k$th computational qubit.
    
    For a specific Pauli operator $Q_k^{\prime}$, a certain number of Clifford operators $V(m_i)$ and $V(s_i)$ resulting from previous gadget and computational measurements might be present at the beginning of the quantum circuit $\mathcal{C}$. This means that
    \begin{equation*}
        Q_k^{\prime} \xlongrightarrow{V} Q_k = V^{\dagger} Q_k^{\prime} V \,,
    \end{equation*}
    where $V$ encompasses all Clifford unitaries introduced by the processing of previous measurements. The structure of these unitaries guarantees that $Q_k$ cannot have a non-trivial presence on the qubits of the auxiliary magic register with $i>k$ (even if it can be non-trivial in all qubits of the stabilizer register). On the other hand, potential $V$ unitaries added by the processing of previous Pauli operators may lead to non-trivial contributions in auxiliary qubits labeled $i\leq k.$ Thus, this Pauli measurement can take the following form (on the $t$-qubit magic register): $Q_k = R_k \otimes I^{\otimes t - k},$ with $R_k$ denoting a Pauli operator on $k$ qubits with weight $1\leq w\leq k$. An identical reasoning applies to any $P_k$.
    
    This guarantees that the $2t$ Pauli operators processed in this way have maximum weights (in the magic register) given by $\{1,1,2,2,3,3,\dots,t,t\},$ as stated in the theorem.
    
    As a concluding remark, we remind the reader that not all of these Pauli operators will be measured. Since there are at most $t$ independent and pairwise commuting Pauli operators on $t$ qubits, the maximum number of quantum measurements is still $t$, as explained in the main text.
\end{proof}

\section{Proofs of Theorems~\ref{theorem: improved depth} and \ref{theorem: improved depth and weights}}\label{app: Proof of depth and depth-weight trade-off}

In Sec.~\ref{subsec: Improved depth}, we demonstrated how the PBC associated with any one-way computation carried out on a graph state $\ket{\mathcal{G}}$ and with an associated single-layer measurement pattern is also single-layered, that is, all of the $r\leq t$ Pauli measurements can be performed simultaneously. In this appendix, we consider the more general situation where the underlying one-way computation has $d_{\mathrm{1W}}>1$. There are two different ways to approach this scenario, which lead to Theorems~\ref{theorem: improved depth} and \ref{theorem: improved depth and weights}. The proof of the former consists of a straightforward generalization of the proof of Lemma~\ref{lemma: single-layer result}. 
\newtheorem*{theorem: improved depth}{Theorem~\ref*{theorem: improved depth} \normalfont{(Improved depth)}}
\begin{theorem: improved depth}
Consider a one-way computation to be carried out on a $t$-qubit, computation-specific graph state $\ket{\mathcal{G}}$ with a measurement pattern requiring only measurements along the $\pm\pi/4$ directions on the equator of the Bloch sphere. By taking up the processing order $\mathcal{O}_2$ in Eq.~\eqref{eq: ordering for another depth result -> GMs -> CMs}, the depth of PBC coincides with the depth of the corresponding one-way quantum computation, $d_{\mathrm{1W}}$.
\end{theorem: improved depth}
\begin{proof}
    Assume that we take the processing order $\mathcal{O}_2$, given by Eq.~\eqref{eq: ordering for another depth result -> GMs -> CMs}. That is, all of the gadget measurements are dealt with first, followed by the layered propagation of the computational measurements. 

    From Fig.~\ref{fig: 1WQC-circuit after T gadget}, it may look like this order cannot be realized; for instance, it may seem that we need to know the Clifford correction $(S^{\dagger})^{s_1\oplus 1},$ before propagating the second gadget measurement. In reality, that is not the case, since $Z_{t+2}$ is transformed into $Z_2Z_{t+2}$ via back-propagation through the \textsc{cnot} gate, and $Z_2Z_{t+2}$ commutes with $(S^{\dagger})^{s_1 \oplus 1}$ regardless of the value of $s_1$.
    (This is strikingly different from what happens with the back-propagation of computational measurements, which \textit{do} require the knowledge of $S^{m_i}$ and, therefore, the prior determination of the gadget outcome $m_i$.)

    By choosing the order $\mathcal{O}_2,$ we see that each Pauli $Q_i$ stemming from a gadget measurement will take the form $Q_i = Z_iZ_{t+i}$ and we will have the same sequence of Clifford unitaries, $V = \prod_j V(m_j)$, as in the single-layer case.
    
    Similarly, each Pauli $P_i$ stemming from a computational measurement will take the form given in Eq.~\eqref{eq: Final operators to be measured in single-layer}. The only difference to the single-layer case is that now these $P_i$ measurements are grouped into layers; operators in the same layer can be processed and measured simultaneously, but only after operators in prior layers have been measured (to fix the Clifford correction factor $(S^{\dagger})^{f_i \oplus 1}$ determined by measurements in previous layers):
    \begin{equation}\label{eq: Final operators to be measured -- best depth}
        Z_i \xlongrightarrow{\mathcal{C}V}
        \begin{cases}
                \text{if }m_i \oplus f_i = 0:\,\, P_i = R_i Y_{t+i}\\
                \text{otherwise}:\quad\quad\, P_i = G_i R_i X_{t+i}
            \end{cases},
    \end{equation}
    with $R_i = (-1)^{\sum_{a\in \mathcal{N}(i)} m_a} \left( \prod_{b\in \mathcal{N}(i)} G_b\right)\left( \prod_{c\in \mathcal{N}(i)} Z_{t+c} \right)$.
    
    This means that we end up with a PBC with $d_{\mathrm{1W}}$ layers with the Pauli operators to be measured given by Eq.~\eqref{eq: Final operators to be measured -- best depth}. Because of the block $V$, these Pauli measurements can potentially have weight $t$. However, the same observations made at the end of Sec.~\ref{subsec: Improved depth} apply.
\end{proof}

Theorem~\ref{theorem: improved depth and weights} provides a result wherein the depth of the PBC may be lower than $t$ while, at the same time, establishing better weight upper bounds for the Pauli measurements than the trivial value of $t$. In this sense, this result supplies us with an intermediate approach between Theorems~\ref{theorem: improved weights} and \ref{theorem: improved depth}.
\newtheorem*{theorem: improved depth and weights}{Theorem~\ref*{theorem: improved depth and weights} \normalfont{(Weight-depth trade-off)}}
\begin{theorem: improved depth and weights}
Consider a one-way computation with logical depth $d_{\mathrm{1W}}$ and layering so that the number of computational qubits in layer $\ell_i$ is $\kappa_i$: $\sum_{i=1}^{d_{\mathrm{1W}}} \kappa_i = t.$ The computation is to be carried out on a $t$-qubit, computation-specific graph state $\ket{\mathcal{G}}$ with a measurement pattern requiring only measurements along the $\pm\pi/4$ directions on the equator of the Bloch sphere.
By back-propagating the measurements following $\mathcal{O}_3$ in Eq.~\eqref{eq: ordering for the depth}, the depth of the corresponding PBC is upper-bounded by $\min \{2d_{\mathrm{1W}} - 1,t\}.$ Moreover, the weight of the $2\kappa_i$ Pauli operators stemming from $\mathrm{GM}_{\ell_i}$ and $\mathrm{CM}_{\ell_i}$ measurements is upper bounded by $\sum_{j=1}^{i} \kappa_j\,.$ 
\end{theorem: improved depth and weights}
\begin{proof}
    We approach the back-propagation of measurements following the order $\mathcal{O}_3$ in Eq.~\eqref{eq: ordering for the depth}, rather than $\mathcal{O}_1$ or $\mathcal{O}_2$ which were used to prove, respectively, Theorems~\ref{theorem: improved weights} and~\ref{theorem: improved depth}. 

    Again, Lemma~\ref{lemma: single-layer result} and Appendix~\ref{app: proof for of single-layer Paulis} provide us with important insights that we can use for this proof. It is clear that the $\kappa_1$ Pauli operators stemming from $\mathrm{GM}_{\ell_1}$ will be back-propagated through the circuit to give: $Q_i = Z_iZ_{t+i}$, each of which anti-commutes with $G_i$. From the proof of Lemma~\ref{lemma: single-layer result}, we know that these operators can all be processed simultaneously in the classical computer adding to the beginning of the circuit $\kappa_1$ unitaries $V(m_i)$ of the form given in Eq.~\eqref{eq: V(mi) single-layer}. Here, we denote by $V$ the unitary comprised of all of these: $V=\prod_{i=1}^{\kappa_1} V(m_i)\,.$ All of the Clifford unitaries $V(m_i)$ commute, so that they can be added in any order. So far, things completely resemble the single-layer scenario described in full in the main text.

    Next comes the back-propagation of the $\kappa_1$ measurements associated with the computational qubits of the first layer. After being propagated through the adaptive Clifford circuit and the $\kappa_1$ $V(m_i)$ unitaries introduced in the previous layer, these will take the form:
    \begin{equation}\label{eq: Pi's in the multi-layer}
        Z_i \xlongrightarrow{\mathcal{C}V}
        \begin{cases}
                \text{if }m_i = 0:\,\, P_i = R_i \left( \prod_{c\in \mathcal{N}(i) \setminus \mathcal{A}} Z_c \right) Y_{t+i}\\
                \text{otherwise}:\,\, P_i = R_i \left( \prod_{c\in \mathcal{N}(i) \cap \mathcal{A}} Z_c \right) X_i X_{t+i}
            \end{cases}\hspace{-0.2cm},
    \end{equation}
    with $R_i = (-1)^{\sum_{a\in \mathcal{N}(i)\cap\mathcal{A}} m_a} \left( \prod_{b\in \mathcal{N}(i)\cap \mathcal{A}} G_b Z_{t+b} \right),$ where $\mathcal{A}$ denotes the set of indices labeling gadget measurements that have been identified as anti-commuting (i.e., which originated $V(m_i)$ unitaries). In the present computational layer, $\mathcal{A} = \{1,\dots,\,\kappa_1\};$ but in future steps, this may change as some of the gadget measurements may lead to Pauli operators that need to be measured in the quantum hardware and that, therefore, do not create a unitary $V(m_i)$.  To understand where Eq.~\eqref{eq: Pi's in the multi-layer} comes from, check Appendix~\ref{app: proof for of single-layer Paulis} and, in particular, Remark~\ref{remark: From single-layer to multi-layer} therein.

    The next step is to assess the $P_i$ operators given in Eq.~\eqref{eq: Pi's in the multi-layer}. Since no other operators have been measured, what matters is whether they commute or anti-commute with the generators of the graph state $\{G_j\}_{j=1}^t\,.$ For $m_i=0$, if $\mathcal{N}(i)\setminus \mathcal{A} \neq \emptyset$, each $P_i$ will be identified as an anti-commuting Pauli; otherwise, $P_i$ commutes with every generator of the graph state. The statement $\mathcal{N}(i)\setminus \mathcal{A} = \emptyset$ is equivalent to saying that the $i$th computational qubit does not have neighbors in upcoming layers. In the first layer, we expect that each qubit has neighbors in ensuing layers. Thus, for $m_i=0$, $P_i$ will likely be an anti-commuting Pauli operator. A similar reasoning holds for $m_i=1$. Hence, \textit{a priori}, there is nothing enforcing these Pauli operators to be identified as Pauli measurements to be performed in quantum hardware.

    Can the inclusion of $V(s_i)$ unitaries stemming from this same layer change the nature of the other $P_j$ operators within the layer? Let us suppose that $P_1$ anti-commutes with a generator $G_k$. This will lead to the inclusion of the Clifford unitary:
    \begin{equation*}
        V(s_1) = \frac{G_k + (-1)^{s_1}P_1}{\sqrt{2}}.
    \end{equation*}
    Two situations can happen. First, an upcoming $P_i$ operator ($i\neq 1$) may commute with $G_k$ in which case it is pushed through $V(s_1)$ without being altered and thus preserves its nature. Second, it may anti-commute with $G_k$ in which case it will be modified after back-propagation through $V(s_1)$ following the rule in the second line of Eq.~\eqref{eq: explicit back-propagation through V}. 

    This highlights how taking ordering $\mathcal{O}_3$ substantially complicates things. Operators in a given layer can originate $V(m_i)$ or $V(s_i)$ Clifford unitaries that may influence other operators in that same layer, potentially even changing their nature. This is considerably more involved than the single-layer case or the multi-layer scenario using the $\mathcal{O}_2$ ordering.
    
    To achieve the results stated in the theorem, we take the following approach. Suppose that all gadget measurements of an arbitrary layer $\ell_j$ have been propagated to the beginning of the quantum circuit, leading to the following sequence of operators: $\{Q_i, Q_{i+1},\, \dots,\, Q_{i+\kappa_j}\}.$ Importantly, all of these Pauli operators are compatible. We consider them in increasing order of their indices (but any other order could be used instead). Taking $Q_i$, if it is a Pauli that is recognized as independent and pairwise commuting from all previous measurements performed in the quantum hardware, we store the information about that Pauli, but do \emph{not} perform the measurement immediately. If, instead, the Pauli operator anti-commutes with a generator of the graph state or some previously performed measurement, $W$, we add its corresponding $V(m_i)=[(-1)^{\sigma_{W}}W + (-1)^{m_i}Q_i]/\sqrt{2}$ unitary to the circuit.
    
    All the remaining Pauli operators in the layer $\{Q_{i+1},\,\dots,\,Q_{i+\kappa_j}\}$ are processed exactly in the same way. Importantly, upcoming operators $Q_j$ (with $j>i$) interact with $V(m_i)$ either by being propagated without alteration (if they commute with $W$) or by transforming into $Q_j^{\prime} = (-1)^{m_i}WQ_iQ_j$. After the propagation is completed, this operator is evaluated and processed accordingly, either contributing with a new unitary $V(m_j)$ or being saved for future measurement. After the entire procedure has been completed for a given layer, we are left with a list of operators that have been recognized as Pauli measurements to be performed in the hardware. They can then be measured simultaneously in the quantum hardware, originating a single PBC layer.

    For computational measurements, the same procedure as described in the previous paragraph can be applied. This means that, in total, the PBC can have at most $2d_{\mathrm{1W}}-1$ computational layers. Understanding that the weights of these measured Paulis are those in the theorem requires understanding that the Clifford $V(m_i)$ and $V(s_i)$ unitaries are the only drivers of the weight increase. In each layer $\ell_i$, they ensure that the Pauli operators can only act non-trivially on the first $\sum_{j=1}^i \kappa_j$ qubits of the magic register. This concludes the proof of the theorem.
\end{proof}

\section{Proof of Equation~\eqref{eq: Final operators to be measured in single-layer}}\label{app: proof for of single-layer Paulis}

Here, we provide the explicit proof for the form of the Pauli operators $P_i$ presented in Eq.~\eqref{eq: Final operators to be measured in single-layer}. It is straightforward to see that when the $Z_i$ measurements on the computational qubits are pushed through the Clifford circuit $\mathcal{C}$ (after all corrections $S^{m_i}$ have been fixed), they are transformed so that
\begin{equation*}
    \begin{split}
        Z_i \xlongrightarrow{\mathcal{C}}
        \begin{cases}
            \text{if }m_i = 0:\quad P_i^{\prime} = Y_iX_{t+i}\\
            \text{otherwise}:\quad P_i^{\prime} = X_iX_{t+i}
        \end{cases}.
    \end{split}
\end{equation*}

Next, we need to understand how each of these operators is back-propagated through the Clifford unitary $V$ introduced by the Pauli operators $Q_j$ stemming from the gadget measurements. Recall that $V = \prod_{j=1}^t V(m_j)$ where each $V(m_j)$ takes the form in Eq.~\eqref{eq: V(mi) single-layer}. We note that $[V(m_j),\,V(m_k)] = 0,\,\forall j,k\,.$ This is helpful because it allows us to shuffle these unitaries around at our convenience.

Let us take the Pauli operator $P_i^{\prime},$ and see what happens as it is back-propagated through $V.$ We start by assuming that $m_i=0$ so that $P_i^{\prime} = Y_iX_{t+i}.$ In this case, the propagation of $P_i^{\prime}$ through the unitaries $V(m_j)$ in $V$ can be broken down into three different cases [recall Eq.~\eqref{eq: explicit back-propagation through V}].

\noindent\textbf{Case 1} $(j\neq i \wedge j\notin \mathcal{N}(i))$\textbf{.} This means that $[P_i^{\prime},\,G_j] = [P_i^{\prime},\,Q_j] = 0 \Longrightarrow [P_i^{\prime},\,V(m_j)] = 0\,.$ Thus, $P_i^{\prime}$ passes unchanged through all such $V(m_j)$ unitaries.

\noindent\textbf{Case 2} $(j\neq i \wedge j\in \mathcal{N}(i))$\textbf{.} In this case, $P_i^{\prime}$ still commutes with $Q_j$ but it anti-commutes with $G_j.$ This means that the back-propagation of $P_i^{\prime}$ through all unitaries $V(m_j)$ falling into this category will happen in the following manner:
\begin{equation}\label{eq: aux case 2}
\begin{split}
    P_i^{\prime} & \xlongrightarrow[j_1 \in \mathcal{N}(i)]{V(m_{j_1})} (-1)^{m_{j_1}}G_{j_1}Q_{j_1}P_i^{\prime} \\
    & \xlongrightarrow[j_2\in \mathcal{N}(i)]{V(m_{j_2})} (-1)^{m_{j_1}+m_{j_2}}G_{j_1}Q_{j_1}G_{j_2}Q_{j_2}P_i^{\prime} \\
    & \xlongrightarrow{\hspace{1cm}} \dots \\
    & \xlongrightarrow{\hspace{1cm}} P_i^{\prime\prime} = (-1)^{\sum_{a\in \mathcal{N}(i)} m_a} \prod_{b\in \mathcal{N}(i)} (G_b Q_b) P_i^{\prime}.
\end{split}
\end{equation}

\noindent\textbf{Case 3} $(j = i)$\textbf{.} Finally, we can push the $P_i^{\prime \prime}$ obtained from the previous step through $V(m_i)$ itself, which means that we have the same commutation properties and in case 2, that is, $P_i^{\prime \prime}$ commutes with $Q_i$ but it anti-commutes with $G_i.$ This leads to
\begin{widetext}
\begin{equation}\label{eq: first Pi double prime -- explicit long proof mi=0}
\begin{split}
    P_i^{\prime\prime} \xlongrightarrow{V(m_i=0)} P_i & = (-1)^{\sum_{a\in \mathcal{N}(i)} m_a} G_i Q_ i \prod_{b\in \mathcal{N}(i)} (G_b Q_b) P_i^{\prime} \\
    & = (-1)^{\sum_{a\in \mathcal{N}(i)} m_a} \left( \prod_{b\in \mathcal{N}(i)} G_b \right) G_i \left( \prod_{c\in \mathcal{N}(i)} Z_c Z_{t+c} \right) Z_i Z_{t+i} Y_i X_{t+i}\,.
\end{split}
\end{equation}
\end{widetext}
The important observation now is that $G_i = X_i \prod_{d\in \mathcal{N}(i)} Z_d$, leading to the final form for $P_i$ given by
\begin{equation}\label{eq: aux mi=0 proof}
    P_i = (-1)^{\sum_{a\in \mathcal{N}(i)} m_a} \left( \prod_{b\in \mathcal{N}(i)} G_b \right) \left( \prod_{c\in \mathcal{N}(i)} Z_{t+c} \right) Y_{t+i}\,.
\end{equation}

Next, we need to do the same for the case when $m_i=1$, which means $P_i^{\prime} = X_i X_{t+1}.$ The propagation of this operator through the sequence of unitaries $V(m_j)$ can also be split into three cases. For convenience, we now consider them in a different order. We can easily do this because, as we have seen, the operators $V(m_j)$ all commute and therefore we can shuffle them around at will.

\noindent\textbf{Case 1} $(j\neq i \wedge j\notin \mathcal{N}(i))$\textbf{.} This means that $[P_i^{\prime},\,G_j] = [P_i^{\prime},\,Q_j] = 0 \Longrightarrow [P_i^{\prime},\,V(m_j)] = 0\,.$ Thus, the operator passes unchanged through all such $V(m_j)$ unitaries.

\noindent\textbf{Case 2} $(j = i)$\textbf{.} In this case, $P_i^{\prime}$ commutes with both $G_i$ and $Q_i$ which means that again the operator remains unaffected.

\noindent\textbf{Case 3} $(j\neq i \wedge j\in \mathcal{N}(i))$\textbf{.} Here, we have the same situation as case 2 of the prior scenario, meaning that $P_i$ assumes the form given by Eq.~\eqref{eq: aux case 2}. We can re-write it in the following form:
\begin{equation}\label{eq: aux mi=1 proof}
    P_i = (-1)^{\sum_{a\in \mathcal{N}(i)} m_a} \left( \prod_{b\in \mathcal{N}(i)} G_b \right) \left( \prod_{c\in \mathcal{N}(i)} Z_{t+c} \right) G_iX_{t+i}.
\end{equation}

\begin{table*}[t]
\begin{tabular}{>{\centering\arraybackslash}p{1.3cm} >{\centering\arraybackslash}p{2.0cm} >{\centering\arraybackslash}p{2.5cm} >{\centering\arraybackslash}p{2.5cm} >{\centering\arraybackslash}p{2.5cm} >{\centering\arraybackslash}p{2.5cm} >{\centering\arraybackslash}p{2.5cm}}
\hline\hline
$T$ count & Original & $\mathsf{go}=0$ & $\mathsf{go}=1$ & $\mathsf{go}=2$ & $\mathsf{go}=3$ \\ \hline
    & 33.37 & 33.37 (0\%)     & 28.27 (-15.3\%)& 25.78 (-22.7\%) & 24.20 (-27.5\%)\\
    & 33.05 & 31.68 (-4.14\%) & 28.17 (-14.8\%)& 26.07 (-21.1\%) & 24.60 (-25.6\%)\\
60  & 31.42 & 31.42 (0\%)     & 25.98 (-17.3\%)& 24.02 (-23.6\%) & 22.43 (-28.6\%)\\
    & 28.78 & 28.78 (0\%)     & 24.42 (-15.2\%)& 21.42 (-25.6\%) & 20.02 (-30.5\%)\\
    & 32.85 & 31.85 (-3.04\%) & 28.02 (-14.7\%)& 26.40 (-19.6\%) & 24.63 (-25.0\%)\\
\\
    & 34.71 & 33.91 (-2.30\%) & 28.97 (-16.5\%)& 27.57 (-20.6\%) & 26.43 (-23.9\%)\\
    & 33.66 & 33.66 (0\%)     & 28.69 (-14.8\%)& 26.53 (-21.2\%) & 26.17 (-22.2\%)\\
70  & 33.13 & 33.13 (0\%)     & 28.93 (-12.7\%)& 26.46 (-20.1\%) & 24.81 (-25.1\%)\\
    & 30.81 & 30.81 (0\%)     & 26.54 (-13.9\%)& 24.03 (-22.0\%) & 22.67 (-26.4\%)\\
    & 32.00 & 32.00 (0\%)     & 27.06 (-15.4\%)& 25.14 (-21.4\%) & 23.49 (-26.6\%)\\
    
\\
    & 35.20 & 35.20 (0\%)     & 29.64 (-15.8\%)& 27.79 (-21.1\%) & 26.85 (-23.7\%)\\
    & 32.29 & 32.28 (-0.04\%) & 27.03 (-16.3\%)& 25.25 (-21.8\%) & 23.73 (-26.5\%)\\
80  & 33.19 & 33.19 (0\%)     & 28.71 (-13.5\%)& 26.08 (-21.4\%) & 24.44 (-26.4\%)\\
    & 33.68 & 33.68 (0\%)     & 27.64 (-17.9\%)& 27.08 (-19.6\%) & 24.51 (-27.2\%)\\
    & 33.40 & 33.40 (0\%)     & 28.96 (-13.3\%)& 26.28 (-21.3\%) & 25.39 (-24.0\%)\\
\\
    & 37.19 & 37.19 (0\%)     & 31.46 (-15.4\%)& 29.41 (-20.9\%) & 27.49 (-26.1\%)\\
    & 34.70 & 34.66 (-0.13\%) & 30.44 (-12.3\%)& 28.30 (-18.4\%) & 26.83 (-22.7\%)\\
90  & 35.14 & 35.14 (0\%)     & 30.07 (-14.4\%)& 28.46 (-19.0\%) & 27.07 (-23.0\%)\\  
    & 34.24 & 34.24 (0\%)     & 29.51 (-13.8\%)& 27.36 (-20.1\%) & 25.41 (-25.8\%)\\
    & 33.48 & 33.48 (0\%)     & 28.81 (-13.9\%)& 25.90 (-22.6\%) & 24.93 (-25.5\%)\\
\\
    & 38.41 & 37.98 (-1.12\%) & 32.72 (-14.8\%)& 29.70 (-22.7\%) & 28.14 (-26.7\%)\\
    & 34.68 & 34.68 (0\%)     & 29.72 (-14.3\%)& 27.09 (-21.9\%) & 25.94 (-25.2\%)\\
100 & 35.31 & 35.31 (0\%)     & 29.81 (-15.6\%)& 27.15 (-23.1\%) & 25.90 (-26.6\%)\\
    & 36.39 & 36.39 (0\%)     & 31.44 (-13.6\%)& 29.16 (-19.9\%) & 27.38 (-24.8\%)\\
    & 34.91 & 34.91 (0\%)     & 30.38 (-13.0\%)& 28.72 (-17.7\%) & 27.31 (-21.8\%)\\
\hline\hline
\end{tabular}
\caption{Effect of the greedy algorithm as its order is increased from 0 to 3 for five different $T$ counts ($60$, $70$, $80$, $90$, and $100$) and five randomly generated circuits for each value of $t$. The reductions achieved by the application of the greedy algorithm are clear as soon as the greedy order is greater or equal to 1. For $\mathsf{go}=0$ improvements may or may not occur.}\label{tab: Large RQCs 1 shot each}
\end{table*}

Equations~\eqref{eq: aux mi=0 proof} and \eqref{eq: aux mi=1 proof} correspond to the result presented in Eq.~\eqref{eq: Final operators to be measured in single-layer}, concluding the desired proof.

\begin{remark}[Scope of applicability of the results]\label{remark: From single-layer to multi-layer}
    Equations~\eqref{eq: aux case 2} and \eqref{eq: first Pi double prime -- explicit long proof mi=0} are more general forms of Eqs.~\eqref{eq: aux mi=1 proof} and \eqref{eq: aux mi=0 proof} respectively, and are useful when we consider a computation with multiple layers whose measurements are processed according to the ordering $\mathcal{O}_3$ given in Eq.~\eqref{eq: ordering for the depth}. In that case, Eqs.~\eqref{eq: aux case 2} and \eqref{eq: first Pi double prime -- explicit long proof mi=0} remain valid with the minimal modification that the sums and products run over the elements of the neighborhood of $i$ which have previously been identified as anti-commuting Pauli operators. This subtle new imposition has important consequences. Notably, Eqs.~\eqref{eq: aux mi=0 proof} and \eqref{eq: aux mi=1 proof} are no longer valid since they were obtained by simplifications that assume that all neighbors of the computational qubit $i$ are involved in the products. While that is verified for the multiple-layer scenario when doing the back-propagation following the ordering $\mathcal{O}_2$ in Eq.~\eqref{eq: ordering for another depth result -> GMs -> CMs}, it is not the case when the ordering $\mathcal{O}_3$ is considered. This observation was used in the proof of Theorem~\ref{theorem: improved depth and weights} outlined in Appendix~\ref{app: Proof of depth and depth-weight trade-off}, leading to Eq.~\eqref{eq: Pi's in the multi-layer}.
\end{remark}

\begin{figure}[t]
    \centering
    \includegraphics[width=0.9\columnwidth]{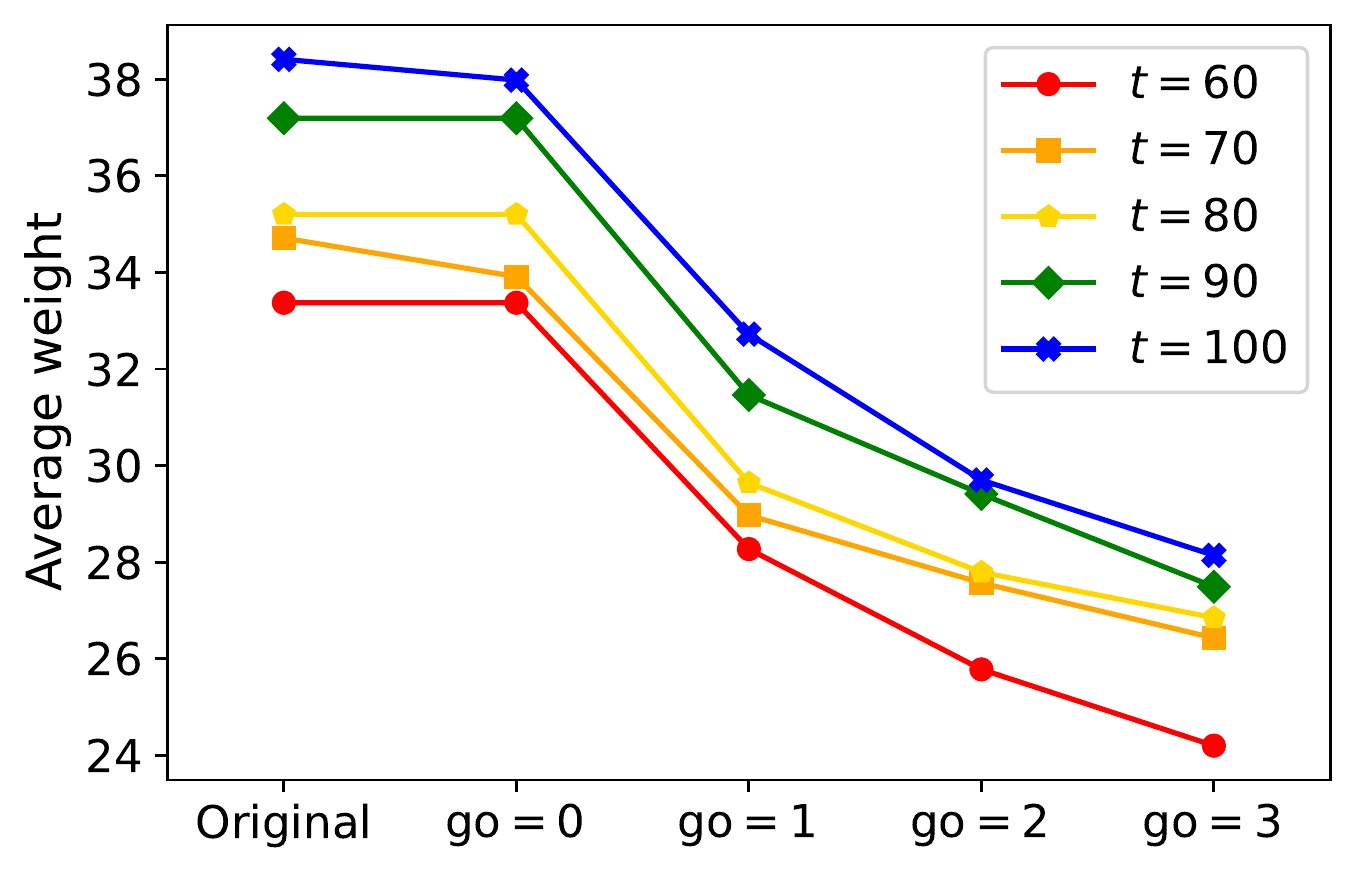}
    \caption{Evolution of the average weight for five randomly generated circuits with $T$ count $60$, $70$, $80$, $90$, and $100$ as the order of the greedy algorithm is increased from 0 until 3.}
    \label{fig: Fixed path -- greedy evolution}
\end{figure}

\begin{figure*}[t]
    \centering
    \includegraphics[width=0.6\textwidth]{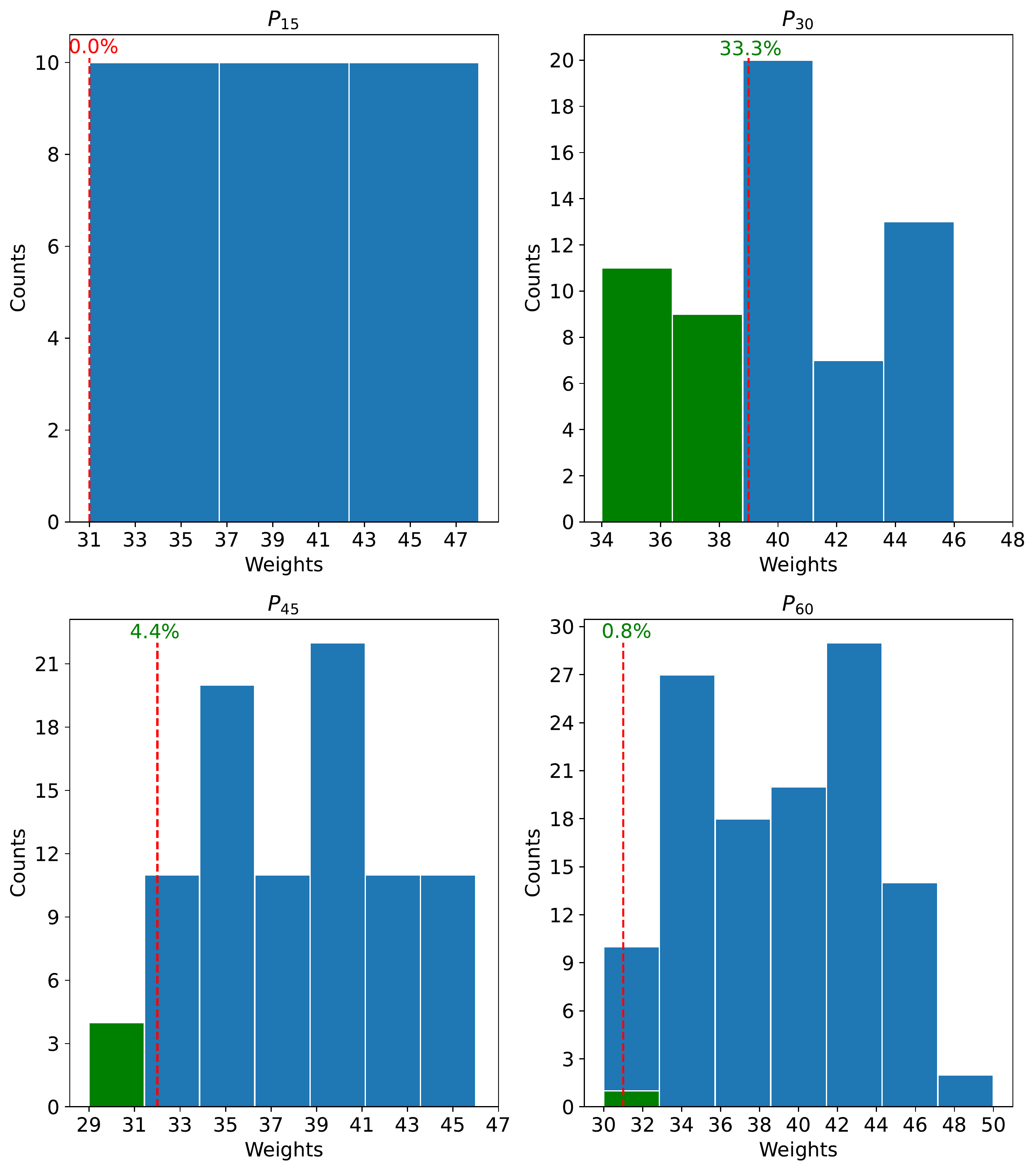}
    \caption{Distribution of the possible weights for the Pauli measurement at time-steps $r=15,$ 30, 45, and 60 of one of the random quantum circuits in Table~\ref{tab: Large RQCs 1 shot each} with $t=60$ when setting $\mathsf{go}=1$. The dashed, red line represents the weight of the Pauli measurement in the absence of the greedy algorithm and the percentage on top of it indicates the proportion of combinations (represented in green) that lead to a weight smaller than that.}
    \label{fig: Histograms -- go=1}
\end{figure*}

\section{Further results and comments concerning the greedy algorithm}\label{app: More results on greedy}

In Sec.~\ref{sec: Greedy algorithm}, we saw how Algorithm~\ref{alg: Greedy algorithm} provided important improvements to the average weight of PBCs associated with both HSCs and RQCs. The performance was analyzed both for smaller circuits using a real, Schrödinger-type simulator and also for larger circuits using a dummy simulator, where measurement outcomes are drawn from a uniform distribution rather than from the actual hard-to-simulate distribution. That the results obtained with the latter can be used to extract conclusions was demonstrated numerically in our prior work~\cite{PeresGa2023}, but also verified concretely for the results with the greedy algorithm with different orders by simulating the smaller circuits with the dummy simulator and verifying that the results obtained were statistically equivalent to those obtained with the real, Schrödinger-type simulator. 

The results presented in the main text were obtained as follows. For the smaller circuits, a total of 100 circuits were transformed into PBCs using 1024 shots/circuit. For the larger circuits, 50 circuits were considered instead, again using 1024 shots for each.

\subsection{Fixed-path analysis}

We now analyze the performance of the greedy algorithm in a slightly different manner. We consider RQCs with $n=49$ and $t=\{60,\, 70,\, 80,\, 90,\, 100\}$. For each of these $T$ counts, we generated five RQCs, each of which we compiled into a single PBC, corresponding to the PBC along the path where all outcomes yield zero. The motivation to fix the path is two-fold. First, it allows us to look into the impact of choosing different orders for the greedy along a fixed sequence of Pauli operators (and not just on average). Second, it allows us to study circuits with larger $T$ counts while still maintaining reasonable simulation times even for larger values of $\mathsf{go}$. The results obtained are presented in Table~\ref{tab: Large RQCs 1 shot each}; a visual depiction of the evolution of the average weight of the Pauli measurements for the different values of $\mathsf{go}$ can be seen in Fig.~\ref{fig: Fixed path -- greedy evolution} for the first circuit of each $T$ count value.

Importantly, we see that, for these larger PBCs involving more qubits and Pauli measurements, setting $\mathsf{go}=0$ often leads to no improvement whatsoever. Contrastingly, setting $\mathsf{go}=1,\,2,\text{ or }3$ leads, respectively, to improvements between 12.3\% and 17.9\%, 17.7\% and 25.6\%, and 21.8\% and 30.5\%. The results further corroborate the observations made in the main text that for RQCs and a fixed value of $\mathsf{go}$, increasing $t$ tends to lead to smaller improvements by Algorithm~\ref{alg: Greedy algorithm}. Nevertheless, the improvements are still substantial, even for these larger values of $t$.

\subsection{Early stopping}

\begin{figure*}
    \centering
    \includegraphics[width=0.6\textwidth]{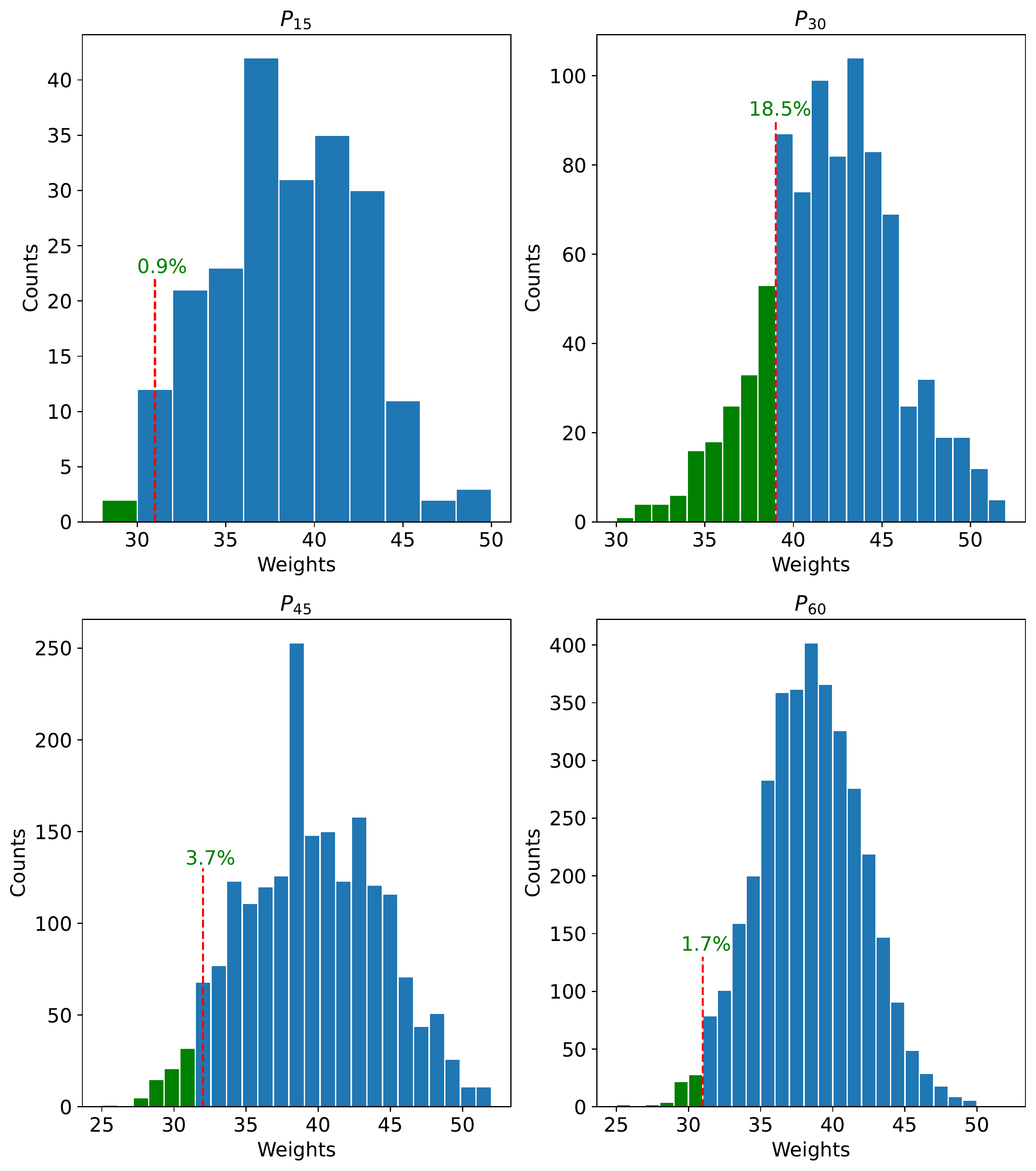}
    \caption{Distribution of the possible weights for the Pauli measurement at time-steps $r=15,$ 30, 45, and 60 of one of the random quantum circuits in Table~\ref{tab: Large RQCs 1 shot each} with $t=60$ when setting $\mathsf{go}=2$. The dashed, red line represents the weight of the Pauli measurement in the absence of the greedy algorithm and the percentage on top of it indicates the proportion of combinations (represented in green) that lead to a weight smaller than that.}
    \label{fig: Histograms -- go=2}
\end{figure*}

One may wonder whether it would be possible to reduce the overhead associated with the greedy algorithm if, rather than searching for the Pauli measurement with smallest weight among the (sub)sets $\mathcal{W} \subseteq \{1,\,\dots,\,r-1\}$ with size $r-1-a$ and $a$, with $0\leq a\leq \mathsf{go} \leq (r-1)/2$, one could stop the search after a predetermined number of attempted combinations, while still guaranteeing that, with high probability a smaller-weight (even if not the smallest-weight) Pauli measurement was found. The success of such an approach relies on understanding whether the fraction of combinations that reduce the weight with respect to that of the original Pauli measurement is small or large at each step. Taking one of the RQCs of Table~\ref{tab: Large RQCs 1 shot each} with $t=60$, we looked into the weight distribution at each step, evaluating what percentage of combinations have a weight that is smaller than the one obtained in the absence of the greedy algorithm. Figures~\ref{fig: Histograms -- go=1} and \ref{fig: Histograms -- go=2} suggest that the majority of combinations tested by the greedy algorithm increase the weight compared with that obtained naturally from the PBC procedure (in the absence of greedy). These results indicate that stopping the greedy ahead of time (say halfway through the full set of combinations) might significantly hinder the overall performance of the algorithm as presented in Table~\ref{tab: Large RQCs 1 shot each}.

\subsection{Randomizing the subsets}

Another relevant line of inquiry is whether alternative formulations of the greedy algorithm might exist that outperform Algorithm~\ref{alg: Greedy algorithm}. As we have seen, the latter works by searching for Pauli measurements with better weight among all the (sub)sets $\mathcal{W} \subseteq \{1,\,\dots,\,r-1\}$ with size $r-1-a$ and $a$, with $0\leq a\leq \mathsf{go} \leq (r-1)/2$. Rather than doing so, one could search for Pauli measurements with smaller weight by picking the size of each subset and its elements uniformly at random. Evidently, for a fair comparison with the previous formulation, the same number of subsets as associated with a fixed $\mathsf{go}$ needs to be used. This alternative procedure randomizes the search for Pauli operators with smaller weight compared with the more structured search described in Sec.~\ref{sec: Greedy algorithm}. The goal is to understand if one reaps benefits from potentially allowing a larger number of Pauli operators to be combined to yield a new Pauli measurement.

Our intuition was that the original (more structured) formulation should be beneficial in the context of more structured quantum circuits such as the HSCs. This was confirmed by running this randomized version of the greedy algorithm in the smaller HSCs ($n=10$, $t=14$), allowing at each step the same number of combinations as used with the structured version set with $\mathsf{go}=1$. The improvements obtained were roughly reduced in half by using the randomized version of the algorithm instead of the more structured approach.

For the small RQCs and a number of combinations corresponding to those used when setting $\mathsf{go}=1$, the differences in performance are not as striking. In these circuits, the more structured algorithm outperforms the randomized version by attaining reductions of the average weight that are roughly 10\% larger than the reductions obtained by the randomized version of the algorithm. This difference increases as $t$ becomes larger. For instance, for the five circuit instances with $t=100$ presented in Table~\ref{tab: Large RQCs 1 shot each}, the improvements attained by the randomized-search greedy algorithm are between 52\% and 73\% lower than those obtained by the structured search version of the algorithm. Intriguingly, for $\mathsf{go}=2$ the difference in performance is slightly diminished and, for the latter circuits, the improvements attained by the randomized-search greedy algorithm are between 27\% and 38\% smaller than those obtained with the structured search approach.

%

\end{document}